\documentclass[pdflatex,sn-mathphys-num]{sn-jnl}


\usepackage{graphicx}%
\usepackage{multirow}%
\usepackage{amsmath,amssymb,amsfonts}%
\usepackage{amsthm}%
\usepackage{mathrsfs}%
\usepackage{tcolorbox}
\usepackage[linesnumbered, ruled, vlined]{algorithm2e}
\usepackage[title]{appendix}%
\usepackage{xcolor}%
\usepackage{textcomp}%
\usepackage{manyfoot}%
\usepackage{booktabs}%
\usepackage{caption}
\usepackage{listings}%


\theoremstyle{thmstyleone}%
\newtheorem{theorem}{Theorem}
\newtheorem{proposition}[theorem]{Proposition}%

\theoremstyle{thmstyletwo}%
\newtheorem{example}{Example}%
\newtheorem{remark}{Remark}%

\theoremstyle{thmstylethree}%
\newtheorem{definition}{Definition}%
\newtheorem{lemma}{Lemma}

\raggedbottom

\begin{document}

\title[Trajectory-Driven Multi-Product Influence Maximization in Billboard Advertising]{Trajectory-Driven Multi-Product Influence Maximization in Billboard Advertising}


\author[1]{\fnm{Dildar} \sur{Ali}}\email{2021rcs2009@iitjammu.ac.in}

\author*[1]{\fnm{Suman} \sur{Banerjee}}\email{suman.banerjee@iitjammu.ac.in}
\author[2]{\fnm{Rajibul} \sur{Islam}}\email{rajibulislam1604@gmail.com}

\affil*[1]{\orgdiv{Department of Computer Science and Engineering}, \orgname{Indian Institute of Technology Jammu}, \orgaddress{\street{Jagti}, \city{Jammu}, \postcode{181221}, \state{Jammu and Kashmir}, \country{India}}}

\affil[2]{\orgdiv{Department of Computer Science and Engineering}, \orgname{Gandhi Institute for Technological Advancement}, \orgaddress{\city{Bhubaneshwar}, \postcode{752054}, \state{Odisha}, \country{India}}}

\abstract{Billboard Advertising has emerged as an effective out-of-home advertising technique, where the goal is to select a limited number of slots and play advertisement content there, with the hope that it will be observed by many people and, effectively, a significant number of them will be influenced towards the brand. Given a trajectory and a billboard database and a positive integer $k$, how can we select $k$ highly influential slots to maximize influence? In this paper, we study a variant of this problem where a commercial house wants to make a promotion of multiple products, and there is an influence demand for each product. We have studied two variants of the problem. In the first variant, our goal is to select $k$ slots such that the respective influence demand of each product is satisfied. In the other variant of the problem, we are given with $\ell$ integers $k_1,k_2, \ldots, k_{\ell}$, the goal here is to search for $\ell$ many set of slots $S_1, S_2, \ldots, S_{\ell}$ such that for all $i \in [\ell]$, $|S_{i}| \leq k_i$ and for all $i \neq j$, $S_i \cap S_j=\emptyset$ and the influence demand of each of the products gets satisfied. We model the first variant of the problem as a multi-submodular cover problem and the second variant as its generalization. To address these challenges, we propose approximation algorithms grounded in submodular optimization. To solve the common-slot variant, we formulate the problem as a multi-submodular cover problem and design a bi-criteria approximation algorithm based on the continuous greedy framework and randomized rounding. For the disjoint-slot variant, we proposed a sampling-based approximation approach along with an efficient primal–dual greedy algorithm that enforces disjointness naturally. We provide theoretical guarantees on solution quality and analyze the computational complexity of the proposed algorithms. Extensive experiments with real-world trajectory and billboard datasets highlight the effectiveness and efficiency of the proposed solution approaches.}


\keywords{Billboard Advertisement, Influence Provider, Products, Influence Maximization, Disjoint Slot Allocation}



\maketitle

\section{Introduction}\label{sec1}
Creating and maximizing influence among customers is one of the main goals for any e-commerce house. Most commercial houses spend around $7-10\%$ of total revenue on advertisements \cite{lamarResearch}. Determining how to utilize this budget effectively is a fundamental research challenge in computational advertising. Advertisements can be delivered through various channels, including social media, television, and outdoor media. Among these, out-of-home (OOH) advertising has recently gained substantial attention, particularly through digital billboards that display dynamic content, such as videos and animations. Due to their significantly higher return on investment (ROI), digital billboards have been widely adopted. Market surveys indicate that billboard advertising yields a 65\% higher ROI than other advertising methods \cite{topMediaStats}. Furthermore, outdoor advertising has experienced remarkable growth in recent years and is projected to reach a market value of $410.2$ billion by the end of 2025 \cite{tbrcBillboard2025}. In this advertising technique, a limited number of influential billboard slots are used, with the hope that displaying attractive advertising content will influence viewers. The key problem in this context is selecting a limited number of influential slots to maximize their influence. This problem has been referred to as the Influential Billboard Slot Selection Problem. Several studies have been conducted on this problem \cite{ali2022influential,ali2023efficient,ali2023influential,zhang2020towards}. 

\par In this setting, it is assumed that the billboards are owned by an advertising company, while an e-commerce house seeks to select $k$ billboard slots to maximize its advertising influence. The underlying premise is that placing advertisements in high-visibility locations accessible to diverse audiences can significantly enhance public influence, thereby increasing product sales and financial gains. However, due to budget constraints, an e-commerce house can afford only a limited number of billboard slots. Consequently, a critical problem arises: which $k \in \mathbb{Z}^{+}$ billboard slots should be selected to maximize influence? This problem has attracted considerable attention in recent research, with numerous solutions proposed \cite{wang2022data}. For example, Zhang et al. \cite{10.1145/3292500.3330829} reported that more than $50\%$ of the travelers are exposed to at least five billboards during a single trip. Previous studies in consumer behavior \cite{10.2307/1153228,f9b31366586b47d389955f209b69da27,SIERZCHULA2014183,zhou2020semi,alkheder2024experimental} further suggest that repeated exposure to advertisements reduces the probability of immediate action, while the cumulative influence of advertising content increases. These observations underscore the practical significance of studying the \textsc{Influential Billboard Slot Selection Problem}.
\begin{figure}[h!]
    \centering
    \includegraphics[width=13cm]{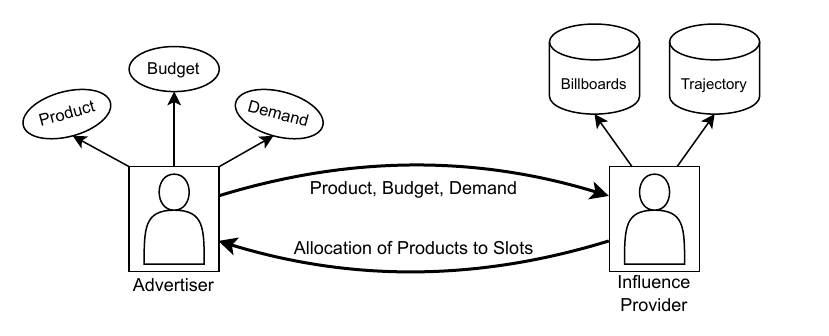}
    \vspace{-0.2 in}
    \caption{Schematic Diagram of the Product to Slot Allocation}
    \label{Fig:1}
\end{figure}
Existing market-driven approaches to billboard selection primarily rely on traffic volume \cite{7534856}. However, such simplistic methods often lead to inaccurate performance estimates and suboptimal placement strategies. In contrast, we leverage trajectory data to enable more effective selection of billboard slots. Advances in wireless communication and mobile internet technologies have made it increasingly feasible to track user and vehicle movements, leading to the availability of large-scale trajectory datasets. These datasets have been widely used to address various real-world problems, including influential zone selection \cite{gudmundsson2023practical,hung2016social}, query optimization \cite{uddin2023dwell}, route recommendation \cite{zhou2020semi,dai2015personalized,qu2019profitable,huang2019road}, prediction of driving behavior \cite{xue2019rapid}, and various learning tasks \cite{qin2023multiple,musleh2023let}. Recent studies have also utilized trajectory data to identify highly influential zones for billboard placement \cite{zhang2020towards}.

Assume the availability of a trajectory database for a city, which records the user locations along with the corresponding timestamps. Locations of interest include high-traffic areas such as shopping malls, street intersections, and metro stations, where digital billboards are typically installed. Information about these billboards, such as location, slot duration, and rental cost, is maintained in a billboard database. Influence is modeled using the well-established \emph{triggering model} from the literature on influence maximization \cite{li2018influence}. Intuitively, a user is influenced by a nearby billboard with a certain probability; however, when multiple billboards are encountered, the overall influence exhibits diminishing returns under the triggering model.

\par Most of the existing studies in billboard advertising scenarios share a common objective: (a) helping advertisers achieve maximum influence under budget constraints in a single or multi-advertiser setting \cite{zhang2020towards,ali2024effective,kempe2003maximizing}, and (b) minimizing the regret of an influence provider by effective allocation of slots to advertisers \cite{ali2023efficient,ali2024regret,ali2024minimizing,zhang2021minimizing}. It is standard practice that multiple advertisers submit a campaign proposal to the influence provider, specifying the influence demand and the corresponding payment. A more challenging, yet unexplored, problem is how to allocate slots among multiple advertisers with multiple products within a given budget to maximize total influence from the perspective of the influence provider, as shown in the schematic diagram in Figure \ref{Fig:1}. In practice, advertisers promote multiple heterogeneous products, each targeting a different customer base. Billboard advertisements typically optimize slot selection to maximize the advertiser's influence or to minimize the regret of the influence provider. However, there is a practical need for commercial houses to meet different demand requirements across multiple products simultaneously. This requires a new formulation to allocate billboard slots within a unified budget while maximizing overall influence across all products.

\paragraph{\textbf{Our Contributions.}} In practice, a commercial house often advertises multiple diverse products, each targeting different audiences. Thus, billboard slot selection must account for the needs of multi-product promotions. To the best of our knowledge, no study has addressed this problem. To bridge this gap, we study the Multi-Product Influential Billboard Slot Selection Problem in two variants in this paper. In the first variant, it asks for a subset of $k$ billboard slots that satisfies the influence demand of each product and minimizes selection cost. Here, all selected slots are common, and all will be used to promote the products. We refer to the other variant of the problem as the Disjoint Multi-Product Slot Selection Problem. In this problem, we are given a trajectory and billboard database and a set of $\ell$ integers. This problem asks to choose nonintersecting $\ell$ many subsets of slots bounded by their respective cardinality, such that the total selection cost is minimized and the influence demand of each product gets satisfied. We have developed approximation algorithms for these problems. In particular, we make the following contributions in this paper:
\begin{itemize}
    \item We extend the influential slot selection problem to the setting where multiple products are to be advertised. 
    \item We formally model both the common-slot and disjoint-slot variants of this problem using product-specific submodular influence functions.
    \item As both problems are NP-hard, we propose approximation algorithms to solve these problems. To address the scalability issues, we have developed efficient heuristic solutions.
    \item  A number of experiments have been carried out with real-world billboard and trajectory datasets to show the effectiveness and efficiency of the proposed solution approaches.
\end{itemize}
\paragraph{\textbf{Organization of the Paper}}
The rest of the paper has been organized as follows. Section \ref{Sec:RW} describes recent studies from the literature. Section \ref{Sec:BPD} describes the required preliminary concepts and defines the problem formally. The proposed solution approaches have been described in Section \ref{Sec:Algo}. Section \ref{Sec:EE} describes the experimental evaluation. Finally, Section \ref{Sec:CLD} presents the concluding remarks of our study.

\section{Related Work} \label{Sec:RW}
In this section, we will discuss the existing literature related to our work. We divide our literature study into four major categories: (a) Influential Zone selection, (b) Influential Billboard Slot Selection, (c) Trajectory Driven Influence Maximization, and (d) Tag-Based Influence Maximization. Next, we discuss existing literature on influential zone selection problems.

\subsection{Influential Zone Selection}
The selection of influential zones has been extensively studied in the context of spatial databases and facility placement. Most of the early work in this direction investigated proximity-based objectives, such as identifying locations that maximize the number of nearby trajectories or users. For instance, the MaxBRkNN problem studied by Zhou et al. \cite{5767892} focuses on finding locations with the largest number of trajectory-based nearest neighbors. These approaches typically assume static user positions, which limits their applicability to dynamic mobility settings. The concept of location-aware influence maximization formulations was motivated by classical influence maximization problems in social networks~\cite{10.1145/2588555.2588561,7534856}, where the goal is to maximize the total influence spread. Although both settings aim to maximize influence, the propagation mechanisms differ fundamentally. Social influence relies on peer-to-peer diffusion, whereas spatial influence occurs only when users enter the effective range of a location, such as a billboard. Several studies have incorporated user mobility into facility location and influence modeling. Probabilistic influence models for moving objects have been explored to capture uncertainty in user trajectories and exposure~\cite{wang2016pinocchio}. Competitive variants also consider the presence of existing facilities and model influence as relative quantities~\cite {liu2021top}. Alternative formulations define influence indirectly through user convenience, measured as additional travel distance, and aim to optimize either worst-case or aggregate inconvenience~\cite{mitra2019tips}. More recent work has emphasized efficiency through spatial pruning and geometric reasoning. Partition-based and region-intersection techniques have been proposed to accelerate influence evaluation over large spatial domains~\cite{10.14778/1687627.1687754,10.5555/1083592.1083701}. These studies collectively highlight the importance of spatial structure, mobility awareness, and computational scalability in selecting influential zones.

\subsection{Influential Billboard Slot Selection}
The influential billboard slot selection problem represents a specialized instance of spatial influence maximization, where decisions are constrained by both cost and budget. Most existing work focuses on advertisers' influence maximization. Existing work primarily focuses on selecting a subset of billboard locations that maximizes trajectory-based influence under a fixed budget~\cite{10.1145/3350488}. The problem is closely related to the budgeted maximum coverage problem, where each candidate incurs a cost and contributes to the overall coverage~\cite{khuller1999budgeted}. In more recent work, the billboard selection problem has been formulated as a discrete submodular optimization task, enabling pruning-based strategies with approximation guarantees~\cite{10.1007/978-3-031-22064-7_17}. Spatial clustering techniques have further been introduced to reduce computational overhead while maintaining influence quality~\cite{ali2023influential}. In a billboard advertisement, Ali et al. address the influential slot selection problem with a pruned submodularity graph-based solution approach \cite{ali2022influential,ali2023influential}, the influential slot and tag selection problem using bisubmodular maximization approach \cite{ali2024influential}, the billboard slot to tag allocation problem using bipartite graph-based solution approach \cite{ali2024effective}, multi-product influence maximization \cite{ali2025multiproduct}, balance popularity problem \cite{ali2025balanced}, fairness in slot allocation problem \cite{ali2025fairness} etc.
 
\par In billboard advertising, there is very limited literature that considers regret minimization from the perspective of the influence provider. Alsay et al. \cite{aslay2014viral} first studied the regret minimization problem in social network advertising from the perspective of an influence provider. Here, the advertiser's goal is to achieve maximum influence, and the influence provider's goal will be to minimize their total regret. Motivating this work, Zhang et al. \cite{zhang2021minimizing} first introduce the regret minimization problem in the context of billboard advertising and, to address it, propose several heuristic solution methodologies. Further, in this direction, Ali et al. address this problem by considering the zone-specific influence demand in a multi-advertiser setting and proposed several greedy-based heuristic solutions \cite{ali2023efficient,ali2024minimizing,ali2024regret}.

\subsection{Trajectory-Driven Influence Maximization}
Trajectory-driven influence maximization has gained prominence as mobility data becomes increasingly available. Rather than focusing solely on static locations, these approaches optimize the influence along movement paths. Previous studies have examined the assignment of route-level advertisements to maximize exposure among targeted users~\cite{wang2019efficiently}. Budget-aware billboard selection over trajectories has also been studied, with techniques that estimate upper bounds on influence to prune the search space~\cite{kempe2003maximizing}. Due to the effectiveness of influence maximization over trajectory databases, several researchers have examined the trajectory-driven influence maximization problem in the last few years. In this direction, Guo et al. \cite{guo2016influence} address the problem of finding the $k$ best trajectories rather than the top-$k$ most influential places. Their experimental results show how the best trajectories with an advertisement can maximize the overall influence. Zhang et al.\cite{zhang2020towards} introduce trajectory-driven most influential places for billboard placement and influence maximization. Further, they studied the problem of billboard impression counts using trajectory information from a particular city \cite{zhang2019optimizing}. Similarly, Wang et al.\cite{wang2019efficiently} studied targeted advertising and found that the outdoor advertising industry struggles to deliver the right content to the right users. To address this problem, they provide divide-and-conquer-based solution approaches. Beyond pure trajectory–billboard interactions, this work integrates additional contextual dimensions. Route capacity estimation queries leverage trajectory endpoints and movement patterns, while targeted advertising frameworks combine advertisement content, mobility transitions, and behavioral signals to improve audience matching~\cite{8604082}. These efforts reflect a broader trend toward multi-factor, mobility-aware influence modeling.

\subsection{Tag-Based Influence Maximization}
Tag-specific influence maximization has not yet been explored much in the context of billboard advertising. Nevertheless, related formulations exist in social network analysis, where influence is conditioned on semantic attributes. In particular, previous work has studied the joint selection of influential users and tags to maximize the spread of topic-based influence~\cite{ke2018finding}. Here, they identify the top-$k$ most influential users and the top-$\ell$ most influential tags using a bisubmodular maximization approach in the context of social network advertisements. Motivated by this work, Ali et al. \cite{ali2024influential} first studied the problem of influential slots and tag selection in billboard advertisements. Further, they studied the tag-specific influence maximization problem in billboard advertising and allocated slots to tags to maximize overall influence \cite{ali2024effective}. However, there is no existing literature that considers the product-specific influence maximization problem.

\section{Background and Problem Definition} \label{Sec:BPD}
In this section, we introduce the required basic concepts and formally define our problem. Initially, we state the notion of Trajectory and Billboard Databases. $\mathbb{Z}^{+}$ denotes the set of positive integers. For any $\ell \in \mathbb{Z}^{+}$, $[\ell]$ denotes the set $\{1,2, \ldots, \ell\}$.
\subsection{Trajectory and Billboard Database}
A trajectory database contains the location information of a set of people in a city across different time stamps. In the context to our problem, a trajectory database $\mathcal{D}$ contains tuples of the form $(u_i, \texttt{loc},[t_a,t_b], \mathcal{P}(u_i))$ and this signifies that the person $u_i$ is at the location $\texttt{loc}$ for the duration $t_a$ to $t_b$. Here, $\mathcal{P}(u_i)$ denotes the set of products in which user $u_i$ will be interested in. For any person $u_i$, $loc_{[t_x,t_y]}(u_i)$ denotes the locations of $u_i$ during the time interval $[t_x,t_y]$. Let $\mathcal{U}$ denote the set of people whose movement data is contained in $\mathcal{D}$. Let, $T_{1}= \underset{\tau \in \mathcal{D}}{min} \ t_a$ and $T_{2}= \underset{t \in \mathcal{D}}{max} \ t_b$ and we say that the trajectory database $\mathcal{D}$ contains the movement data from time stamp $T_1$ to $T_2$. A billboard database $\mathcal{B}$ contains the information about billboard slots. Typically, this contains the tuples of the following form $(b_{id}, s_{id}, \texttt{loc}, \texttt{duration})$ where $b_{id}$ and $s_{id}$ denote the billboard id and slot id. \texttt{loc} and \texttt{duration} signify the location of the billboard and the duration of the slot.

\subsection{Billboard Advertisement}
Let a set of $m$ billboards $B=\{b_1, b_2, \ldots, b_m\}$ be placed in different locations of a city. For simplicity, we assume that all billboards can be leased for a multiple of a fixed duration $\Delta$, called a `slot' and defined in Definition \ref{Def:Slot}.
\begin{definition} [Billboard Slot] \label{Def:Slot}
A billboard slot is defined by a tuple consisting of two entities, the billboard ID and the duration, that is, $(b_{id}, [t,t+\Delta])$. 
\end{definition}
The set of all billboard slots is denoted by $\mathcal{BS}$, i.e., $\mathcal{BS}=\{(b_i, [t,t+\Delta]): i \in \{1,2, \ldots, m\} \text{ and } t \in \{T_1, T_1+\Delta, T_1+ 2 \Delta, \ldots, T_2-\Delta \}\}$. It can be observed that $|\mathcal{BS}|=m \cdot \frac{T}{\Delta}$ where $T=T_{2}-T_{1}$. For simplicity, we assume that $T$ is perfectly divisible by $\Delta$. For any slot $s_j \in \mathcal{BS}$, $b_{s_j}$ denotes the corresponding billboard and $[t^{s}_{s_j}, t^{f}_{s_j}]$ denotes the time interval for the slot $s_j$ where $t^{f}_{s_j}=t^{s}_{s_j} + \Delta$. Now, we state the notion of the influence probability of a billboard slot in Definition \ref{Def:Inf_Prob}.
\begin{definition} [Influence Probability of a Billboard slot]  \label{Def:Inf_Prob}
Given a slot $s_j \in \mathcal{BS}$ and a person $u_i \in \mathcal{U}$, the influence probability of $s_j$ on $u_i$ is denoted by $Pr(s_j, u_i)$ and can be computed using the following conditional equation.
\[
    Pr(s_j,u_i)= 
\begin{cases}
    \frac{size(s_j)}{\underset{s_k \in \mathcal{BS}}{max} \ size(b_{s_k})},& \text{if } loc_{[t^{s}_{s_j}, t^{f}_{s_j}]}(u_i)=loc(s_j) \\
    0,              & \text{otherwise}
\end{cases}
\]
\end{definition}
Now, we define the influence of a subset of billboard slots in \ref{Def:Influence}.
\begin{definition} [Influence of Billboard Slots] \label{Def:Influence}
Given a trajectory database $\mathcal{D}$, and a subset of billboard slots $\mathcal{S} \subseteq \mathcal{BS}$, the influence of $\mathcal{S}$ can be defined as the expected number of trajectories that are influenced, which can be computed using Equation \ref{Eq:Equation}.

\begin{equation} \label{Eq:Equation}
\mathcal{I}^{'}(\mathcal{S})= \underset{u_i \in \mathcal{U}}{\sum} [1- \underset{s_j \in \mathcal{S}}{\prod} (1-Pr(s_j,u_i))]
\end{equation}
\end{definition}
Here, $Pr(b_j,u_i)$ denotes the influence probability of the billboard slot $b_j$ on the people $u_i$, and $\mathcal{I}^{'}(\mathcal{S})$ denotes the influence of the billboard slots of $\mathcal{S}$. $\mathcal{I}^{'}()$ is the influence function which maps each possible subset of billboard slots to its corresponding influence value, i.e., $\mathcal{I}^{'}: 2^{\mathcal{BS}} \longrightarrow \mathbb{R}_{0}^{+}$ where $\mathcal{I}^{'}(\emptyset)=0$. 
\par Now, it is important to observe that every product may not be relevant to every trajectory user. For effective advertisement, it is important to influence the relevant people for a particular product. Consider $\mathcal{P}= \{1,2, \ldots,p\}$ denotes the set of products. From the trajectory database, for every trajectory user, we have the information about the products for which he is relevant. For a given product, $j \in [p]$, we state the notion of Product Specific Influence for a given billboard slot in Definition \ref{Def:Product_Influence}.
\begin{definition}[Product Specific Influence] \label{Def:Product_Influence}
 Given a specific product and a subset of billboard slots $\mathcal{S} \subseteq \mathcal{BS}$, the product-specific influence of $\mathcal{S}$ for the product $j$ is denoted by $\mathcal{I}_{j}(\mathcal{S})$ and can be computed using Equation \ref{Eq:product_influence}.
\begin{equation} \label{Eq:product_influence}
    \mathcal{I}_{j}(\mathcal{S})=  \underset{u_i \in \mathcal{U}_{k}}{\sum} [1- \underset{s_j \in \mathcal{S}}{\prod} (1-Pr(s_j,u_i))]
\end{equation}
Here, $\mathcal{U}_{k}$ denotes the set of users relevant to the $k$-th product and $\mathcal{I}$ is the product specific influence function, i.e., $\mathcal{I}: 2^{\mathcal{BS}} \times \mathcal{P} \longrightarrow \mathbb{R}^{+}_{0}$. 
\end{definition}

\begin{figure}[h!]
    \centering
    \includegraphics[width=11cm]{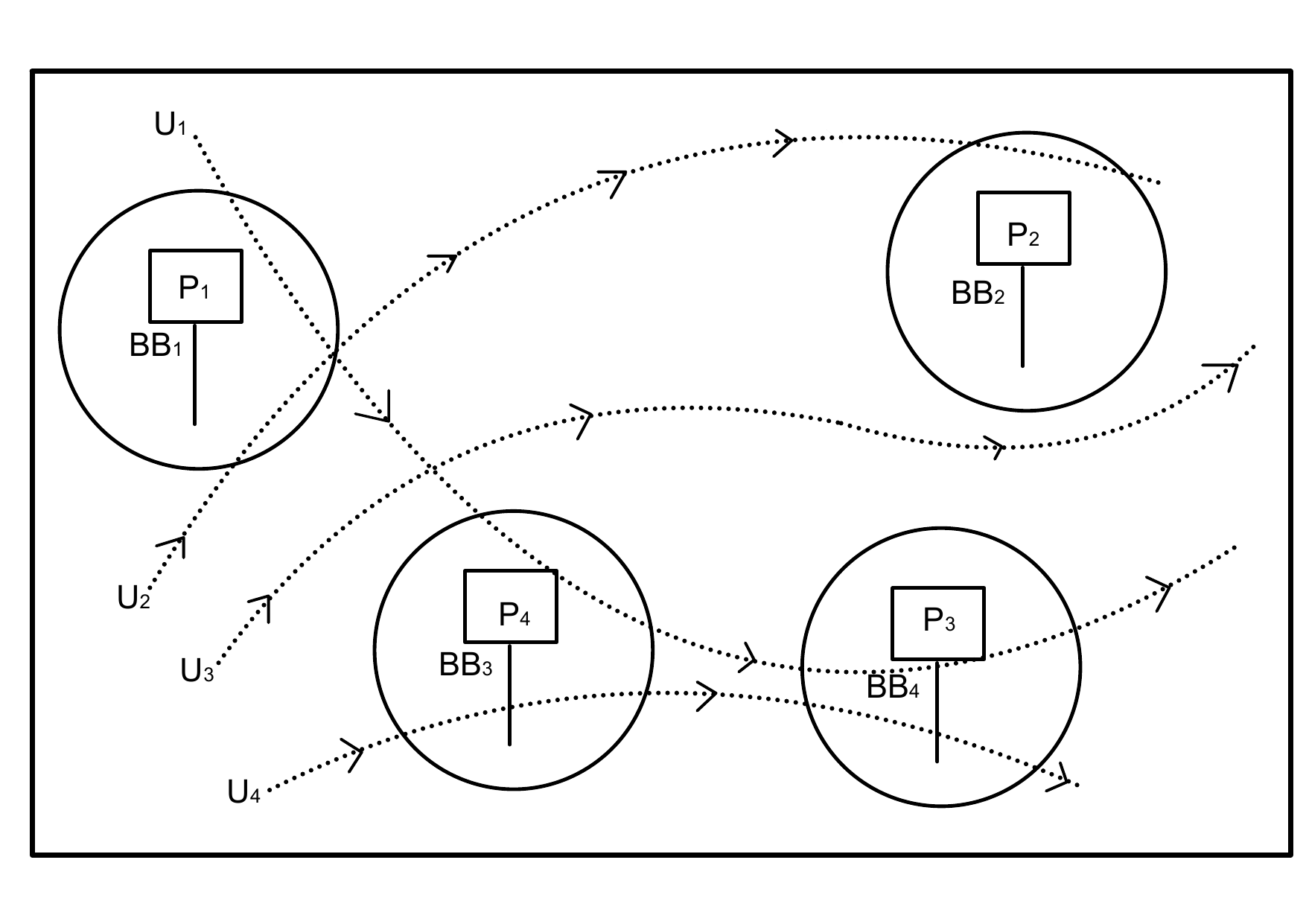}
    \vspace{-0.2 in}
    \caption{A motivating example}
    \label{Fig:2}
\end{figure}

\begin{example}
Consider four billboard slots $\mathcal{BS} = \{b_1, b_2, b_3, b_4\}$ placed at different locations in a city, and four users $\mathcal{U} = \{u_1, u_2, u_3, u_4\}$ whose trajectories pass near these billboards during the corresponding slot durations. Assume there are four products $\mathcal{P} = \{\mathcal{P}_1, \mathcal{P}_2, \mathcal{P}_3, \mathcal{P}_4\}$. Each user is interested in exactly one product, as illustrated in Figure \ref{Fig:2}. The corresponding product–user relevance sets are: $\mathcal{U}_{\mathcal{P}_1} = \{u_1\}, \mathcal{U}_{\mathcal{P}_2} = \{u_2\}, \mathcal{U}_{\mathcal{P}_3} = \{u_3\}, \mathcal{U}_{\mathcal{P}_4} = \{u_4\}$. The influence probabilities of billboard slots on users are given as follows:
\[
\begin{array}{c|l}
\text{Billboard Slot} & \text{Influenced Users (Probability)} \\ \hline
b_1 & u_1(0.6),\; u_2(0.2) \\
b_2 & u_2(0.4) \\
b_3 & u_1(0.4),\; u_4(0.3) \\
b_4 & u_1(0.4),\; u_4(0.5)
\end{array}
\]

Let the selected slot set be $\mathcal{S} = \{b_1, b_2, b_{3},b_{4}\}$. Using Definition \ref{Def:Product_Influence}, the product-specific influence for each product is
as follows. For product $\mathcal{P}_1$, the relevant user is $u_1$. The slot $b_1$, $b_{3}$ and $b_4$ influence $u_1$. Thus, $\mathcal{I}_{\mathcal{P}_1}(\mathcal{S}) = 1 - (1 - 0.6)(1 - 0.4)(1 - 0.4) = 0.856$. For product $\mathcal{P}_2$, the relevant user is $u_2$. Both $b_1$ and $b_{2}$ influences $u_2$. Hence, $\mathcal{I}_{\mathcal{P}_2}(\mathcal{S}) = 1 - (1 - 0.2)(1 - 0.4) = 0.52$. For product $\mathcal{P}_3$, the relevant user is $u_3$. No billboard influences $u_3$. Therefore, $\mathcal{I}_{\mathcal{P}_2}(\mathcal{S}) = 0$. For product $\mathcal{P}_4$, The relevant user is $u_4$. Both $b_3$ nor $b_4$ influences $u_4$. Hence, $\mathcal{I}_{\mathcal{P}_3}(\mathcal{S}) = 1 - (1 - 0.3)(1 - 0.5) = 0.65$. This example highlights that even when the same set of billboard slots is selected, the resulting influence values differ across products. The influence of a slot applies only to users relevant to that product, and the aggregation follows a diminishing-returns property.
\end{example}

\subsection{Submodular Set Functions and Continuous Extensions}
A function $f: 2^{\mathcal{N}} \rightarrow \mathbb{R}$ is submodular if for every $A \subseteq B \subseteq N$, $f(A \cup \{e\}) - f(A) \geq f(B \cup \{e\}) - f(B)$ such that  $e \in \mathcal{N}$ and  $e \notin B$. Such functions arise in many applications. The continuous extension of submodular set functions has played an important role in algorithmic aspects. The idea is to extend a discrete  set function $f: 2^{\mathcal{N}} \rightarrow \mathbb{R}$ to continuous space $[0,1]^{\mathcal{N}}$. Here, we are mainly concerned with multi-linear extensions motivated by maximization problems and refer the interested reader to Calinescu et al. \cite{calinescu2007maximizing} and Vondrak \cite{vondrak2007submodularity} for a detailed discussion.
\par The multi-linear extension is a real-valued set function $f: 2^{\mathcal{N}} \rightarrow \mathbb{R}$, denoted by $F$ and defined as follows:
\begin{equation}
F(x) = \underset{\mathcal{S}\subseteq N}{\sum} f(\mathcal{S}) \underset{i \in \mathcal{S}}{\prod} x_{i} \underset{j \notin \mathcal{S}}{\prod} (1-x_{j})
\end{equation}
where $x \in [0,1]^N$, and the random subset $\mathcal{S} \subseteq N$ is drawn by including each element $i \in N$ independently with probability $x_i$, i.e., $\Pr[i \in \mathcal{S}] = x_i \quad \text{independently for each } i$.
\par For any two vectors $p,q \in [0,1]^{N}$, we use $p \lor q$ and $p \land q$ to denote the coordinate-wise maximum and minimum, respectively of $p$ and $q$. We also make use of the notation $w_{e}(t) \leftarrow F(y(t) \lor 1_{e}) -  F(y(t))$, where $1_{e}$ is the characteristic vector of set \{e\}.
\subsection{Problem Definition}
In this section, we define our problem formally. 
\subsubsection{Common and Disjoint Multi-Product Slot Selection Problem} As mentioned previously, we have studied the Multi-Product Slot Selection Problem in two variants. First, we talk about the variant where a fixed number of slots will be selected for all the products. In this problem we are given with a trajectory and billboard database (which includes the cost function $w: \mathcal{BS} \longrightarrow \mathbb{R}^{+}$), and  $\ell$ positive integers $k_1, k_2, \ldots, k_{\ell} \in \mathbb{Z}^{+}$, the task is to find out a subset of the slots $\mathcal{S} \subseteq \mathcal{BS}$ such that the total cost gets minimized and for each product $j \in [\ell]$, their influence demand constraint gets satisfied, i.e., for all $j \in [\ell]$, $I_{j}(\mathcal{S}) \geq k_j$. From a computational point of view, this problem can be posed as follows:

\begin{tcolorbox}
\underline{\textsc{Common Multi-Product Slot Selection Problem}} \\
\textbf{Input:} Billboard $\mathcal{B}$ and Trajectory Database $\mathcal{D}$, Cost Function $w: \mathcal{BS} \longrightarrow \mathbb{R}^{+}$, $\ell$ integers $k_1, k_2, \ldots, k_{\ell} \in \mathbb{Z}^{+}$ for influence demand constraint.

\textbf{Problem:} Select a subset of the slots $\mathcal{S} \subseteq \mathcal{BS}$ such that $\underset{v \in \mathcal{S}}{\sum} w(v)$ gets minimized and the constraints $f_{j}(\mathcal{S}) \geq k_j$ are satisfied.
\end{tcolorbox}
In the Disjoint Multi-Product Slot Selection Problem, along with the trajectory and billboard database, and the cost function, we are given with two sets of $\ell$ integers $p_1, p_2, \ldots, p_{\ell} \in \mathbb{Z}^{+}$ and $\sigma_1, \sigma_2, \ldots, \sigma_{\ell} \in \mathbb{Z}^{+}$. The task here is to select $\ell$ many subsets of slots $\mathcal{S}_1, \mathcal{S}_2, \ldots, \mathcal{S}_{\ell} \subseteq \mathcal{BS}$ such that for all $j \in [\ell]$, $|\mathcal{S}_{j}| \leq k_j$, for all $i \neq j$, $\mathcal{S}_{i} \cap \mathcal{S}_{j} = \emptyset$, and for all $j \in [\ell]$, $I_{j}(\mathcal{S}_{j}) \geq \sigma_{j}$. From a computational point of view, this problem can be posed as follows: 
\begin{tcolorbox}
\underline{\textsc{Disjoint Multi-Product Slot Selection Problem}} \\
\textbf{Input:} Billboard $\mathcal{B}$ and Trajectory Database $\mathcal{D}$, Two sets of $\ell$ integers $p_1, p_2, \ldots, p_{\ell} \in \mathbb{Z}^{+}$ and $\sigma_1, \sigma_2, \ldots, \sigma_{\ell} \in \mathbb{Z}^{+}$.

\textbf{Problem:} Select $\ell$ many subsets of slots $\mathcal{S}_{1}, \mathcal{S}_{2}, \ldots, \mathcal{S}_{\ell} \subseteq \mathcal{BS}$ such that:
\begin{itemize}
    \item [1.] For all $j \in [\ell]$, $|\mathcal{S}_{j}| \leq k_j$
    \item [2.] For all $i \neq j$, $\mathcal{S}_{i} \cap \mathcal{S}_{j}=\emptyset$
    \item [3.] For all $j \in [\ell]$, $I_{j}(\mathcal{S}_{j}) \geq \sigma_{j}$
\end{itemize}
\end{tcolorbox}
As our problem is closely related to the Multi-submodular Cover Problem, first, we describe and generalize it.
\subsubsection{Multi-Submodular Cover Problem} In Multi-Submodular Cover Problem, we are given with a ground set $\mathcal{V}=\{v_1, v_2, \ldots, v_n\}$, a cost function $w:\mathcal{V} \longrightarrow \mathbb{R}^{+}$ which maps each ground set element to its corresponding cost, a set of $\ell$ submodular functions $f_1, f_2, \ldots, f_{\ell}$, and a set of $\ell$ integers $k_1, k_2, \ldots, k_{\ell} \in \mathbb{Z}^{+}$. This problem asks to choose a subset $\mathcal{S}$ of ground set elements, i.e., $\mathcal{S} \subseteq \mathcal{V}$, such that the total cost of the chosen elements gets minimized and for any $i \in [\ell]$, $f_{i}(S) \geq k_i$, i.e., the constraints on the functional values of the submodular functions get satisfied.  This is an optimization problem which can be posed as follows:
\begin{equation}
    \mathcal{S}^{*} \longleftarrow \underset{\substack{\mathcal{S} \subseteq \mathcal{V} \\ \forall j \in [\ell], f_{j}(\mathcal{S}) \geq k_j} }{argmin} \ \underset{\substack{v \in \mathcal{S} }}{\sum} w(v)
\end{equation}

Here, $\mathcal{S}^{*}$ denotes the optimal solution for this problem. From the computational point of view, this problem can be posed as follows:

\begin{tcolorbox}
\underline{\textsc{Multi-Submodular Cover Problem}} \\
\textbf{Input:} The ground set $X=\{x_1, x_2, \ldots, x_n\}$, the cost function $w: X \longrightarrow \mathbb{R}^{+}$, $\ell$ many submodular functions $f_1, f_2, \ldots, f_{\ell}$ defined on the ground set $X$, $\ell$ many positive integers $k_1, k_2, \ldots, k_{\ell} \in \mathbb{Z}^{+}$.

\textbf{Problem:} Select a subset of the ground set elements $\mathcal{S} \subseteq X$ which minimizes $\underset{x \in \mathcal{S}} {\sum} w(x)$ but satisfies $f_{j}(\mathcal{S}) \geq k_j$ for all $j \in [\ell]$.
\end{tcolorbox}
An instance of the Multi-Submodular Cover Problem is said to be $r$-sparse if each ground set element can be involved in a maximum of $ r$ objectives.  This problem has been studied by Chekuri et al. \cite{chekuri2022algorithms}, and they proposed a randomized bi-criteria approximation algorithm which runs in polynomial time. The formal result has been stated in Theorem \ref{Th:1}.
\begin{theorem} \label{Th:1}
    There exists a randomized bi-criteria approximation algorithm that for a given instance of the Multi-Submodular Cover Problem it produces a set $\mathcal{S} \subseteq X$ such that (i) For all $j \in [\ell]$, $f_{j}(\mathcal{S}) \geq (1-\frac{1}{e}-\epsilon)k_j$ and (ii) $\mathbb{E}[w(\mathcal{S})] \geq \mathcal{O}(\frac{1}{\epsilon} ln r)$.
\end{theorem}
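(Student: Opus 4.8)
The plan is to relax the discrete covering problem to its multilinear version, solve the relaxation approximately with a continuous-greedy process that drives down all $\ell$ residual demands simultaneously, and then convert the fractional solution into an integral one by independent randomized rounding amplified over a few trials. Write $\mathcal{S}^{*}$ for an optimal solution, $\mathrm{OPT}=w(\mathcal{S}^{*})$, and $\bar f_{j}(\cdot):=\min\{f_{j}(\cdot),k_{j}\}$ for the truncated objectives; each $\bar f_{j}$ remains monotone and submodular, and $\bar f_{j}(\mathcal{S}^{*})=k_{j}$. As a preliminary step, guess $B$ with $\mathrm{OPT}\le B\le(1+\epsilon)\,\mathrm{OPT}$ by sweeping the $O(\tfrac{1}{\epsilon}\log(w_{\max}/w_{\min}))$ geometrically spaced candidate costs and returning the cheapest feasible output; this only multiplies the running time by a polynomial factor.

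\emph{Fractional solution.} Work inside the down-closed knapsack polytope $P_{B}=\{x\in[0,1]^{\mathcal{V}}:\sum_{v}w(v)x_{v}\le B\}$, which contains the indicator vector $1_{\mathcal{S}^{*}}$. Run the continuous-greedy trajectory $y(t)$, $t\in[0,1]$, inside $P_{B}$: at time $t$ let $\rho_{j}(t)=\max\{k_{j}-\bar F_{j}(y(t)),0\}$, with $\bar F_{j}$ the multilinear extension of $\bar f_{j}$, and move in the feasible direction $v(t)\in P_{B}$ maximizing $\min_{j:\rho_{j}(t)>0}\langle w_{j}(t),v(t)\rangle/\rho_{j}(t)$, where $w_{j,e}(t)=\bar F_{j}(y(t)\lor 1_{e})-\bar F_{j}(y(t))$; this step is a linear program, and the marginals $w_{j,e}(t)$ are evaluated exactly for our coverage-type objectives (or by sampling with $\mathrm{poly}(n,1/\epsilon)$ overhead in general). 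Since the direction $1_{\mathcal{S}^{*}}$ is feasible and satisfies $\langle w_{j}(t),1_{\mathcal{S}^{*}}\rangle\ge\bar F_{j}(y(t)\lor 1_{\mathcal{S}^{*}})-\bar F_{j}(y(t))\ge k_{j}-\bar F_{j}(y(t))=\rho_{j}(t)$ by submodularity and monotonicity, the LP value is at least $1$, so $\langle\nabla\bar F_{j}(y(t)),v(t)\rangle\ge\rho_{j}(t)$ for every deficient $j$; this gives the differential inequality $\tfrac{d}{dt}\bar F_{j}(y(t))\ge k_{j}-\bar F_{j}(y(t))$, hence $\bar F_{j}(x)\ge(1-1/e)k_{j}$ for all $j$ at $x:=y(1)\in P_{B}$, while $\sum_{v}w(v)x_{v}\le B$ (discretizing into $\mathrm{poly}(n)$ steps costs only a further $\epsilon$ loss).

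\emph{Rounding and amplification.} Draw $T$ independent rounded sets $R_{1},\dots,R_{T}$ from $x$ (place $v$ in $R_{i}$ with probability $x_{v}$, independently across $v$ and $i$) and output $\mathcal{S}=\bigcup_{i=1}^{T}R_{i}$. For the cost, $\mathbb{E}[w(\mathcal{S})]\le\sum_{i}\mathbb{E}[w(R_{i})]=T\sum_{v}w(v)x_{v}\le TB$, so $T=\Theta(\tfrac{1}{\epsilon}\ln r)$ gives $\mathbb{E}[w(\mathcal{S})]=O(\tfrac{1}{\epsilon}\ln r)\cdot\mathrm{OPT}$, which is claim (ii), read as a bi-criteria bound: the cost is within an $O(\tfrac1\epsilon\ln r)$ factor of optimal. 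For the coverage, fix $j$: since $\bar f_{j}(R_{i})\in[0,k_{j}]$ with $\mathbb{E}[\bar f_{j}(R_{i})]=\bar F_{j}(x)\ge(1-1/e)k_{j}$, a reverse-Markov estimate on $k_{j}-\bar f_{j}(R_{i})$ gives $\Pr[\bar f_{j}(R_{i})<(1-1/e-\epsilon)k_{j}]\le\tfrac{1/e}{1/e+\epsilon}=:p<1$; by monotonicity $f_{j}(\mathcal{S})\ge\bar f_{j}(\mathcal{S})\ge(1-1/e-\epsilon)k_{j}$ unless all $T$ trials fail on $j$, an event of probability at most $p^{T}$, and since $\ln(1/p)=\Theta(\epsilon)$ the choice $T=\Theta(\tfrac1\epsilon\ln r)$ drives $p^{T}$ below $1/\mathrm{poly}(r)$.

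\emph{Main obstacle: $\ln r$ in place of $\ln\ell$.} A naive union bound over the $\ell$ failure events $A_{j}=\{\bar f_{j}(\mathcal{S})<(1-1/e-\epsilon)k_{j}\}$ would force $T=\Theta(\tfrac1\epsilon\ln\ell)$. To replace $\ell$ by the sparsity parameter $r$, I would exploit that $A_{j}$ is measurable with respect only to the coordinates of the $R_{i}$ supported on elements occurring in objective $j$, and that each ground-set element lies in at most $r$ objectives; hence the objectives sharing any fixed element form a clique of size $\le r$, and the dependency graph of the $\{A_{j}\}$ is a union of such width-$\le r$ cliques. Feeding $\Pr[A_{j}]\le p^{T}$ together with this clique structure into a Lov\'asz-Local-Lemma / cluster-expansion argument — or, alternatively, an iterative fixing procedure that repairs violated constraints locally — should show that $T=\Theta(\tfrac1\epsilon\ln r)$ already makes $\Pr[\bigcap_{j}\overline{A_{j}}]$ bounded away from $0$; re-running the construction $O(\log(1/\delta))$ times then boosts the success probability to $1-\delta$. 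The routine parts are the cost accounting and the guessing of $B$; the delicate parts are extracting the simultaneous per-constraint $(1-1/e)$ guarantee from a single continuous-greedy run and, above all, converting $r$-sparsity into the $\ln r$ (rather than $\ln\ell$) repetition bound, which is where the $r$-sparse hypothesis is used essentially — I expect this last step to require the most work.
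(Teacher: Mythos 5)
First, note that the paper itself does not prove this statement: Theorem~\ref{Th:1} is imported verbatim from Chekuri et al.\ \cite{chekuri2022algorithms} (and, as an aside, part (ii) as printed contains a typo --- the inequality should read $\mathbb{E}[w(\mathcal{S})] \leq \mathcal{O}(\frac{1}{\epsilon}\ln r)\cdot OPT$, exactly as you interpret it). Your reconstruction of the first two stages --- guessing $B$, truncating the $f_j$, running a multi-objective continuous greedy in the knapsack polytope to get $\bar F_j(x)\ge(1-1/e)k_j$ for all $j$ at fractional cost $\le B$, then taking the union of $T$ independent rounded samples with $\Pr[\bar f_j(R_i)<(1-1/e-\epsilon)k_j]\le \frac{1/e}{1/e+\epsilon}=:p$ by reverse Markov --- is essentially the intended argument and is sound.

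The gap is in your final step, which is precisely where the $r$-sparsity must be used. Your Lov\'asz-Local-Lemma route does not work: $r$ bounds the number of constraints \emph{each element} participates in, not the degree of the dependency graph on the events $A_j$. A single constraint $j$ may involve many elements, each shared with a different other constraint, so $A_j$ can be dependent on all $\ell-1$ other events even when $r=2$; the "union of cliques of size $\le r$" observation does not bound any vertex degree, and LLL then forces $p^{T}\lesssim 1/\ell$, i.e.\ $T=\Omega(\frac{1}{\epsilon}\ln\ell)$ again. The correct mechanism is not to make all constraints succeed simultaneously with positive probability, but to \emph{repair} failures deterministically and charge the repair cost in expectation. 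Concretely: after the $T=\Theta(\frac1\epsilon\ln r)$ sampling rounds, for each $j$ with $f_j(\mathcal{S})$ still below threshold, run a greedy augmentation until $(1-1/e)k_j$ is reached (this is exactly the loop in Lines 8--14 of Algorithm~\ref{Algo:2}); this makes guarantee (i) hold for \emph{all} $j$ deterministically. The repair of constraint $j$ costs $O(\frac1\epsilon)\cdot OPT_j$, where $OPT_j$ is the cost of the cheapest set meeting constraint $j$ alone, and $r$-sparsity enters through the charging inequality $\sum_{j} OPT_j \le r\cdot OPT$ (each element of the global optimum is counted at most $r$ times). Since each constraint needs repair with probability at most $p^{T}\le r^{-c}$, the expected repair cost is $r^{-c}\cdot r\cdot OPT\cdot O(\frac1\epsilon)=O(\frac{1}{\epsilon})\cdot OPT$, which folds into the $O(\frac1\epsilon\ln r)\cdot OPT$ bound of (ii). Replacing your LLL paragraph with this repair-and-charge argument closes the proof.
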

\begin{tcolorbox}
\underline{\textsc{Generalized Multi-Submodular Cover Problem}} \\
\textbf{Input:} The ground set $X=\{x_1, x_2, \ldots, x_n\}$, the cost function $w: X \longrightarrow \mathbb{R}^{+}$, $\ell$ many submodular functions $f_1, f_2, \ldots, f_{\ell}$ defined on the ground set $X$, two sets of $\ell$ positive real numbers each  $k_1, k_2, \ldots, k_{\ell} \in \mathbb{Z}^{+}$ and $\sigma, \sigma_2, \ldots, \sigma_{\ell} \in \mathbb{R}^{+}$.

\textbf{Problem:} select $\ell$ many subsets of the ground set $\mathcal{S}_{1}, \mathcal{S}_{2}, \ldots, \mathcal{S}_{\ell} \subseteq X$ such that: 
\begin{itemize}
    \item For all $i,j \in [\ell]$ and $i \neq j$, $\mathcal{S}_{1} \cap \mathcal{S}_{2} = \emptyset$
    \item   For all $i \in [\ell]$, $\underset{x \in \mathcal{S}_{i}}{\sum} \ w(x) \leq k_i$
    \item  For all $i \in [\ell]$, $f_{i}(\mathcal{S}_{i}) \geq k_i$
\end{itemize}
\end{tcolorbox}
\subsubsection{Generalized Multi-Submodular Cover Problem} In this problem we are given with a ground set of $n$ elements $X=\{x_1, x_2, \ldots, x_n\}$, a cost function $w$ that maps each ground set element to its corresponding cost; i.e., $w: X \longrightarrow \mathbb{R}^{+}$, $\ell$ many submodular functions $f_1, f_2, \ldots, f_{\ell}$ defined on the ground set $X$, and two sets of $\ell$ many real numbers $k_1, k_2, \ldots, k_{\ell} \in \mathbb{R}^{+}$ and $\sigma, \sigma_2, \ldots, \sigma_{\ell} \in \mathbb{R}^{+}$. The task is to select $\ell$ many subsets of the ground set $\mathcal{S}_{1}, \mathcal{S}_{2}, \ldots, \mathcal{S}_{\ell} \subseteq X$ such that: (i) For all $i,j \in [\ell]$ and $i \neq j$, $\mathcal{S}_{1} \cap \mathcal{S}_{2} = \emptyset$, (ii) For all $i \in [\ell]$, $\underset{x \in \mathcal{S}_{i}}{\sum} \ w(x) \leq k_i$, and (iii) For all $i \in [\ell]$, $f_{i}(\mathcal{S}_{i}) \geq k_i$.

\section{Proposed Solution Approaches} \label{Sec:Algo}
In this section, we describe the proposed solution approaches. Section \ref{Sec:Common} and  \ref{Sec:Disjoint} contain the methodologies for the Common and Disjoint Multi-Product Slot Selection Problem, respectively.

\subsection{Common Multi-Product Slot Selection} \label{Sec:Common}
We analyze the continuous greedy algorithm for general monotone, submodular functions. At each step $t$, in Line No. $3$ to $5$, the marginal gain for each $e \in \mathcal{BS}$ is computed. Next, in Line No. $6$, solve the linear program to get a direction vector, and for each slot in $\mathcal{BS}$, each coordinate of $y(t)$ is updated in Line No. $7$ to $9$. Finally, after time $T$, it returns the fractional solution $y(T)$. Parameter $T$ is the stopping criterion of Algorithm \ref{Algo:1} as mentioned in Theorem \ref{Th:CG}. It is important to note that in Line 8, we modify the direction $y$. The definition of $\delta$ implies that $\delta^{-1}$ is at least $n^{5}$ and is divisible by $T^{-1}$. This ensures that after $T \delta^{-1}$ iterations, $t$ will be exactly equal to $T$.
\begin{theorem}\label{Th:CG} \cite{feldman2011unified}
 For any normalized monotone submodular function $f: 2^{\mathcal{N}} \rightarrow \mathbb{R}^{+}$, a solvable polytope $\mathcal{P}\in [0,1]^{\mathcal{N}}$ and stopping time $T > 0$, Algorithm \ref{Algo:1} finds a point $x \in [0,1]^{\mathcal{N}}$, such that $F(x) \geq [1-e^{-T} -\mathcal{O}(1)] \cdot f(OPT)$.
\end{theorem}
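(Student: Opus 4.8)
The plan is to track the potential $\phi(t) := f(OPT) - F(y(t))$ along the trajectory produced by Algorithm \ref{Algo:1} and to show that in each of the $T\delta^{-1}$ iterations it contracts by (roughly) a multiplicative factor $(1-\delta)$, up to a lower-order additive error that is absorbed by the choice of $\delta$. Since $F(\mathbf{0}) = f(\emptyset) = 0$ we have $\phi(0) = f(OPT)$, so iterating such a recursion $T\delta^{-1}$ times will leave $\phi(T) \lesssim e^{-T} f(OPT)$, which rearranges to the claimed bound on $F(y(T))$.

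The first and most important step is the \emph{per-iteration progress inequality}: at every step with current fractional point $y(t)$, the direction $v(t)$ returned by the linear program in Line~6 satisfies $\sum_{e \in \mathcal{N}} w_e(t)\, v_e(t) \ \geq\ f(OPT) - F(y(t))$. To establish this I would combine three facts. (i) $1_{OPT} \in \mathcal{P}$, so it is a feasible point of the LP maximizing $\sum_e w_e(t) v_e$ over $\mathcal{P}$; hence $\sum_{e} w_e(t) v_e(t) \geq \sum_{e \in OPT} w_e(t)$. (ii) Submodularity of $f$ lifts to the "diminishing returns" property of the multilinear extension: writing $F(y\vee 1_S) = \mathbb{E}[f(R\cup S)]$ for the random set $R$ sampled from $y$, submodularity and monotonicity of $f$ give $f(R\cup OPT)-f(R) \le \sum_{e\in OPT}\big(f(R\cup\{e\})-f(R)\big)$ pointwise, and taking expectations yields $F(y(t)\vee 1_{OPT}) - F(y(t)) \leq \sum_{e\in OPT} w_e(t)$. (iii) Monotonicity of $f$ gives $F(y(t)\vee 1_{OPT}) = \mathbb{E}[f(R\cup OPT)] \geq f(OPT)$. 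Chaining (iii), (ii), (i) produces the claimed inequality.

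The second step converts this into a decrease of $\phi$. Because $F$ is multilinear, the coordinate-wise update of Lines~7--9 (which moves $y$ a $\delta$-fraction toward the boundary along the directions prescribed by $v(t)$, keeping $y(t)\in[0,1]^{\mathcal{N}}$ throughout) increases $F$ by at least $\delta\sum_e w_e(t) v_e(t)$ minus a second-order term; the second-order term is controlled because the mixed partial derivatives of $F$ are bounded in magnitude by the largest singleton marginal, which is at most $f(OPT)$ for monotone submodular $f$, and because $\delta$ is inverse-polynomially small ($\delta^{-1}\geq n^5$, as noted just before the theorem). Hence $F(y(t+\delta)) - F(y(t)) \geq \delta\big(f(OPT)-F(y(t))\big) - O(\delta^2)\cdot\mathrm{poly}(n)\cdot f(OPT)$, i.e.\ $\phi(t+\delta) \leq (1-\delta)\phi(t) + O(\delta^2)\,\mathrm{poly}(n)\, f(OPT)$. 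Unrolling over the $T\delta^{-1}$ steps gives $\phi(T) \leq (1-\delta)^{T\delta^{-1}} f(OPT) + T\delta^{-1}\cdot O(\delta^2)\,\mathrm{poly}(n)\, f(OPT) \leq e^{-T} f(OPT) + O(T\delta)\,\mathrm{poly}(n)\, f(OPT)$, and the last term is $o(1)\cdot f(OPT)$ by the choice of $\delta$. Rearranging yields $F(y(T)) \geq \big(1 - e^{-T} - o(1)\big) f(OPT)$.

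The part I expect to require the most care is the error bookkeeping in the second step: making the Hessian bound for $F$ precise and expressing it in terms of quantities comparable to $f(OPT)$; verifying that the discrete update of Lines~7--9 genuinely realizes a gain of at least $\delta\sum_e w_e(t)v_e(t)$ up to the stated second-order loss (this is exactly where multilinearity, and the fact that $y$ is pushed toward $\mathbf{1}$ rather than by an unconstrained additive step, are used so that $y(t)$ never leaves $[0,1]^{\mathcal{N}}$); and checking that the error accumulated over all $T\delta^{-1}$ iterations still vanishes, which is precisely why $\delta^{-1}$ must be as large as $n^5$ and divisible by $T^{-1}$. The clean idea underlying everything is the ODE $\tfrac{d}{dt}F(y(t)) \geq f(OPT) - F(y(t))$, whose solution satisfies $F(y(T)) \geq (1-e^{-T})f(OPT)$; the theorem is its rigorous discretized counterpart, and essentially all remaining work is turning $\tfrac{d}{dt}$ into $\tfrac1\delta\big(\,\cdot(t+\delta)-\cdot(t)\,\big)$ without losing more than $o(1)$.
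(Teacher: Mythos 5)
The paper does not prove this statement: Theorem~\ref{Th:CG} is imported verbatim from \cite{feldman2011unified} and used as a black box, so there is no in-paper proof to compare against. Your reconstruction is the standard (and correct) continuous-greedy analysis from that line of work: the chain $\sum_e w_e(t)v_e(t)\ \geq\ \sum_{e\in OPT}w_e(t)\ \geq\ F(y(t)\vee 1_{OPT})-F(y(t))\ \geq\ f(OPT)-F(y(t))$ via LP feasibility of $1_{OPT}$, submodularity, and monotonicity, followed by the discretized ODE $\tfrac{d}{dt}F(y(t))\geq f(OPT)-F(y(t))$ with second-order error absorbed by $\delta^{-1}\geq n^5$. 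You also correctly handle the one subtlety specific to this variant of the algorithm, namely that the update in Lines~7--9 scales the step by $(1-y_e(t))$, which is exactly compensated by the identity $w_e(t)=F(y(t)\vee 1_e)-F(y(t))=(1-y_e(t))\,\partial_e F(y(t))$, so the first-order gain is still $\delta\sum_e w_e(t)v_e(t)$. Two minor remarks: the error term you derive is $o(1)$, which confirms that the ``$\mathcal{O}(1)$'' in the theorem statement as printed is a typo (an $\mathcal{O}(1)$ loss would make the bound vacuous); and the step $F(y(t)\vee 1_{OPT})\geq f(OPT)$ uses $F(1_{OPT})=f(OPT)$ together with coordinatewise monotonicity of $F$, which in turn relies on monotonicity of $f$ --- worth stating explicitly, but not a gap.
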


\begin{algorithm}[h]
\scriptsize
\SetAlgoLined
\KwData{Trajectory Database $\mathcal{D}$, Billboard Database $\mathcal{B}$, Influence Functions $\mathcal{I}()$, Billboard Slots $\mathcal{BS}$}
\KwResult{ A solution of vector $\mathcal{Y}(T)$}
Initialize $t \leftarrow 0$, $\mathcal{Y}(0) \leftarrow 1_{\emptyset}$, $n \leftarrow |\mathcal{BS}|$, $\delta \leftarrow T(\lceil n^{5}T \rceil)^{-1}$ \;

\While{$t< T$}{
\For{$\text{each}~ e \in \mathcal{BS}$}{
$w_{e}(t) \leftarrow F(y(t) \lor 1_{e}) -  F(y(t)) $
}
$I(t) \leftarrow argmax\{x \cdot w(t) ~|~ x \in \mathcal{P}\}$\;
\For{$\text{each}~ e \in \mathcal{BS}$}{
$y_{e}(t+\delta) \leftarrow y_{e}(t) + \delta I_{e}(t) \cdot (1 - y_{e}(t))$
}
$t \leftarrow t + \delta$
}

\Return $y(T)$\;
\caption{Continuous Greedy Algorithm}
\label{Algo:1}
\end{algorithm}

\par We address the \textit{Common Multi-Product Slot Selection Problem} in Algorithm \ref{Algo:2}, where a single slot set must satisfy the influence demands of multiple products under budget. We propose a bi-criteria approximation algorithm combining continuous greedy and randomized rounding. First, in Line No. $1$, we scale each product’s influence function $\mathcal{I}_j$ such that $\mathcal{I}_j(\mathcal{BS}) = 1$, and set $k_j \leftarrow 1$, and in Line No. $2$, we define the polytope $\mathcal{P}$. In Line No. $3$, using Algorithm \ref{Algo:1}, we solve the multilinear extension over the matroid polytope to obtain a fractional solution $x \in [0,1]^{|\mathcal{BS}|}$. In the rounding step, Line No. $5$ to $7$, sample $\ell = \lceil \log_{1/(1 - \epsilon)}(r) \rceil$ random subsets from $x$ to form $\mathcal{S}$. Next, in Line No. $8$ to $14$ for any product $j$ where $\mathcal{I}_j(S)$ is below $(1 - \frac{1}{e} - 2\epsilon)k_j$, add slots greedily by marginal gain until the bound $(1 - \frac{1}{e})k_j$ is met.

\begin{algorithm}[h!]
\scriptsize
\caption{Bi-criteria Approximation Algorithm for Multi-Product Influence Maximization}
\label{Algo:2}
\KwData{ Trajectory Database $\mathcal{D}$, Billboard Slots $\mathcal{BS}$, Influence Functions $\{\mathcal{I}_j(\cdot)\}_{j=1}^h$, Cost Function $w: \mathcal{BS} \to \mathbb{R}^+$, Sparsity $r$, Approximation Parameter $\epsilon > 0$, Influence Thresholds $\{k_j\}_{j=1}^h$}
\KwResult{ A set of billboard slots $\mathcal{S} \subseteq \mathcal{BS}$ satisfying all influence constraints approximately}
\textbf{Normalize:} For each $j \in [h]$, set $\mathcal{I}_j(\mathcal{BS}) = 1$ and scale $k_j \gets 1$\;
Let $\mathcal{P} \subseteq [0,1]^{|\mathcal{BS}|}$ be the constraint polytope\;
$x \gets$ ContinuousGreedy$(\mathcal{BS}, \mathcal{P}, T=1)$\;
$\alpha \gets 1 - \epsilon$, $\ell \gets \lceil \log_{1/\alpha} r \rceil$, $S \gets \emptyset$\;
\For{$t = 1$ to $\ell$}{
    Sample $\mathcal{S}_t \subseteq \mathcal{BS}$ by including each $i \in \mathcal{BS}$ independently with probability $x(i)$\;
}
$\mathcal{S} \gets \mathcal{S} \cup \mathcal{S}_{1} \cup  \mathcal{S}_{2} \cup \ldots \cup \mathcal{S}_t$\;
\For{$j = 1$ to $h$}{
    \If{$\mathcal{I}_j(\mathcal{S}) < (1 - \frac{1}{e} - 2\epsilon) \cdot k_j$}{
        $M_j \gets \emptyset$\;
        \While{$\mathcal{I}_j(M_j) < (1 - \frac{1}{e}) \cdot k_j$}{
            $u^* \gets \arg\max_{u \in \text{BS} \setminus M_j} \left[ \mathcal{I}_j(M_j \cup \{u\}) - \mathcal{I}_j(M_j) \right]$\;
            $M_j \gets M_j \cup \{u^*\}$\;
        }
        $\mathcal{S} \gets \mathcal{S} \cup M_j$\;
    }
}
\Return $\mathcal{S}$
\end{algorithm}
\begin{theorem}\cite{chekuri2022algorithms}
Let $r$ be the maximum number of functions any slot contributes to (i.e., sparsity), and let $\epsilon > 0$. Then with high probability, the Algorithm \ref{Algo:2} returns a set $S$ such that for all $j \in [\ell]$, $I_j(S) \geq \left(1 - \frac{1}{e} - \epsilon \right) \cdot k_j$, and the expected cost satisfies $\mathbb{E}[w(S)] = O\left( \frac{1}{\epsilon} \log r \right) \cdot \textsc{OPT}$.
\end{theorem}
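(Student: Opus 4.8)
The plan is to exhibit Algorithm~\ref{Algo:2} as an instance of the Chekuri et al.\ bi-criteria scheme behind Theorem~\ref{Th:1}, run on the $r$-sparse Multi-Submodular Cover instance induced by the \textsc{Common Multi-Product Slot Selection Problem}, and then read off the two guarantees from its three phases. First I would record the reduction: the ground set is $\mathcal{BS}$, the cost is $w$, and the constraint functions are the product-specific influence functions $\mathcal{I}_1,\dots,\mathcal{I}_\ell$, each monotone and submodular because of the diminishing-returns product form of Equation~\ref{Eq:product_influence}. The rescaling in Line~1 replaces $\mathcal{I}_j$ by $\mathcal{I}_j/\mathcal{I}_j(\mathcal{BS})$ and $k_j$ by $1$; this preserves monotonicity, submodularity, and the family of feasible covers, so it is without loss of generality. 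The sparsity $r$ is the largest number of products that any single slot can influence, which is exactly the sparsity parameter of Theorem~\ref{Th:1}. Let $\textsc{OPT}$ denote the minimum cost of a feasible cover.

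Next I would analyze the fractional and rounding phases. A geometric search over a cost budget $B$ (costing only a polylogarithmic overhead) identifies a value $B \le \textsc{OPT}$ for which the scaled indicator of an optimal cover lies in the down-closed, solvable polytope $\mathcal{P} = \{ x \in [0,1]^{|\mathcal{BS}|} : \sum_{s} w(s)\, x_s \le B \}$ of Line~2. Applying Theorem~\ref{Th:CG} to the aggregated potential with stopping time $T=1$ (Line~3, via Algorithm~\ref{Algo:1}) then yields a point $x$ whose multilinear value obeys $F_j(x) \ge (1 - \tfrac1e - \epsilon)\, k_j$ for every product $j$, with $\sum_{s} w(s)\, x_s \le B \le \textsc{OPT}$. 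In the rounding step (Lines~5--7) each of the $\lceil \log_{1/(1-\epsilon)} r \rceil$ independent samples $\mathcal{S}_t$ satisfies $\mathbb{E}[\mathcal{I}_j(\mathcal{S}_t)] = F_j(x)$ by linearity of the multilinear extension, and a reverse-Markov estimate on the bounded normalized deficit $1 - \mathcal{I}_j(\mathcal{S}_t)/k_j$ shows that one sample reaches $\mathcal{I}_j(\mathcal{S}_t) \ge (1 - \tfrac1e - 2\epsilon)\, k_j$ with probability $\Omega(\epsilon)$. Since influence is nondecreasing under union, $\mathcal{S} = \bigcup_t \mathcal{S}_t$ misses this threshold for a fixed $j$ only if all $\lceil \log_{1/(1-\epsilon)} r \rceil$ samples miss it, an event of probability at most $(1-\epsilon)^{\lceil \log_{1/(1-\epsilon)} r \rceil} \le 1/r$; a union bound then gives $\mathcal{I}_j(\mathcal{S}) \ge (1 - \tfrac1e - 2\epsilon)\, k_j$ for all $j$ with high probability, while $\mathbb{E}[w(\mathcal{S})] \le \lceil \log_{1/(1-\epsilon)} r \rceil \cdot \textsc{OPT} = O(\tfrac1\epsilon \log r)\cdot\textsc{OPT}$.

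Finally I would handle the clean-up phase (Lines~8--14): for every product still below $(1 - \tfrac1e - 2\epsilon)\, k_j$ the greedy submodular-cover loop raises $\mathcal{I}_j$ to at least $(1 - \tfrac1e)\, k_j$, so the returned set satisfies $\mathcal{I}_j(\mathcal{S}) \ge (1 - \tfrac1e - \epsilon)\, k_j$ for that product and trivially for all others, giving claim~(i) after absorbing the $2\epsilon$ into a rescaled $\epsilon$. For the cost, the classical greedy submodular-cover bound charges the fix-up of product $j$ at most $O(\log(1/\epsilon))\cdot\textsc{OPT}$; weighting by the $\le 1/r$ probability that $j$ triggers the loop and summing over products keeps the expected extra cost within $O(\tfrac1\epsilon \log r)\cdot\textsc{OPT}$, establishing claim~(ii). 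Together with the reduction, this is precisely the statement of Theorem~\ref{Th:1} transported to Algorithm~\ref{Algo:2}.

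\noindent\textbf{Main obstacle.} The technical heart is the quantitative coupling between rounding and clean-up. One must drive the per-constraint failure probability below $1/r$ from only the average-type output $F_j(x) \ge (1 - \tfrac1e - \epsilon)\, k_j$ of continuous greedy --- which forces the use of the reverse-Markov / concentration estimate on the normalized deficit rather than a naive Markov bound --- and, simultaneously, argue that whenever a constraint does slip through, the residual deficit is small enough that the logarithmic overhead of greedy submodular cover, multiplied by the tiny failure probability, does not spoil the $O(\tfrac1\epsilon \log r)$ factor. Making both hold for the \emph{same} number of rounds and the \emph{same} slack parameters (the $\epsilon$-versus-$2\epsilon$ bookkeeping) is the delicate part; the rest is the routine reduction above combined with Theorems~\ref{Th:CG} and~\ref{Th:1}.
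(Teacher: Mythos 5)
You should first note that the paper does not actually prove this statement: it is imported verbatim from Chekuri et al.~\cite{chekuri2022algorithms}, so there is no in-paper proof to match against. Your reconstruction of the cited argument has the right skeleton (reduction to $r$-sparse multi-submodular cover, continuous greedy over a knapsack polytope at budget $B\le\textsc{OPT}$, $\lceil\log_{1/(1-\epsilon)}r\rceil$ rounds of independent rounding with a reverse-Markov bound driving the per-constraint failure probability to $1/r$, then a greedy repair). Two points, however, need repair before this would stand as a proof.

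The genuine gap is in the cost accounting of the clean-up phase. You charge the repair of a failed product $j$ at $O(\log(1/\epsilon))\cdot\textsc{OPT}$, multiply by the failure probability $1/r$, and \emph{sum over all $\ell$ products}; this yields $\tfrac{\ell}{r}\cdot O(\log(1/\epsilon))\cdot\textsc{OPT}$, which is unbounded whenever $\ell \gg r$ (e.g., $r=1$ with many products covered by disjoint slot sets). The sparsity parameter must enter here, not merely as the exponent of the failure probability. The correct charging bounds the repair of constraint $j$ by $O(\log(\cdot))\cdot\textsc{OPT}_j$, where $\textsc{OPT}_j$ is the cheapest cover of constraint $j$ alone, and then uses $r$-sparsity to show $\sum_{j}\textsc{OPT}_j\le r\cdot\textsc{OPT}$: each element of the global optimum is relevant to at most $r$ constraints, so its cost is counted at most $r$ times across the per-constraint optima. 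Combining $\Pr[j\text{ fails}]\le 1/r$ with $\sum_j\textsc{OPT}_j\le r\cdot\textsc{OPT}$ gives the expected repair cost $O(\log(\cdot))\cdot\textsc{OPT}$ independently of $\ell$. A secondary, lesser issue: Theorem~\ref{Th:CG} as stated is for a \emph{single} objective, and ``applying it to the aggregated potential'' does not yield the per-objective guarantee $F_j(x)\ge(1-\tfrac1e-\epsilon)k_j$ for every $j$ simultaneously (a sum of objectives can be large while one summand is small); you need the multi-objective continuous greedy of Chekuri--Vondr\'ak--Zenklusen, whose direction-finding LP certifies all $\ell$ linearized constraints at once against the feasible integral optimum. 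With those two corrections your outline matches the cited analysis.
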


\paragraph{\textbf{Complexity Analysis.}}
Let $n = |\mathcal{BS}|$ be the number of billboard slots and $\ell$ the number of products. The Continuous Greedy phase runs in $\mathcal{O}(n^2 \ell / \epsilon)$ time due to gradient computation over a polytope of size $n$ for each of $\ell$ influence functions. The rounding and union of $\ell = \mathcal{O}(\log r / \epsilon)$ random subsets take $\mathcal{O}(n \log r / \epsilon)$ time. In the worst case, the repair step examines all $n$ slots per product, taking $\mathcal{O}(n \ell)$ time. So the overall time complexity is $\mathcal{O}\left( \frac{n^2 \ell}{\epsilon} + n \ell \right)$. The additional space requirement for Algorithm \ref{Algo:2} is $\mathcal{O}(n)$ for maintaining the fractional vector and final slot set.
\subsection{Disjoint Multi-Product Slot Selection} \label{Sec:Disjoint}
In this section, we describe the proposed solution approach for the Disjoint Multi-Product Slot Selection Problem. 
\subsubsection{Randomized Algorithm for Multi-product Influence Maximization}
In this problem, we are given a budget and the influence demand for every product. The goal here is to select billboard slots for each product within the budget such that the influence demand for every product is satisfied. Also, a slot can be used for only one product. In our proposed methodology, for every possible permutation of products and slots, we perform the following task. As long as the remaining budget of the current advertiser being processed is positive and his influence demand is unsatisfied, we select, in each iteration, a slot that maximizes the marginal gain given the existing set of slots chosen for the advertiser. This process terminates once the influence demand is satisfied or the advertiser's budget is exhausted. If the advertiser's budget is exhausted, it means that, for the current permutation of advertisers and slots, the influence demand for this advertiser cannot be satisfied. Hence, this does not yield a feasible solution, so we mark it as infeasible.  This process is continued for all possible permutations of advertisers and slots. Among all the feasible solutions stored in $\mathcal{M}$, we return one with the least cost. Algorithm \ref{Algo:1} describes the entire process in pseudocode. 
\par One point to highlight here is that the number of possible permutations will be of $\mathcal{O}(\ell! \cdot n! )$. By starling's approximation, we can write it as $\mathcal{O}(\ell^{\ell} \cdot n^{n})$. For moderate values of $\ell$ and $n$ (in practice, the value of $n$ could be excessively large), the number of possible permutations could be large. If we consider all of them in the computation, then the execution time requirement will not be practically feasible. To mitigate this problem, we sample a random subset of them in this paper. Now, it is an important question what the sample size should be so that the estimation error can be bounded with high probability. This leads to the sample complexity analysis, which is presented subsequently. 
\paragraph{\textbf{Sample Complexity Analysis.}}
We can use Hoeffding's Inequality to provide a sample bound that guarantees finding a close to optimal solution with high probability. Theorem 2 describes the statement.
\begin{theorem}
    Let $X_1, \ X_2, \ \ldots, X_n$ are independent and identically distributed random variables such that for any $i \in [n]$, $a_i \leq X_i \leq b_i$. Let $\bar{X} =\frac{1}{n} \underset{i \in [n]}{\sum} \ X_i$ and $\mu=\mathbb{E}[\bar{X}]$. For any $\epsilon >0$, the following inequality holds 
    \begin{equation}
        Pr[|\bar{X}-\mu| \geq \epsilon] \leq 2 \cdot exp(-\frac{2 n^{2} \epsilon^{2}}{\underset{i \in [n]}{\sum} (b_i - a_i)^{2}}) 
    \end{equation}
\end{theorem}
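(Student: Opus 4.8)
The plan is to prove this by the classical Cram\'er--Chernoff (exponential moment) method; only independence of the $X_i$ is actually needed, with the bounded-range hypothesis $a_i \le X_i \le b_i$ doing the rest of the work. I would first bound the upper tail $\Pr[\bar X - \mu \ge \epsilon]$ and then obtain the lower tail by applying the identical argument to $-X_i$, combining the two with a union bound to produce the factor of $2$. For the upper tail, fix $s > 0$ and apply Markov's inequality to the nonnegative variable $e^{sn(\bar X - \mu)}$: since $n(\bar X - \mu) = \sum_{i\in[n]}(X_i - \mathbb{E}[X_i])$, this gives $\Pr[\bar X - \mu \ge \epsilon] \le e^{-sn\epsilon}\,\mathbb{E}\big[e^{sn(\bar X - \mu)}\big] = e^{-sn\epsilon}\prod_{i\in[n]} \mathbb{E}\big[e^{s(X_i - \mathbb{E}[X_i])}\big]$, where the factorization uses independence.

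The core technical ingredient, and the step I expect to be the main obstacle, is Hoeffding's lemma: if $Y$ satisfies $\mathbb{E}[Y] = 0$ and $Y \in [a,b]$ almost surely, then $\mathbb{E}[e^{sY}] \le \exp\big(s^2(b-a)^2/8\big)$. I would establish this from convexity of $x \mapsto e^{sx}$, which gives $e^{sY} \le \frac{b-Y}{b-a}e^{sa} + \frac{Y-a}{b-a}e^{sb}$ pointwise; taking expectations and using $\mathbb{E}[Y] = 0$ reduces the claim to showing $\ln\!\big(\tfrac{b}{b-a}e^{sa} - \tfrac{a}{b-a}e^{sb}\big) \le s^2(b-a)^2/8$. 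Writing $p = -a/(b-a) \in [0,1]$ and $u = s(b-a)$, the left side becomes $\psi(u) = -pu + \ln(1-p+pe^u)$, a smooth function with $\psi(0) = \psi'(0) = 0$, and a direct computation gives $\psi''(u) = \frac{p(1-p)e^u}{(1-p+pe^u)^2}$, which equals $q(1-q)$ for $q = pe^u/(1-p+pe^u) \in (0,1)$ and is therefore at most $\tfrac14$ by AM--GM. A second-order Taylor expansion of $\psi$ with remainder then yields $\psi(u) \le u^2/8$, which is the lemma.

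Applying Hoeffding's lemma to $Y_i = X_i - \mathbb{E}[X_i]$, whose range has length exactly $b_i - a_i$, gives $\mathbb{E}[e^{s(X_i - \mathbb{E}[X_i])}] \le \exp\big(s^2(b_i-a_i)^2/8\big)$, and substituting back into the product bound yields $\Pr[\bar X - \mu \ge \epsilon] \le \exp\!\big(-sn\epsilon + \tfrac{s^2}{8}\sum_{i\in[n]}(b_i-a_i)^2\big)$ for every $s > 0$. Finally I would optimize the exponent over $s$; the minimizer is $s^\star = 4n\epsilon / \sum_{i\in[n]}(b_i-a_i)^2$, which produces the one-sided bound $\exp\!\big(-2n^2\epsilon^2 / \sum_{i\in[n]}(b_i-a_i)^2\big)$. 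Running the symmetric argument for $\Pr[\mu - \bar X \ge \epsilon]$ and taking a union bound over the two one-sided events gives the stated two-sided inequality. Everything besides Hoeffding's lemma is routine (Markov's inequality, independence, single-variable optimization), so the bulk of the write-up effort is concentrated in the uniform estimate $\psi'' \le \tfrac14$.
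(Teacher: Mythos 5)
Your proposal is correct. Note, however, that the paper does not prove this statement at all: it is quoted verbatim as the classical Hoeffding inequality and used as a black box in the subsequent sample-complexity argument (Theorem \ref{Th:Sample}), so there is no in-paper proof to compare against. What you give is the standard Cram\'er--Chernoff derivation --- Markov's inequality applied to $e^{sn(\bar X-\mu)}$, factorization by independence, Hoeffding's lemma $\mathbb{E}[e^{sY}]\le \exp\bigl(s^{2}(b-a)^{2}/8\bigr)$ proved via convexity and the uniform bound $\psi''\le \tfrac14$, optimization at $s^{\star}=4n\epsilon/\sum_{i}(b_i-a_i)^{2}$, and a union bound for the two-sided version --- and every step, including the final exponent $-2n^{2}\epsilon^{2}/\sum_{i}(b_i-a_i)^{2}$, checks out. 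Your observation that only independence (not identical distribution) is needed is also apt, since the statement's hypothesis of i.i.d.\ variables sits awkwardly with the per-index bounds $a_i\le X_i\le b_i$; your proof resolves that tension in the right direction.
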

So for estimation, the upper and lower bound of the solution cost of each feasible solution present in the sample is required. We prove the same in Proposition \ref{Prop:1}. 

\begin{proposition}  \label{Prop:1}
For any feasible solution $\mathcal{S}=(S_1, S_2, \ldots, S_{\ell})$ contained in $\mathcal{M}$ the cost of solution will lie in between $0$ and $\underset{s \in \mathcal{BS}}{\sum} w(s)$.
\end{proposition}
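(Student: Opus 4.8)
The plan is to derive the stated bounds directly from the definition of the cost of a disjoint allocation together with nonnegativity of the cost function $w$; there is essentially no analytic content beyond bookkeeping. First I would recall that every tuple $\mathcal{S} = (S_1, S_2, \ldots, S_{\ell})$ stored in $\mathcal{M}$ is a feasible solution, so in particular its components are pairwise disjoint and each $S_j$ is a subset of $\mathcal{BS}$. Writing $U(\mathcal{S}) := \bigcup_{j=1}^{\ell} S_j$, the selection cost of $\mathcal{S}$, which we denote $c(\mathcal{S})$, satisfies
\begin{equation}
c(\mathcal{S}) \;=\; \sum_{j=1}^{\ell} \sum_{s \in S_j} w(s) \;=\; \sum_{s \in U(\mathcal{S})} w(s),
\end{equation}
where the second equality holds precisely because the $S_j$ are pairwise disjoint, so no slot contributes to the cost more than once.

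Next I would establish the two bounds separately. For the lower bound, since $w : \mathcal{BS} \longrightarrow \mathbb{R}^{+}$ assigns a nonnegative value to every slot, every summand in $c(\mathcal{S})$ is nonnegative, hence $c(\mathcal{S}) \geq 0$, with equality only in the degenerate case where all $S_j$ are empty. For the upper bound, since $U(\mathcal{S}) \subseteq \mathcal{BS}$ and all summands are nonnegative, extending the index set from $U(\mathcal{S})$ to all of $\mathcal{BS}$ can only increase the sum:
\begin{equation}
c(\mathcal{S}) \;=\; \sum_{s \in U(\mathcal{S})} w(s) \;\leq\; \sum_{s \in \mathcal{BS}} w(s).
\end{equation}
Combining the two displays gives $0 \leq c(\mathcal{S}) \leq \sum_{s \in \mathcal{BS}} w(s)$, which is the claim.

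The only step that genuinely needs attention is the collapse of the double summation into a single summation over $U(\mathcal{S})$ in the first display: this relies on disjointness of the components, which is guaranteed by membership in $\mathcal{M}$, since the procedure that populates $\mathcal{M}$ only records allocations satisfying the disjointness constraint. Everything else is immediate from $w \geq 0$ together with monotonicity of finite sums of nonnegative reals, so I do not anticipate any real obstacle. As a closing remark, this proposition is exactly what is needed to instantiate the Hoeffding bound of Theorem~2: treating the cost of each sampled feasible solution as a bounded random variable, one may take $a_i = 0$ and $b_i = \sum_{s \in \mathcal{BS}} w(s)$ for every $i$, which makes the resulting concentration bound explicit in terms of the sample size.
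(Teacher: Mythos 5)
Your proposal is correct and follows essentially the same route as the paper's own proof: both derive the lower bound from nonnegativity of $w$ and the upper bound from the fact that the union of the allocated (pairwise disjoint) sets is a subset of $\mathcal{BS}$, so the cost is a partial sum of $\sum_{s \in \mathcal{BS}} w(s)$. Your write-up is, if anything, slightly cleaner in making explicit that disjointness is what lets the double sum collapse without double-counting.
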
 
\begin{proof} For any advertiser $a_i \in \mathcal{A}$, its influence demand $\sigma_{i} \geq 0$ and for any slot $s_j \in \mathcal{BS}$, its cost $w(s_j) \geq 0$. Now, it can be observed that the lower bound holds only when for all the advertisers, their influence demand is all $0$. In that case, every advertiser will be assigned the empty set, i.e., for all $i \in [\ell]$, $S_i=\emptyset$, and hence, the solution cost will be $0$. In any other case, the cost of the solution will be strictly greater than $0$. The upper bound occurs when the allocated slots to the advertisers, i.e., $S_1, S_2, \ldots, S_{\ell}$, is a partition of $\mathcal{BS}$. This means (i) For all $i,j \in [\ell]$, $S_i \cap S_j= \emptyset$, (ii) $\underset{i \in [\ell]}{\bigcup} S_{i}= \mathcal{BS}$. In this case, the cost of the solution will be $\underset{s \in \mathcal{BS}}{\sum} \ w(s)$. In any other case, the cost of the solution will be strictly less than $\underset{s \in \mathcal{BS}}{\sum} \ w(s)$. Hence, for any solution $\mathcal{S}=(S_1, S_2, \ldots, S_{\ell})$, the cost of the solution will be in the range in between $0$ and $\underset{s \in \mathcal{BS}}{\sum} w(s)$.
\end{proof}
\par Now, we prove Theorem \ref{Th:Sample} which will prove the lower bound on the sample size for which Algorithm \ref{Algo:1} will provide a close to optimal solution with high probability.
\begin{theorem} \label{Th:Sample}
    For any $\epsilon, \delta \in (0,1)$ if the sample size is greater than equals to $\frac{ln(\frac{2}{\delta}) \cdot w(\mathcal{BS})^{2}}{2 \cdot \epsilon^{2} \cdot (W^{\mathcal{A}})^{2}}$ then the probability error in computation will be strictly less than $\epsilon$ with probability at least $(1-\delta)$. 
\end{theorem}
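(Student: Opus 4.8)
The plan is to apply Hoeffding's inequality (Theorem~2) to the i.i.d.\ sequence of solution costs obtained from the randomly sampled permutations, with Proposition~\ref{Prop:1} supplying the boundedness hypothesis. First I would fix a sample size $N$ and, for $i \in [N]$, let $X_i$ denote the cost $\sum_{s \in S^{(i)}} w(s)$ of the feasible solution $\mathcal{S}^{(i)} = (S^{(i)}_1, \ldots, S^{(i)}_{\ell})$ produced by the per-permutation greedy routine on the $i$-th sampled permutation (with a fixed deterministic convention, e.g.\ assigning cost $w(\mathcal{BS})$ to an infeasible draw, so that the variable is always defined). Since the permutations are drawn independently and uniformly at random, the $X_i$ are i.i.d.; and by Proposition~\ref{Prop:1} each satisfies $0 \le X_i \le w(\mathcal{BS})$, so in the notation of Theorem~2 we may take $a_i = 0$ and $b_i = w(\mathcal{BS})$ for every $i$.

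Next I would set $\bar{X} = \frac{1}{N}\sum_{i \in [N]} X_i$ and $\mu = \mathbb{E}[\bar{X}] = W^{\mathcal{A}}$, the true expected solution cost under a uniformly random permutation. Substituting $a_i = 0$ and $b_i = w(\mathcal{BS})$ into Theorem~2 gives, for any $\eta > 0$,
\begin{equation}
\Pr\!\left[\,|\bar{X} - \mu| \ge \eta\,\right] \le 2\exp\!\left(-\frac{2 N^2 \eta^2}{N \cdot w(\mathcal{BS})^2}\right) = 2\exp\!\left(-\frac{2 N \eta^2}{w(\mathcal{BS})^2}\right).
\end{equation}
To convert the additive deviation $\eta$ into the relative ``error in computation'' appearing in the statement, I would choose $\eta = \epsilon\, W^{\mathcal{A}}$, so that the event $\{|\bar{X}-\mu| < \eta\}$ is exactly the event that the relative estimation error is below $\epsilon$.

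Finally I would force the right-hand side to be at most $\delta$: requiring $2\exp(-2N\epsilon^2 (W^{\mathcal{A}})^2 / w(\mathcal{BS})^2) \le \delta$ is equivalent, after taking logarithms and rearranging, to $N \ge \frac{\ln(2/\delta)\, w(\mathcal{BS})^2}{2 \epsilon^2 (W^{\mathcal{A}})^2}$, which is precisely the claimed bound; for any such $N$ the complementary event $|\bar{X} - \mu| < \epsilon\, W^{\mathcal{A}}$ holds with probability at least $1-\delta$. The inversion of the exponential tail is routine algebra; the step that needs care is the modelling: verifying that the sampled costs really are i.i.d.\ and uniformly bounded (in particular, handling infeasible permutations consistently), and being explicit that this guarantee is on the empirical mean cost $\bar X$. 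If one instead wants a guarantee directly on the minimum-cost feasible solution actually returned by Algorithm~\ref{Algo:1}, an additional argument relating the sample minimum to the population minimum (or a union bound over the feasible candidates) would be required, and I would flag that as the main conceptual gap to address.
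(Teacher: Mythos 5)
Your proposal follows the same route as the paper: Hoeffding's inequality with the bounds $[0,\, w(\mathcal{BS})]$ supplied by Proposition~\ref{Prop:1}, a deviation threshold of the form $\epsilon$ times the target quantity, and the routine inversion of the exponential tail to isolate the sample size; the algebra and the final bound are identical. The difference lies in what the two arguments claim Hoeffding controls. The paper writes the tail bound directly for $\Pr[\,|W^{*}-W^{\mathcal{A}}| \geq \epsilon\, W^{*}\,]$, where $W^{*}$ is the true optimum and $W^{\mathcal{A}}$ is the best (minimum-cost) solution among the sampled permutations; neither of these is an empirical mean, so Hoeffding does not apply to that pair as stated. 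Your version applies the inequality to the quantity it actually governs --- the empirical mean $\bar X$ of the sampled costs versus its expectation --- and you explicitly flag that passing from this to a guarantee on the minimum over the sample relative to the population optimum requires a separate argument (e.g.\ an order-statistic or union-bound step). That flagged gap is real, and it is present in the paper's own proof as well; your treatment is the more defensible reading of the argument.

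One caveat on your modelling: to make the stated denominator $(W^{\mathcal{A}})^{2}$ appear, you silently reinterpret $W^{\mathcal{A}}$ as the deterministic expected cost $\mu=\mathbb{E}[\bar X]$, whereas the paper defines $W^{\mathcal{A}}$ as the best solution found in the sample. With the paper's definition, $W^{\mathcal{A}}$ is itself a random variable depending on the draw and cannot legitimately sit as a constant inside the Hoeffding exponent or in the resulting sample-size bound; your renaming fixes this but should be stated explicitly, since it changes the meaning of the theorem's conclusion. Your convention for handling infeasible permutations (assigning them cost $w(\mathcal{BS})$) is also a necessary piece of bookkeeping that the paper omits, without which the $X_i$ are not well defined on all outcomes.
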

\begin{proof}
    Let, $\mathcal{Y}$ denote the set of sampled solutions. $W^{*}$ and $W^{\mathcal{A}}$ denotes the optimal solution the best solution among the solutions in Sample, respectively. Now, by applying Hoeffding's Inequality and Proposition \ref{Prop:1} we gave the following inequality:
\begin{equation} \label{InEq:Hoeefding}
    Pr[|W^{*}- W^{\mathcal{A}}| \geq \epsilon \cdot W^{*}] \leq  2 \cdot exp(- \frac{2 \cdot \epsilon^{2} \cdot (W^{\mathcal{A}})^{2} \cdot |\mathcal{Y}|}{w(\mathcal{BS})^{2}})
\end{equation}
We want to establish the sample size such that $ Pr[|W^{*}- W^{\mathcal{A}}| < \epsilon \cdot W^{*}] \geq 1- \delta$. By simplifying we have the following:
\begin{center}
    $2 \cdot exp(- \frac{2 \cdot \epsilon^{2} \cdot (W^{\mathcal{A}})^{2} \cdot |\mathcal{Y}|}{w(\mathcal{BS})^{2}}) \leq \delta$\\
    \vspace{0.2 cm}
    $ln(\frac{2}{\delta}) \leq \frac{2 \cdot \epsilon^{2} \cdot (W^{\mathcal{A}})^{2} \cdot |\mathcal{Y}|}{w(\mathcal{BS})^{2}}$
\end{center}
This leads to the lower bound on the sample size which is
\begin{center}
    $|\mathcal{Y}| \geq \frac{ln(\frac{2}{\delta}) \cdot w(\mathcal{BS})^{2}}{2 \cdot \epsilon^{2} \cdot (W^{\mathcal{A}})^{2}}$
\end{center} 
\end{proof}

\begin{algorithm}[h!]
\scriptsize
\caption{Randomized Algorithm for Multi-Product Influence Maximization}
\label{Algo:3}
\KwData{Billboard Slots $\mathcal{BS}$, Cost Function $w:\mathcal{BS} \longrightarrow \mathbb{R}^{+}$, The influence functions $\mathcal{I}_{1}, \mathcal{I}_{2}, \ldots, \mathcal{I}_{\ell}$, $v_1, v_2, \ldots, v_{\ell} \in \mathbb{R}^{+}$, and $\sigma_1, \sigma_2, \ldots, \sigma_{\ell} \in \mathbb{R}^{+}$}
\KwResult{Non-intersecting sets of slots $S_1, S_2, \ldots, S_{\ell} \subseteq \mathcal{BS}$}

$a \longleftarrow \ell !$, $b \longleftarrow |\mathcal{BS}|!$\;
\For{ $\text{ All } p \in [a] \text{ AND }q \in [b]$}{
$\mathcal{M}[p,q] \longleftarrow \text{ Empty Solution}$\;
$Flag[p,q] \longleftarrow 1$\;
}
\For{ $\text{ All } p \in [a] \text{ AND }q \in [b]$}{
$\mathbb{S} \gets \emptyset$\;
\For{$i=1 \text{ to } \ell$}{
$S_i \longleftarrow \emptyset$\;
}
\For{$i=1 \text{ to } \ell$}{
\While{$v_i> 0 \text{ AND }\mathcal{I}(S_{i}) < \sigma_{i}$}{
$s^{*} \longleftarrow \underset{\substack{s \in \mathcal{BS} \setminus \mathbb{S} \\ w(s) \leq v_i}}{argmax} \ \mathcal{I}(S_i \cup \{ s\}) \ - \ \mathcal{I}(S_{i})$\;
$S_{i} \longleftarrow S_{i} \cup \{s^{*}\}$, $\mathbb{S} \longleftarrow \mathbb{S} \cup \{s^{*}\}$\;
$v_{i} \longleftarrow v_{i} - w(s^{*})$\;
}
\If{$\mathcal{I}(S_i) < \sigma_{i}$}
{
$Flag[p,q] \longleftarrow 0$\;
$break$\;
}
}
\eIf{$Flag[p,q] = 0$}
{
$\mathcal{M}[p,q] \longleftarrow \text{ Not a Feasible Solution}$\;
}
{
$\mathcal{M}[p,q] \longleftarrow [\{S_1,S_2, \ldots,S_{\ell} \}, \underset{s \in S_1 \cup S_2 \cup, \ldots,S_{\ell} }{\sum} \ w(s)]$\;
}

}
$(S^{*}_1, S^{*}_2, \ldots, S^{*}_{\ell}) \longleftarrow  \underset{(S_1, S_2, \ldots, S_{\ell}) \in \mathcal{M}}{argmin} \ \ \ \underset{s \in S_1 \cup S_2 \cup, \ldots,S_{\ell}}{\sum} w(s)$\;
$\text{return } (S^{*}_1, S^{*}_2, \ldots, S^{*}_{\ell})$\;
\end{algorithm}

\paragraph{\textbf{Computational Complexity Analysis.}} Consider $|\mathcal{X}|$ denotes the sample size used to implement Algorithm \ref{Algo:1}. Both the \texttt{for} loops from $2$ to $4$ and from $5$ to $20$ will execute $\mathcal{O}(|\mathcal{X}|)$. Again the for loops from $7$ to $8$ and $9$ to $16$ will execute for $\mathcal{O}(\ell)$ times. Now, it is important to analyze how many times the while loop from $10$ to $16$ will execute. Consider the minimum selection cost among all the billboards is denoted by $w_{min}$, i.e., $w_{min}=\underset{bs_j \in \mathcal{BS}}{min} \ w(bs_j)$. Lat $B=\{k_1, k_2, \ldots, k_{\ell}\}$ are the budgets of the advertisers. Among these, let $B_{max}$ denotes the maximum budget among all the advertiser, i.e., $B_{max}=\underset{i \in [\ell]}{max} \ k_i$. It can be observed that maximum $\frac{B_{max}}{w_{min}}$ number of slots can be allocated to an advertiser. Hence, the while loop can be iterated at max $\mathcal{O}(\frac{B_{max}}{w_{min}})$ times. Within the while loop, the main computation is the marginal influence gain computation of a non-allocated slot and this requires $\mathcal{O}(n \cdot t)$ time where $n$ and $t$ denotes the number of billboard slots and the tuples in the trajectory database respectively. Hence, the time requirement for the execution of the while loop will be of $\mathcal{O}(\frac{B_{max}}{w_{min}} \cdot n \cdot t)$. All the remaining statements within the for loop from $5$ to $20$ will take $\mathcal{O}(1)$ time. Hence, the time requirement till Line $20$ will be of $\mathcal{O}(|\mathcal{X}| \cdot \ell \cdot \frac{B_{max}}{w_{min}} \cdot n \cdot t)$. In the worst case it may so happen that all the sampled solutions are feasible solution. In Line $21$, we find the best solution among all the feasible solutions in $\mathcal{X}$. It will take $\mathcal{O}(|\mathcal{X}|)$. Hence, the time requirement to execute of Algorithm \ref{Algo:1} is of  $\mathcal{O}(|\mathcal{X}| \cdot \ell \cdot \frac{B_{max}}{w_{min}} \cdot n \cdot t)$. Addition space consumed by Algorithm \ref{Algo:1} is to store the feasible solutions which will be of $\mathcal{O}(|\mathcal{X}| \cdot n)$, to store the sets $S_1, S_2, \ldots, S_{\ell}$ and $\mathbb{S}$ and both of them will take $\mathcal{O}(n)$ space. Hence total extra space consumed by $\mathcal{O}(|\mathcal{X}| \cdot n+n)$ which is equal to $\mathcal{O}(|\mathcal{X}| \cdot n)$. Hence Theorem \ref{TH:Algo_1_resource} holds.
\begin{theorem} \label{TH:Algo_1_resource}
    The sampling-based approach will take $\mathcal{O}(|\mathcal{X}| \cdot \ell \cdot \frac{B_{max}}{w_{min}} \cdot n \cdot t)$ time and $\mathcal{O}(|\mathcal{X}| \cdot n+n)$ space to execute.
\end{theorem}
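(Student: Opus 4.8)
The plan is to bound the running time by accounting for the nested loop structure of Algorithm~\ref{Algo:3} together with the cost of its only non-trivial primitive, namely the evaluation of the marginal influence gain of a candidate slot. First I would observe that, because the algorithm processes a sampled collection of $|\mathcal{X}|$ permutations rather than all $\mathcal{O}(\ell! \cdot n!)$ of them, the initialization loop (Lines~$2$--$4$) and the main loop (Lines~$5$--$20$) each execute $\mathcal{O}(|\mathcal{X}|)$ times, while the reset loop (Lines~$7$--$8$) and the per-advertiser loop (Lines~$9$--$16$) each run $\mathcal{O}(\ell)$ times.

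The key step is bounding the number of iterations of the inner \texttt{while} loop (Lines~$10$--$16$). Each executed iteration selects exactly one slot $s^{*}$ with $w(s^{*}) \geq w_{min}$ and then decrements the residual budget $v_i$ by $w(s^{*})$; since $v_i$ is at most $B_{max}$ at entry and the loop guard requires $v_i > 0$, the loop runs at most $\mathcal{O}(B_{max}/w_{min})$ times for a given advertiser. Within a single iteration the dominant cost is the $\arg\max$ over the unallocated slots: for each of at most $n$ slots we evaluate $\mathcal{I}(S_i \cup \{s\}) - \mathcal{I}(S_i)$, and by Equation~\ref{Eq:product_influence} this takes a pass over the $t$ trajectory tuples, i.e.\ $\mathcal{O}(t)$ per slot and $\mathcal{O}(n \cdot t)$ overall; the remaining updates cost $\mathcal{O}(1)$. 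Multiplying the four nested factors gives $\mathcal{O}(|\mathcal{X}| \cdot \ell \cdot \frac{B_{max}}{w_{min}} \cdot n \cdot t)$ for Lines~$5$--$20$, and the final selection in Line~$21$ only scans the $\mathcal{O}(|\mathcal{X}|)$ stored solutions, so it is dominated; this establishes the time bound. For space, beyond the input the algorithm stores the table $\mathcal{M}$ of sampled solutions, each holding at most $n$ slots across $S_1, \dots, S_\ell$ and hence $\mathcal{O}(|\mathcal{X}| \cdot n)$ in total, together with the working sets $S_1, \dots, S_\ell$ and $\mathbb{S}$ for the current permutation, which together occupy $\mathcal{O}(n)$; summing yields the claimed $\mathcal{O}(|\mathcal{X}| \cdot n + n)$.

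The one point that needs care, and which I expect to be the main obstacle, is making the \texttt{while}-loop iteration bound airtight: one must handle the degenerate case in which $v_i > 0$ and $\mathcal{I}(S_i) < \sigma_i$ but no unallocated slot satisfies $w(s) \leq v_i$, so that the $\arg\max$ in Line~$11$ is taken over the empty set. Here I would adopt the convention (implicit in the pseudocode) that this situation terminates the loop and triggers the feasibility check which sets the flag to $0$; under this convention every iteration that is actually executed genuinely removes a slot of cost at least $w_{min}$, so the loop count is $\mathcal{O}(B_{max}/w_{min})$ as claimed (and in any case at most $n$, since slots are never re-added within a permutation). The rest of the argument is a routine multiplication of loop bounds.
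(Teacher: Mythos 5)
Your proof is correct and takes essentially the same route as the paper's: multiply the $\mathcal{O}(|\mathcal{X}|)$ outer loop by the $\mathcal{O}(\ell)$ per-advertiser loop, bound the inner \texttt{while} loop by $\mathcal{O}(B_{max}/w_{min})$ via the budget-decrement argument, and charge $\mathcal{O}(n \cdot t)$ per iteration for the marginal-gain $\arg\max$, with the final scan and the storage of $\mathcal{M}$ giving the same time and space bounds. Your additional handling of the degenerate empty-$\arg\max$ case is a refinement the paper does not spell out, but it does not alter the argument.
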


\subsection{Greedy Primal–Dual Algorithm for Multi-Product Influence Maximization}
We propose a primal–dual greedy algorithm to solve the disjoint multi-product submodular cover problem under budget and disjointness constraints. The key challenge is to simultaneously satisfy multiple product-specific coverage requirements while ensuring that each billboard slot is allocated to at most one product.
\par The algorithm maintains a primal solution, consisting of disjoint slot sets $\mathcal{S}_1, \dots, \mathcal{S}_\ell$, and a corresponding dual solution, represented by product-specific dual variables $\lambda_1, \dots, \lambda_\ell$. Each dual variable  $\lambda_j$ reflects the urgency of satisfying the coverage demand $\sigma_j$ of product $j$. Initially, all slot sets are empty, and the full budget is available for each product. At each iteration, the algorithm evaluates all feasible slot–product pairs and computes their marginal influence gain per unit cost, weighted by the corresponding dual variable. The pair with the maximum weighted marginal gain is selected, and the slot is irrevocably assigned to the chosen product. This enforces the disjointness constraint naturally, as each slot can be selected at most once.

\par After assigning a slot, the remaining budget of the corresponding product is updated. Once a product’s coverage requirement is met, its dual variable is set to zero, preventing further slot assignments to that product. This mechanism allows the algorithm to automatically shift its focus toward products whose demands are still unmet. By leveraging the monotonicity and submodularity of the influence functions, the greedy selection ensures diminishing returns, which is critical for both efficiency and approximation guarantees. The primal–dual framework enables a tight coupling between the cost incurred and the progress made toward satisfying coverage demands, leading to a logarithmic approximation factor.

Note that if the algorithm terminates with some unmet demands, this indicates that the instance is infeasible under the given budget and disjointness constraints. In such cases, the algorithm returns a partial allocation that maximizes total covered influence while respecting feasibility.

\begin{algorithm}[H]
\caption{Primal--Dual Greedy for Disjoint Multi-Product Slot Selection}
\label{alg:PD_Disjoint}
\KwIn{
Billboard slots $\mathcal{BS}$, products $\mathcal{P}=\{1,\dots,\ell\}$, 
influence functions $\{\mathcal{I}_j(\cdot)\}_{j\in\mathcal{P}}$, 
slot cost $w(\cdot)$, budgets $\{B_j\}$, demands $\{\sigma_j\}$
}
\KwOut{Disjoint slot sets $\mathcal{S}_1, \dots, \mathcal{S}_\ell$}

Initialize $\mathcal{S}_j \leftarrow \emptyset$, remaining budget $R_j \leftarrow B_j$ for all $j$\;
Initialize dual variables $\lambda_j \leftarrow \frac{1}{\sigma_j}$\;
Initialize assigned slot set $\mathcal{A} \leftarrow \emptyset$\;

\While{there exists $j$ such that $\mathcal{I}_j(\mathcal{S}_j) < \sigma_j$}{
    \ForEach{$s \in \mathcal{BS} \setminus \mathcal{A}$}{
        \ForEach{$j \in \mathcal{P}$ with $w(s) \le R_j$}{
            $\Delta_{s,j} \leftarrow \mathcal{I}_j(\mathcal{S}_j \cup \{s\}) - \mathcal{I}_j(\mathcal{S}_j)$\;
            $\rho_{s,j} \leftarrow \lambda_j \cdot \Delta_{s,j} / w(s)$\;
        }
    }
    Choose $(s^*,j^*) = \arg\max_{s,j} \rho_{s,j}$\;

    \If{$\Delta_{s^*,j^*} = 0$}{
        \textbf{break}\;
    }

    $\mathcal{S}_{j^*} \leftarrow \mathcal{S}_{j^*} \cup \{s^*\}$\;
    $R_{j^*} \leftarrow R_{j^*} - w(s^*)$\;
    $\mathcal{A} \leftarrow \mathcal{A} \cup \{s^*\}$\;

    \If{$\mathcal{I}_{j^*}(\mathcal{S}_{j^*}) \ge \sigma_{j^*}$}{
        $\lambda_{j^*} \leftarrow 0$\;
    }
}

\Return{$\mathcal{S}_1, \mathcal{S}_2, \dots, \mathcal{S}_\ell$}\;
\end{algorithm}

\paragraph{\textbf{Complexity Analysis.}}
Let $n = |\mathcal{BS}|$ denote the number of billboard slots and $\ell = |\mathcal{P}|$ the number of products. The time required to evaluate the influence function for a billboard slot is $\mathcal{O}(n \cdot t)$ where $t$ is the number of tuples in the trajectory database.
In Line No. $1$ to $3$ initializing $\mathcal{S}_j \leftarrow \emptyset$ and $R_j \leftarrow B_j$ for all $j \in \mathcal{P}$, $\mathcal{A} \leftarrow \emptyset$ and setting $\lambda_j \leftarrow \frac{1}{\sigma_j}$ for all $j \in \mathcal{P}$ takes $\mathcal{O}(\ell)$ time. In Line No. $4$, the \texttt{while loop} condition can be checked in $\mathcal{O}(\ell)$ time per iteration. The \texttt{for loop} at Line No. $5$ will iterate for $\mathcal{O}(n)$ times, and the \texttt{for loop} at Line No. $6$ for each slot $s$, the loop over products $j \in \mathcal{P}$ with $w(s) \le R_j$ takes $\mathcal{O}(\ell)$ time in the worst case. Next, in Line No. $7$ to $8$, computing $\Delta_{s,j}$ requires two influence evaluations and thus costs $\mathcal{O}(n \cdot t \cdot \ell)$ time. Finding $(s^*, j^*)$ over all evaluated pairs takes $O(n\ell)$ time per iteration at Line No. $9$ and Line No. $10$ to $11$ will take $\mathcal{O}(1)$ time to execute. Now, in Line No. $12$ to $14$ will take $\mathcal{O}(1)$ time to execute. In Line $15$ to $16$, the coverage satisfaction check will take  $\mathcal{O}(n \cdot t \cdot \ell)$ time. Therefore the total time Algorithm \ref{alg:PD_Disjoint} will take $\mathcal{O}(n \cdot \ell^{2} \cdot t)$. The space complexity is $\mathcal{O}(n + \ell)$.

\begin{theorem}\label{Complexity_Analysis4}
The time and space requirements of Algorithm n\ref{alg:PD_Disjoint} will be  $\mathcal{O}(n \cdot \ell^{2} \cdot t)$ and  is $\mathcal{O}(n + \ell)$, respectively.
\end{theorem}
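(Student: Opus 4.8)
Since the statement is a resource accounting for Algorithm~\ref{alg:PD_Disjoint}, the plan is a line-by-line analysis in which the two quantities that actually drive the bound are (i) the cost of a single influence or marginal-gain evaluation and (ii) the number of iterations of the outer \texttt{while} loop. I would first fix $n=|\mathcal{BS}|$, $\ell=|\mathcal{P}|$, and $t$ the number of tuples in $\mathcal{D}$, and record the basic fact used implicitly throughout the paper: by Definition~\ref{Def:Product_Influence}, evaluating $\mathcal{I}_j$ on a slot set is a sum over the relevant users of a product over the chosen slots, hence costs $\mathcal{O}(n\cdot t)$ in the worst case; consequently each marginal gain $\Delta_{s,j}=\mathcal{I}_j(\mathcal{S}_j\cup\{s\})-\mathcal{I}_j(\mathcal{S}_j)$ is obtained with $\mathcal{O}(n\cdot t)$ work, or with $\mathcal{O}(t)$ work if the per-tuple partial products $\prod_{s_k\in\mathcal{S}_j}(1-Pr(s_k,u_i))$ are cached and updated incrementally.

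Next I would walk through the pseudocode. Lines~1--3 initialise $\mathcal{S}_j$, $R_j$, $\lambda_j$, and $\mathcal{A}$, which is $\mathcal{O}(\ell)$. For the \texttt{while} loop at Line~4 I need a bound on the number of iterations: every iteration that does not trigger the \textbf{break} at Line~11 assigns exactly one slot to $\mathcal{A}$ and never removes it, and once a product's demand is met its dual is zeroed at Line~16 so that $\rho_{s,j}=0$ thereafter and no further slot is charged to it; together with the budget bound $|\mathcal{S}_j|\le B_j/w_{\min}$, this controls the effective number of productive iterations that the stated estimate relies on. Inside one iteration: the condition check at Line~4 is $\mathcal{O}(\ell)$ with cached influence values; the nested \texttt{for} loops of Lines~5--8 range over $\mathcal{O}(n)$ slots and $\mathcal{O}(\ell)$ products and perform one marginal-gain evaluation per pair, contributing $\mathcal{O}(n\cdot\ell\cdot t)$; selecting the arg-max pair at Line~9 is $\mathcal{O}(n\ell)$; Lines~10--14 are $\mathcal{O}(1)$; and the satisfaction test at Lines~15--16 costs at most one influence evaluation, i.e. $\mathcal{O}(n\cdot t\cdot\ell)$. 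Thus a single iteration is dominated by $\mathcal{O}(n\cdot\ell\cdot t)$, and multiplying by the $\mathcal{O}(\ell)$ iteration bound gives the claimed $\mathcal{O}(n\cdot\ell^{2}\cdot t)$. For space, I would observe that the disjoint sets $\mathcal{S}_1,\dots,\mathcal{S}_\ell$ and the assigned set $\mathcal{A}$ jointly store at most $n$ slots, i.e. $\mathcal{O}(n)$, while the budgets $R_j$ and duals $\lambda_j$ take $\mathcal{O}(\ell)$, and nothing else is retained, so the working space is $\mathcal{O}(n+\ell)$.

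The step I expect to be the main obstacle is exactly the bound on the number of \texttt{while}-loop iterations: the immediate argument only gives $\mathcal{O}(n)$ iterations (one per assigned slot), which would inflate the time bound by a further factor of $n$. Making the stated $\mathcal{O}(n\cdot\ell^{2}\cdot t)$ fully rigorous therefore requires either an explicit structural hypothesis (a per-product cap on the number of selected slots, or amortising the $\Delta_{s,j}$ computations via cached partial products so a full iteration costs only $\mathcal{O}(t\ell)$ outside the coverage check), or restating the bound in the $B_{\max}/w_{\min}$ form already used for Algorithm~\ref{Algo:3} in Theorem~\ref{TH:Algo_1_resource}. I would make this dependence explicit; the remaining per-line accounting and the space bound are then routine.
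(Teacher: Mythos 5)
Your accounting follows essentially the same line-by-line route as the paper: $\mathcal{O}(\ell)$ initialization, $\mathcal{O}(n\cdot\ell)$ marginal-gain evaluations per iteration of the \texttt{while} loop with influence evaluation as the dominant primitive, and $\mathcal{O}(n+\ell)$ space for the sets, budgets, and duals. The obstacle you single out is real and is not resolved in the paper either: the paper bounds one iteration by $\mathcal{O}(n\cdot\ell\cdot t)$ and then states the total as $\mathcal{O}(n\cdot\ell^{2}\cdot t)$, which implicitly assumes only $\mathcal{O}(\ell)$ iterations of the \texttt{while} loop, whereas the loop assigns one slot per iteration and can in general run $\Omega(n)$ times (or $B_{\max}/w_{\min}$ times per product, as in Theorem~\ref{TH:Algo_1_resource}); your suggestion to either cache per-user partial products so a marginal gain costs $\mathcal{O}(t)$, or to restate the bound with an explicit iteration count, is exactly what would be needed to make the stated bound rigorous.
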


\begin{lemma}\label{Lemma:Disjointness}
Algorithm~\ref{alg:PD_Disjoint} produces disjoint slot sets satisfying budget constraints for all products.
\end{lemma}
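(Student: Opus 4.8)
The plan is to prove the two claimed properties---pairwise disjointness of $\mathcal{S}_1, \dots, \mathcal{S}_\ell$ and satisfaction of the per-product budget constraints $\sum_{s \in \mathcal{S}_j} w(s) \le B_j$---by establishing a loop invariant that is preserved through every iteration of the \texttt{while} loop of Algorithm~\ref{alg:PD_Disjoint}. Concretely, I would show that at the start of each iteration the following hold: (I1) the sets $\mathcal{S}_1, \dots, \mathcal{S}_\ell$ are pairwise disjoint and $\mathcal{A} = \bigcup_{j \in \mathcal{P}} \mathcal{S}_j$; and (I2) for every product $j$, $R_j = B_j - \sum_{s \in \mathcal{S}_j} w(s)$ and $R_j \ge 0$. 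The lemma then follows by reading these off at termination.

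For the base case, after the initialization in Lines~1--3 we have $\mathcal{S}_j = \emptyset$, $R_j = B_j \ge 0$ for all $j$, and $\mathcal{A} = \emptyset$, so (I1) and (I2) hold trivially. For the inductive step, assume the invariant holds at the start of an arbitrary iteration and consider the two ways the iteration can proceed. If the guard $\Delta_{s^*,j^*} = 0$ triggers the \textbf{break}, no set or budget is modified and the loop exits, so the invariant is preserved vacuously. Otherwise a slot $s^*$ is added to $\mathcal{S}_{j^*}$ and to $\mathcal{A}$; since $s^*$ was chosen from $\mathcal{BS} \setminus \mathcal{A}$, hypothesis (I1) gives $s^* \notin \mathcal{S}_j$ for every $j$, so inserting $s^*$ into exactly one set keeps the family pairwise disjoint and keeps $\mathcal{A}$ equal to their union, re-establishing (I1). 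For (I2), the inner \texttt{for} loop at Line~6 forms the pair $(s^*, j^*)$ only when $w(s^*) \le R_{j^*}$, hence after the update $R_{j^*} \leftarrow R_{j^*} - w(s^*)$ we still have $R_{j^*} \ge 0$; all other budgets are untouched, and the identity $R_j = B_j - \sum_{s \in \mathcal{S}_j} w(s)$ is maintained by construction, so (I2) is preserved.

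To finish, I would apply the invariant at termination: (I1) yields that $\mathcal{S}_1, \dots, \mathcal{S}_\ell$ are pairwise disjoint, and (I2) yields $\sum_{s \in \mathcal{S}_j} w(s) = B_j - R_j \le B_j$ for each $j$, which is exactly the budget feasibility claim. I would also note, for completeness, that the loop terminates: each non-breaking iteration strictly enlarges $\mathcal{A} \subseteq \mathcal{BS}$, so the number of iterations is at most $|\mathcal{BS}|$.

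I do not anticipate a genuine obstacle, as this is a structural-correctness argument rather than a quantitative one. The only two points needing care are (a) checking that the \textbf{break} branch cannot corrupt the invariant---immediate, since it modifies nothing---and (b) making explicit that the feasibility test $w(s) \le R_j$ in the inner loop is precisely the mechanism that forbids a budget overrun; both become routine once the invariant (I1)--(I2) is stated in the form above.
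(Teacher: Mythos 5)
Your proof is correct and follows essentially the same route as the paper's: both arguments rest on the facts that a slot enters the global set $\mathcal{A}$ the moment it is assigned and is never reconsidered (giving disjointness), and that the feasibility test $w(s)\le R_j$ together with the update $R_j\leftarrow R_j - w(s)$ keeps $\sum_{s\in\mathcal{S}_j} w(s)\le B_j$ throughout. Your version merely packages this as an explicit loop invariant with a base case, inductive step, and termination remark, which is a slightly more formal presentation of the paper's informal argument.
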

\begin{proof}
We prove the claim by analyzing the update rules of Algorithm~\ref{alg:PD_Disjoint}. The algorithm maintains a global set of assigned slots, $\mathcal{A}$. A slot $s$ is added to $\mathcal{A}$ immediately after being assigned to some product $j$, and thereafter $s$ is never considered again for any product. Hence, no slot can be assigned to more than one product, implying that the output sets $\mathcal{S}_1,\dots, \mathcal{S}_\ell$ are pairwise disjoint. For each product $j$, the algorithm maintains a remaining budget $R_j$, initialized as $R_j = B_j$. A slot $s$ is added to $\mathcal{S}_j$ only if $w(s) \le R_j$, and upon assignment the budget is updated as $R_j \leftarrow R_j - w(s)$. Therefore, at any point $\sum_{s \in S_j} w(s) \le B_j$, ensuring that the budget constraint is never violated. Since both disjointness and budget feasibility are preserved throughout the execution of the Algorithm \ref{alg:PD_Disjoint}, the lemma follows.
\end{proof}

\begin{remark}
Algorithm \ref{alg:PD_Disjoint} is a greedy primal–dual–inspired heuristic with logarithmic behavior in practice. While it is motivated by classical submodular cover analyses, a formal approximation bound under partition matroid constraints remains an open problem.
\end{remark}

\section{Experimental Evaluations} \label{Sec:EE}
This section describes the experimental evaluation of the proposed solution approaches. Initially, we start by describing the datasets.
\subsection{Dataset Descriptions.}
We used two real-world datasets: New York City (NYC)\footnote{\url{https://www.nyc.gov/site/tlc/about/tlc-trip-record-data.page}} and Los Angeles (LA)\footnote{\url{https://github.com/Ibtihal-Alablani}}, both adopted in previous studies \cite{ali2022influential,ali2023influential,ali2024influential,ali2024regret}. The NYC dataset contains 227,428 check-ins (Apr 2012–Feb 2013), while the LA dataset includes 74,170 check-ins across 15 streets. Billboard data for both cities is sourced from LAMAR\footnote{\url{http://www.lamar.com/InventoryBrowser}}, comprising 1,031,040 slots in NYC and 2,135,520 in LA. In Table \ref{Table:Trajectory_Info} $|\mathcal{T}|$, $|\mathcal{U}|$, and $Avg_{dist}$ denote the number of trajectories, the number of unique users, and the average distance between trajectories. Similarly, in Table \ref{Table:Billboard_Info} $|\mathcal{B}|$, $|\mathcal{BS}|$, and $Avg_{dist}$ denote the number of billboards, the number of billboard slots, and the average distance between the billboards placed across the city. The descriptions of the datasets are summarized in Table \ref{Table:Trajectory_Info} and Table \ref{Table:Billboard_Info}.
\begin{table}[h]
\centering
\begin{minipage}{0.48\linewidth}
\centering
\scriptsize
\begin{tabular}{|c|c|c|c|}
\hline
\textbf{Dataset} & $|\mathcal{T}|$ & $|\mathcal{U}|$ & $Avg_{dist}$ \\ \hline
NYC & $227428$ & $1083$ & $3.12~km$ \\ \hline
LA & $74170$ & $2000$ & $0.61~km$ \\ \hline
\end{tabular}
\caption{Trajectory Datasets}
\label{Table:Trajectory_Info}
\end{minipage}
\hfill
\begin{minipage}{0.48\linewidth}
\centering
\scriptsize
\begin{tabular}{|c|c|c|c|}
\hline
\textbf{Dataset} & $|\mathcal{B}|$ & $|\mathcal{BS}|$ & $Avg_{dist}$ \\ \hline
NYC & $716$ & $1031040$  & $15.07~km$ \\ \hline
LA & $1483$ & $2135520$ & $10.26~km$ \\ \hline
\end{tabular}
\caption{Billboard Datasets}
\label{Table:Billboard_Info}
\end{minipage}
\end{table}
\subsection{Key Parameters.} 
All key parameters used in our experiments are summarized in Table \ref{Table-2}. The default parameter settings are highlighted in bold.
\begin{table}[h!]
\caption{\label{Table-2} Key Parameters}
\centering
\scriptsize  
\begin{tabular}{|c|p{6.2cm}|}
\hline
\textbf{Parameter} & \textbf{Values} \\ \hline
$\alpha$ & $40\%, 60\%, 80\%, 100\%, 120\%$ \\ \hline
$\beta$ & $1\%, 2\%, \textbf{5\%}, 10\%, 20\%$ \\ \hline
$|\mathcal{P}|$ & $100, 50, \textbf{20}, 10, 5$ \\ \hline
$\epsilon$ & $0.01, 0.05, \textbf{0.1}, 0.15, 0.2$ \\ \hline
$\lambda$ & $25\text{m}, 50\text{m}, \textbf{100\text{m}}, 125\text{m}, 150\text{m}$ \\ \hline
\end{tabular}
\end{table}

\paragraph{\textbf{Advertiser}}
Given an advertiser $\mathcal{A}$, with a set of products $\mathcal{P} = \{p_1, p_2,\ldots, p_n \}$, the advertiser can be represented as $(\mathcal{A}, p_{i}, \sigma_{i}, \mathcal{B})$, where $p_{i}$ is the $i^{th}$ product, $\sigma_{i}$ and $\mathcal{B}$ is the corresponding influence demand and budget. 
\paragraph{\textbf{Demand Supply Ratio.}}
The ratio $\alpha = \sigma^{\mathcal{P}} / \sigma^{*}$ represents the proportion of global influence demand $(\sigma^{\mathcal{P}} = \sum_{i=1}^{n} \sigma_i)$ to total influence supply $(\sigma^{*} = \sum_{b \in \mathcal{BS}} \mathcal{I}(b))$. We evaluate five values of $\theta: 40\%, 60\%, 80\%, 100\%, ~and~ 120\%$.
\paragraph{\textbf{Advertiser Individual Demand.}}
The ratio $\beta = \sigma^{\mathcal{P}''} / \sigma^{*}$ denotes the ratio of average individual demand to total influence supply, where $\sigma^{\mathcal{P}''} = \sigma^{\mathcal{P}} / |\mathcal{P}|$ is the average influence demand per product. This value $\beta$ adjusts the demand of individual products.
\paragraph{\textbf{Product Demand.}}
The demand for each product is generated as $\sigma = \lfloor \omega \cdot \sigma^{*} \cdot \beta \rfloor$, where $\omega$ is a random factor between 0.8 and 1.2 to simulate varying product payments.
\paragraph{\textbf{Billboard Cost.}}
The outdoor advertising companies like LAMAR and JCDecaux do not disclose the actual cost of renting a billboard slot. In existing studies \cite{ali2022influential,ali2023influential,ali2024effective,aslay2015viral,banerjee2020budgeted,zhang2019optimizing}, the cost of a billboard slot is proportional to its influence. Following this, we calculated the cost of a billboard slot `bs' using $\lfloor \delta \cdot \frac{\mathcal{
I}(bs)}{10} \rfloor$, where $\delta \in [0.8,1.1]$ and $\mathcal{I}(bs)$ denotes the influence of billboard slot `bs'.
\paragraph{\textbf{Advertiser Budget.}}
Following existing studies \cite{aslay2017revenue,aslay2015viral,zhang2021minimizing,ali2024minimizing,ali2024regret}, we set each product’s payment proportional to its influence demand: $\mathcal{L}_{i} = \lfloor \eta \cdot \sigma_{i} \rfloor$, where $\eta$ is a random factor in [0.9, 1.1] to model payment variations. The total advertiser budget is then $\mathcal{B} = \sum_{i=1}^{n} \mathcal{L}_{i}$.
\paragraph{\textbf{Environment Setup.}}
All Python code is executed on an HP $Z4$ workstation with $64$ GB RAM and a $3.50$ GHz Intel Xeon(R) CPU.

\subsection{Baseline Methodologies.} \label{Sec:Baseline}
\paragraph{\textbf{Random Allocation (RA)}}
In this approach, billboard slots are selected uniformly at random and assigned to products until both the influence demand and budget constraints are met.
\paragraph{\textbf{Top-k Allocation}}
In this approach, billboard slots are first sorted by their influence values. The highest-influence slots are allotted to the products sequentially until the influence requirements and budget constraints are met.

\subsection{\textbf{Goals of Our Experimentation}}
\begin{itemize}
\item \textbf{RQ1}: Varying $\alpha$, $\beta$, how do the satisfied product change?
\item \textbf{RQ2}: Varying $\alpha$, $\beta$, how do the total budget is utilized?
\item \textbf{RQ3}: Varying $\alpha$, $\beta$, how do the computational time change?
\item \textbf{RQ4}: Varying $\epsilon$ and $\lambda$, how do the influence quality change?
\end{itemize}

\subsection{Results and Discussions} \label{Sec:RD} 
In this section, we analyze the experimental results to show the effectiveness and efficiency of the proposed algorithms for multi-product billboard influence maximization. The analysis focuses on four research questions (RQ1–RQ4) to examine how the demand-supply ratio $\alpha$, the individual product demand ratio $\beta$, the approximation parameters $\epsilon$, and the spatial influence range $\lambda$ affect influence satisfaction, budget utilization, and computational performance. The results are reported for both the NYC and LA datasets and compared against baseline approaches: Random Allocation (RA) and Top-$k$ Allocation, as well as the proposed Bi-Criteria Approximation (BCA), Randomized Allocation (RA), and Primal–Dual Greedy (PDG) algorithms.

\begin{figure*}[ht]
    \centering
    \setlength{\tabcolsep}{2pt} 
    \renewcommand{\arraystretch}{0.9} 
    \begin{tabular}{cccc}
        \includegraphics[width=0.24\linewidth]{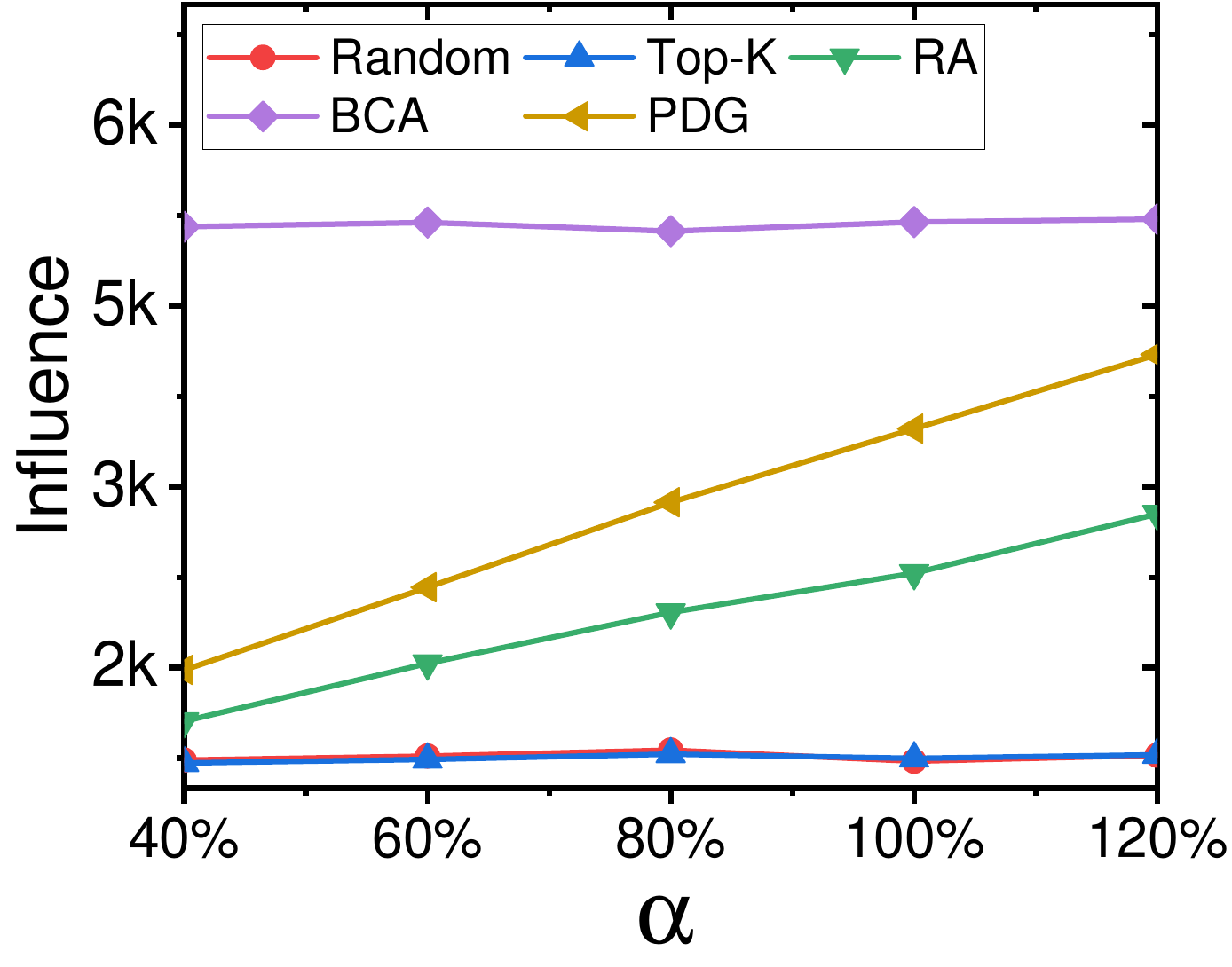} &
        \includegraphics[width=0.24\linewidth]{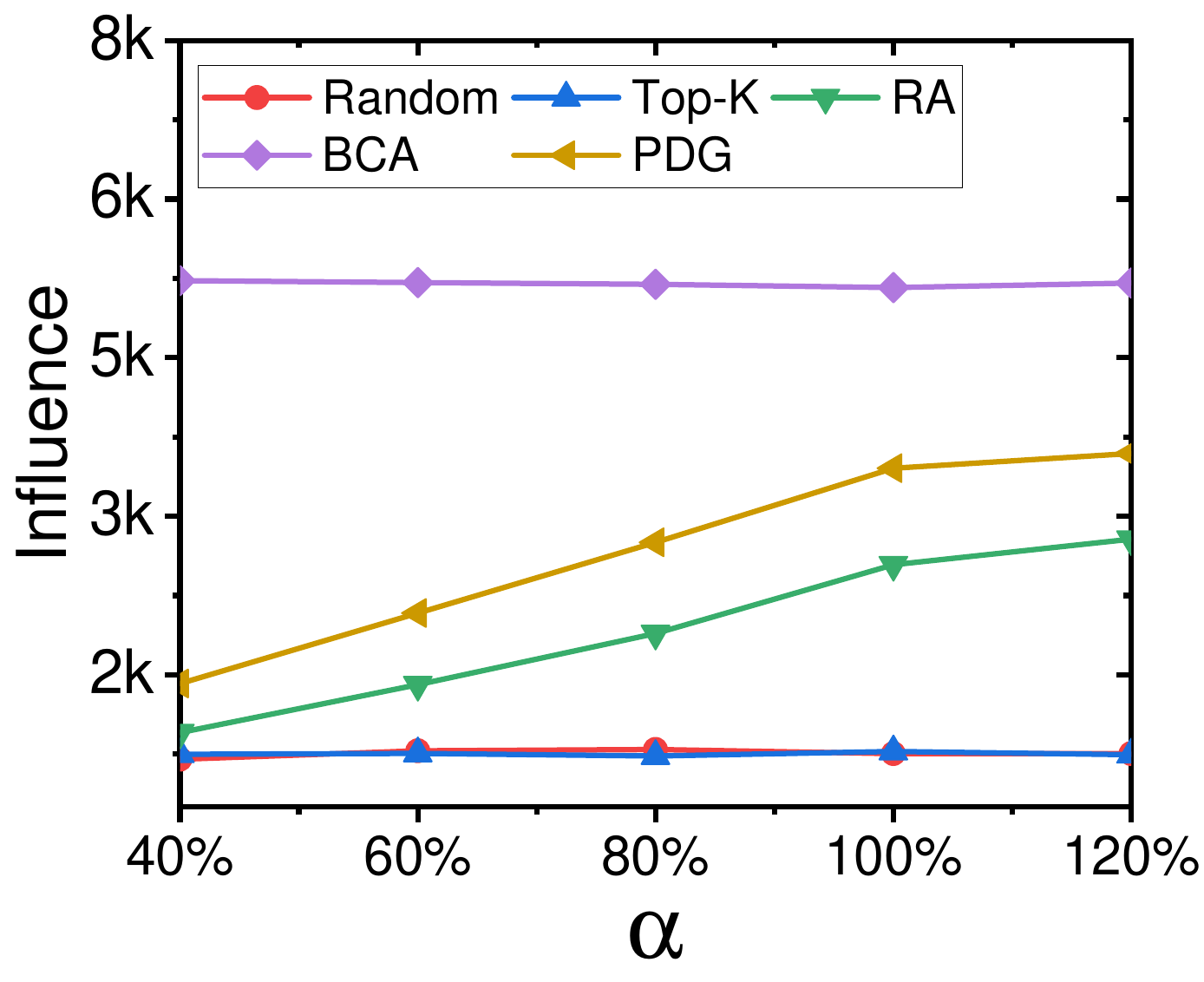} &
        \includegraphics[width=0.24\linewidth]{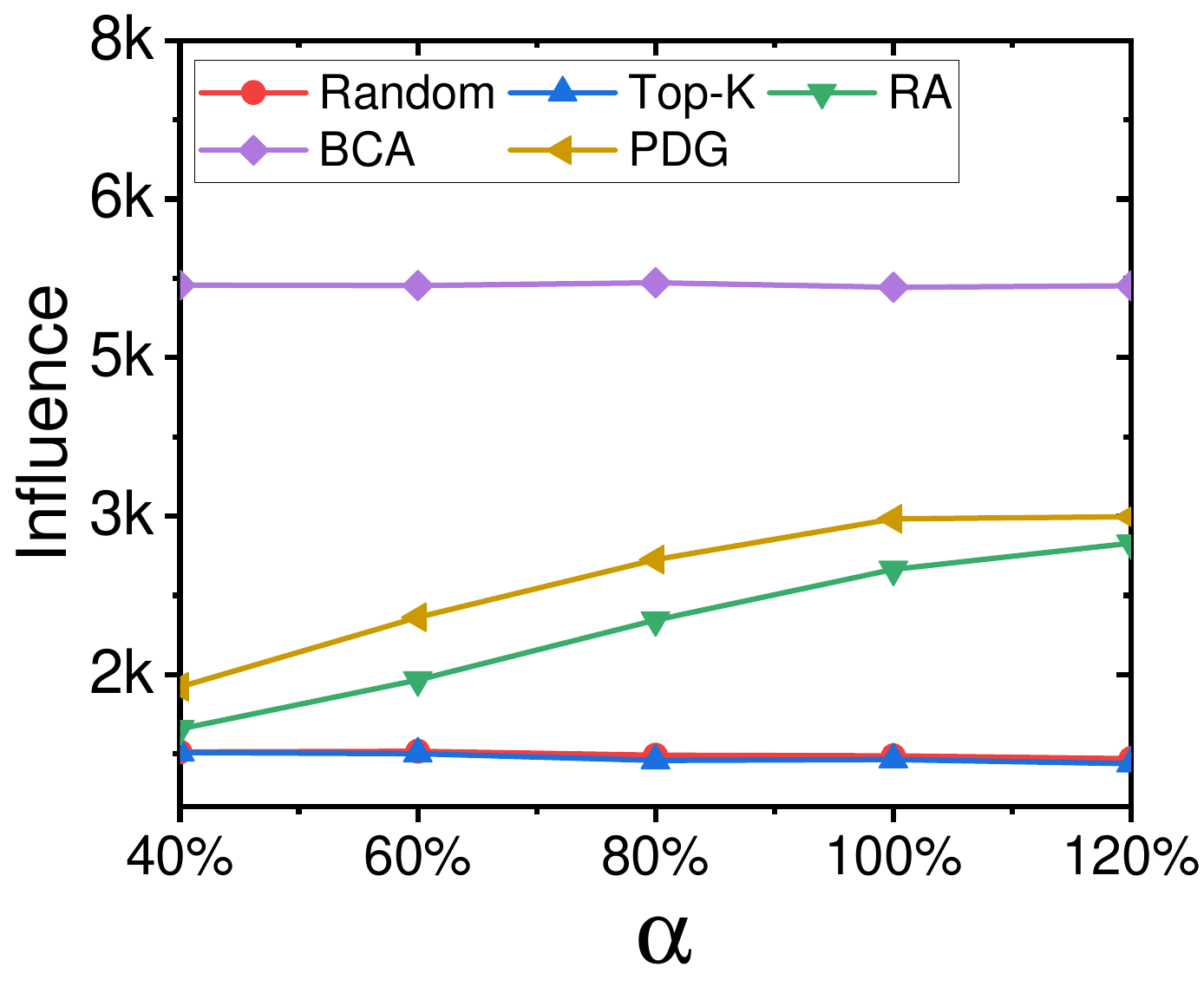} &
        \includegraphics[width=0.24\linewidth]{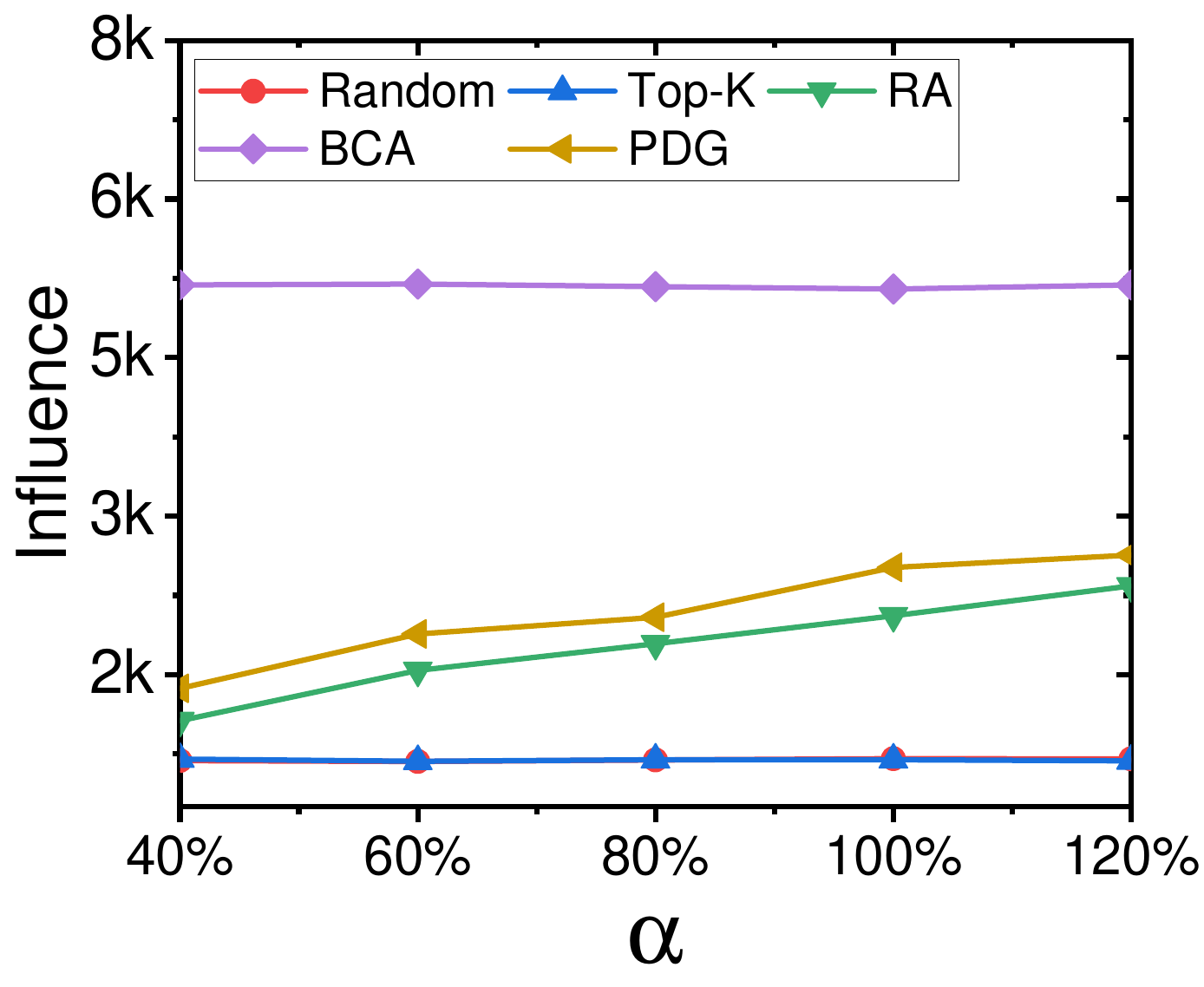} \\
        {\tiny (a) $\beta = 1\%,~ |\mathcal{P}| = 100$} &
        {\tiny (b) $\beta = 2\%,~ |\mathcal{P}| = 50$} &
        {\tiny (c) $\beta = 5\%,~ |\mathcal{P}| = 20$} &
        {\tiny (d) $\beta = 10\%,~ |\mathcal{P}| = 10$} \\
        \includegraphics[width=0.24\linewidth]{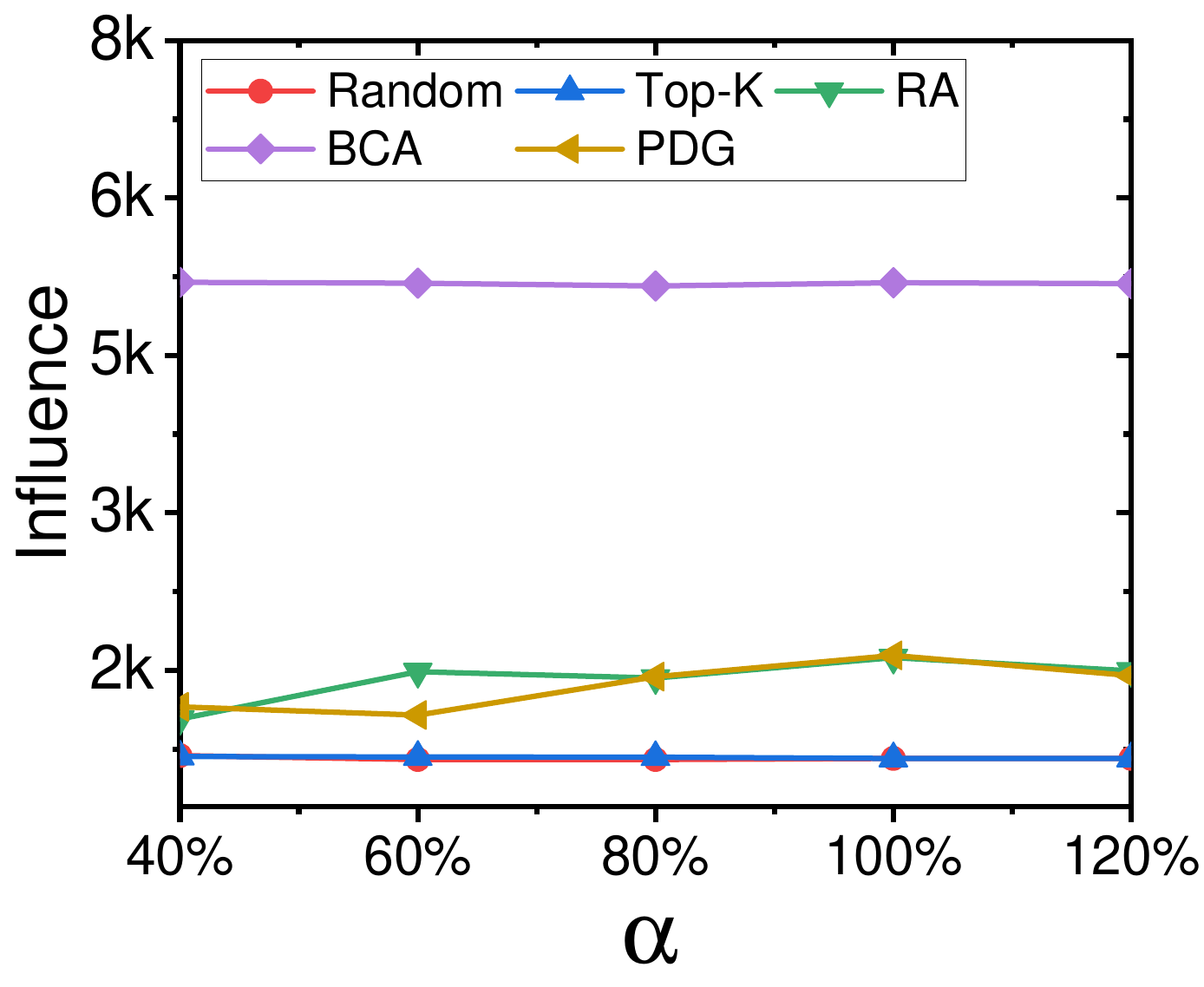} &
        \includegraphics[width=0.24\linewidth]{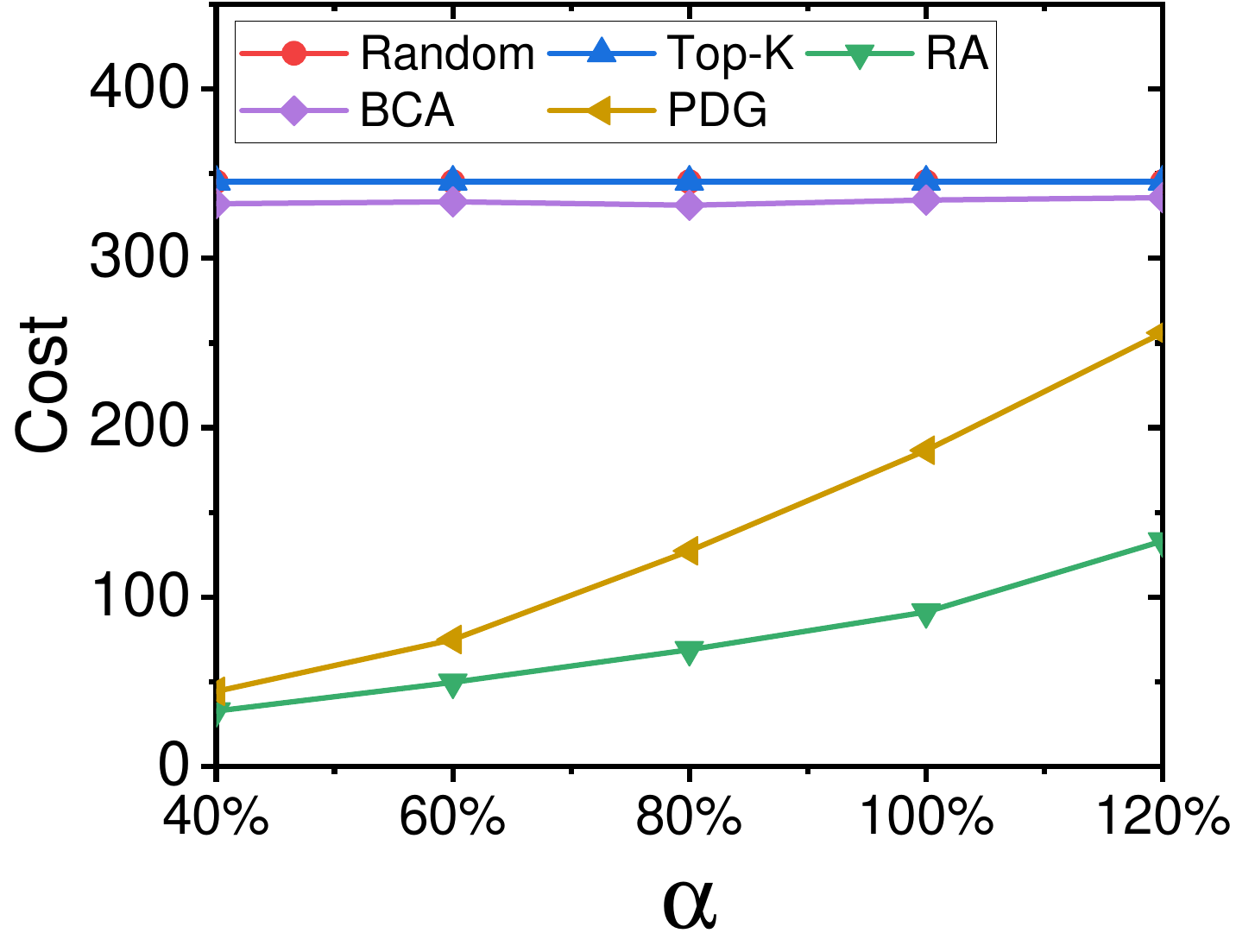} &
        \includegraphics[width=0.24\linewidth]{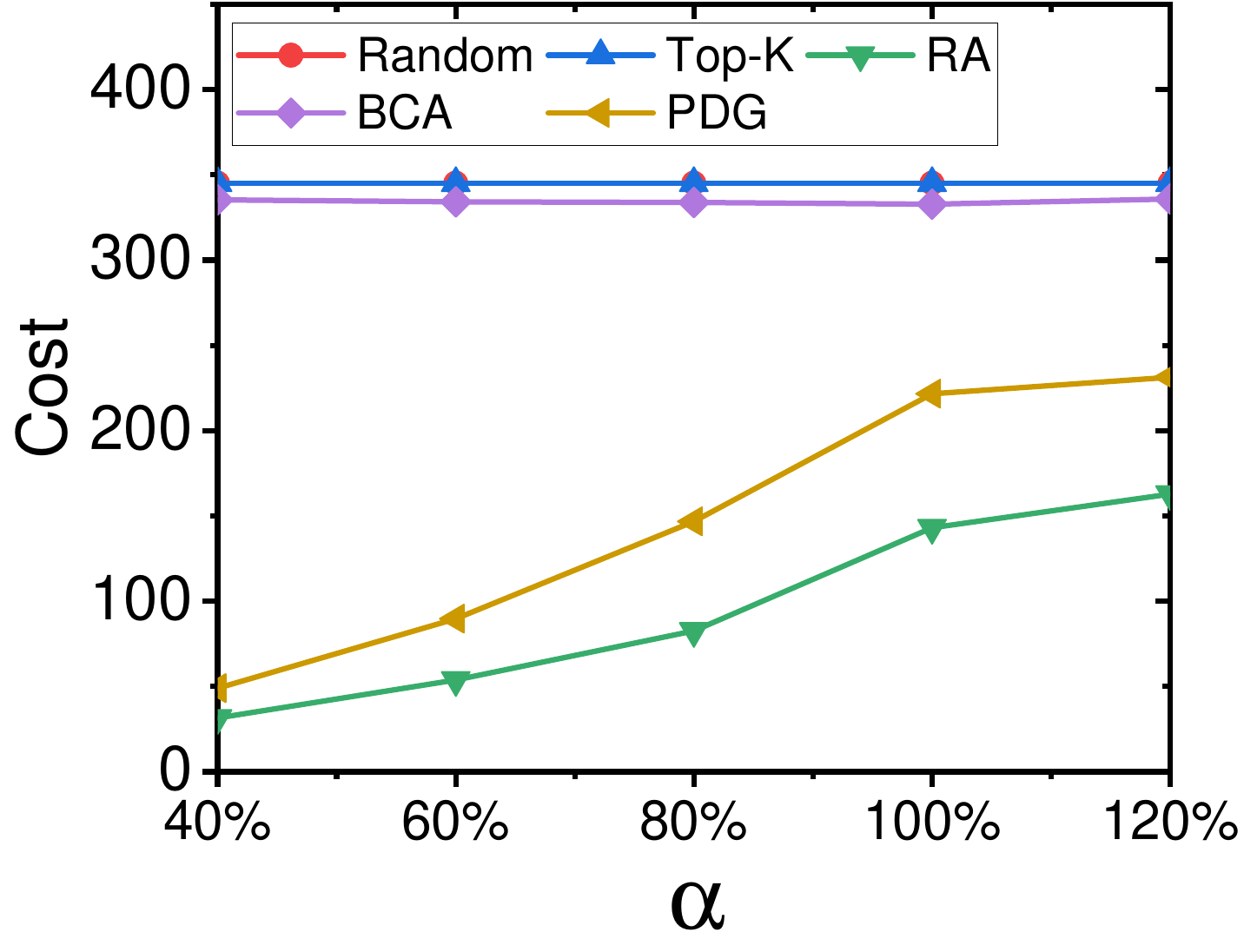} &
        \includegraphics[width=0.24\linewidth]{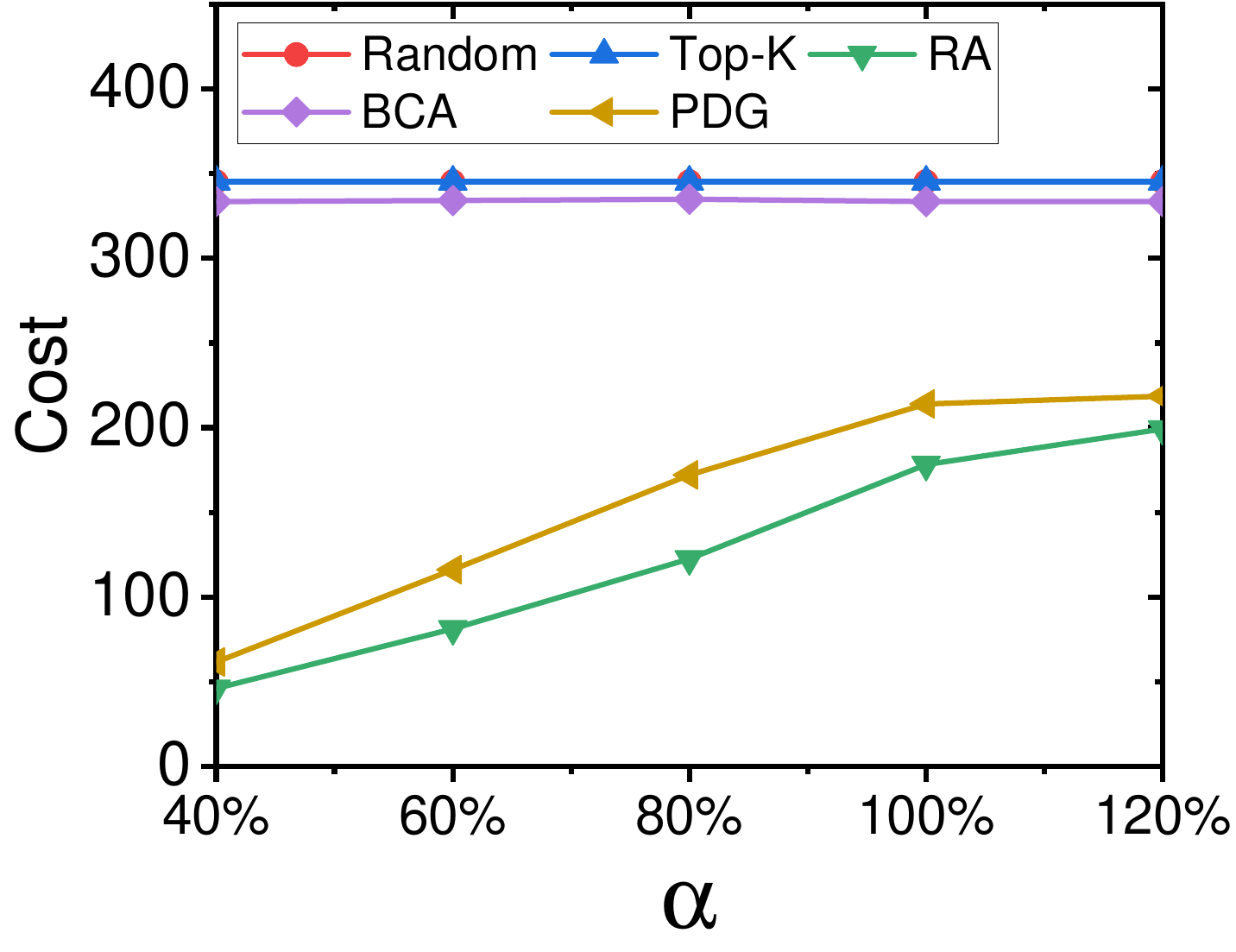} \\
        {\tiny (e) $\beta = 20\%,~ |\mathcal{P}| = 5$} &
        {\tiny (f) $\beta = 1\%,~ |\mathcal{P}| = 100$} &
        {\tiny (g) $\beta = 2\%,~ |\mathcal{P}| = 50$} &
        {\tiny (h) $\beta = 5\%,~ |\mathcal{P}| = 20$} \\
        \includegraphics[width=0.24\linewidth]{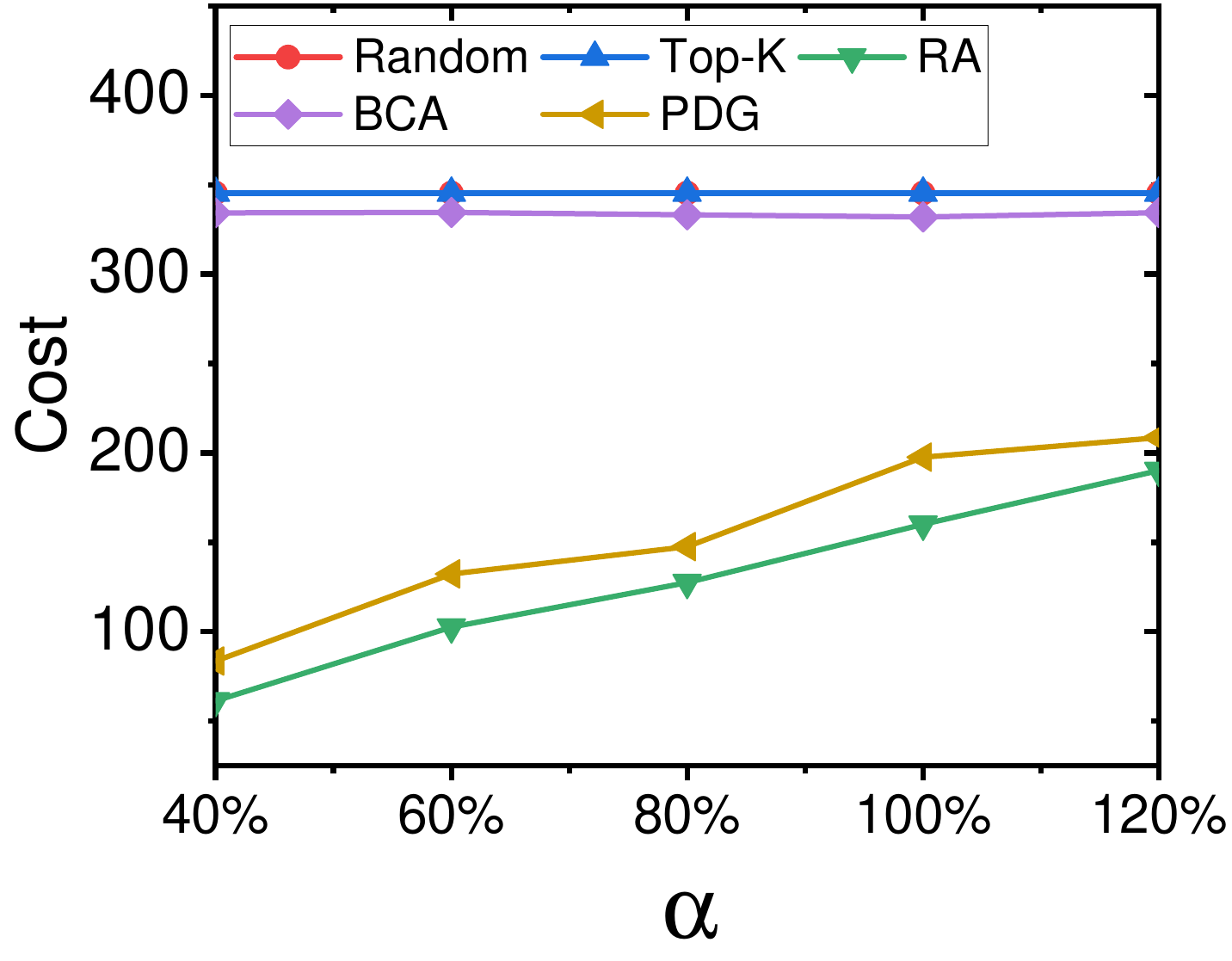} &
        \includegraphics[width=0.24\linewidth]{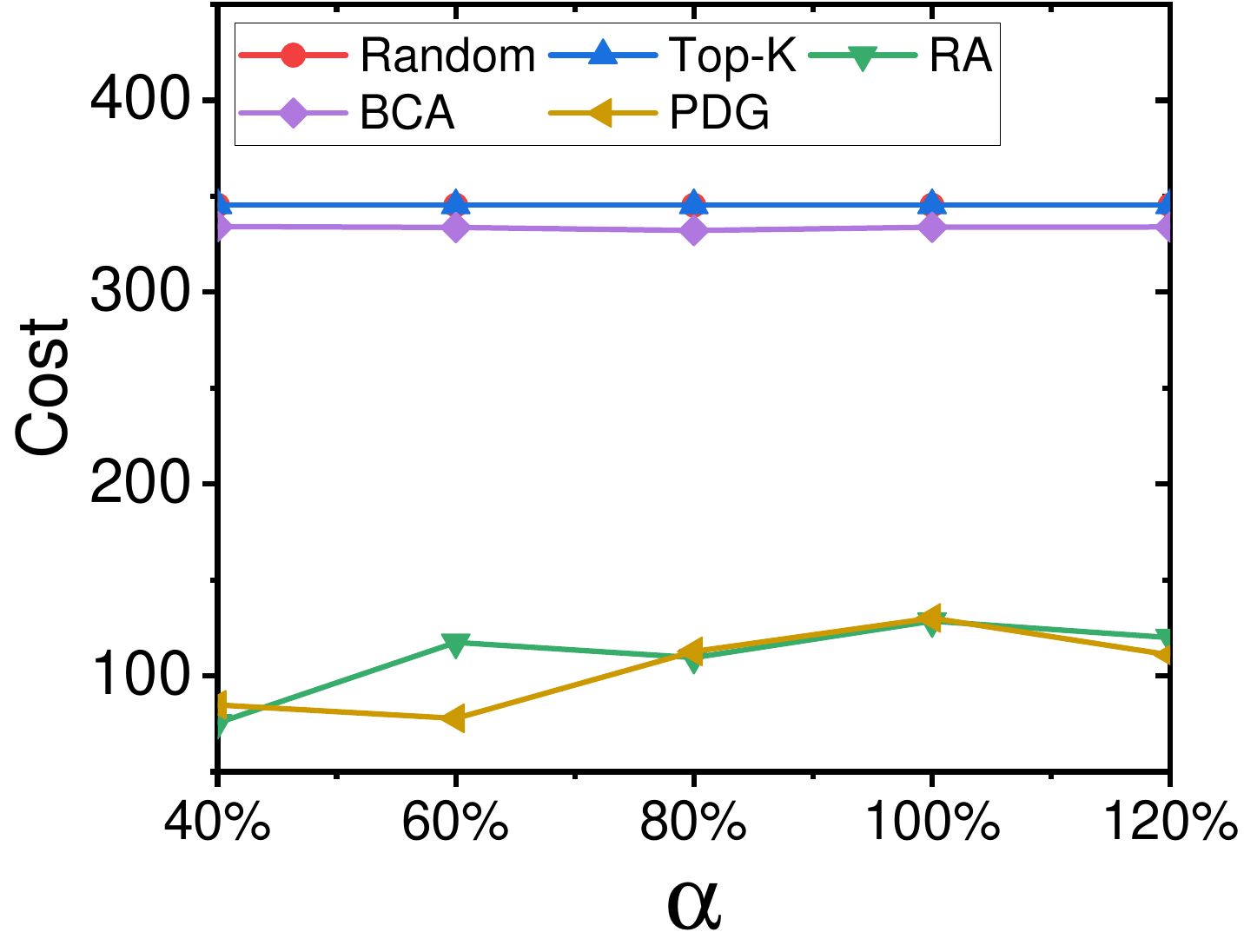} &
        \includegraphics[width=0.24\linewidth]{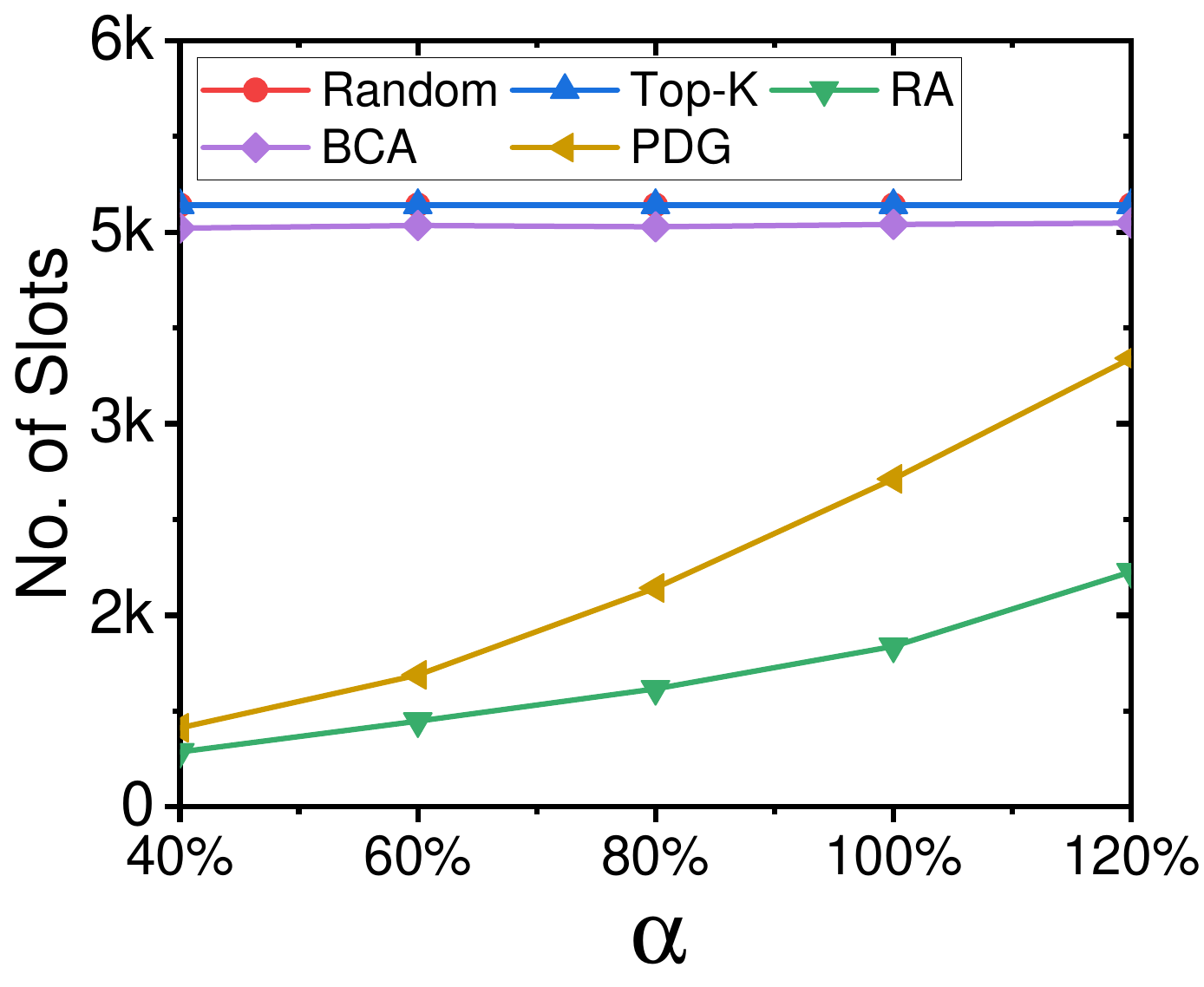} &
        \includegraphics[width=0.24\linewidth]{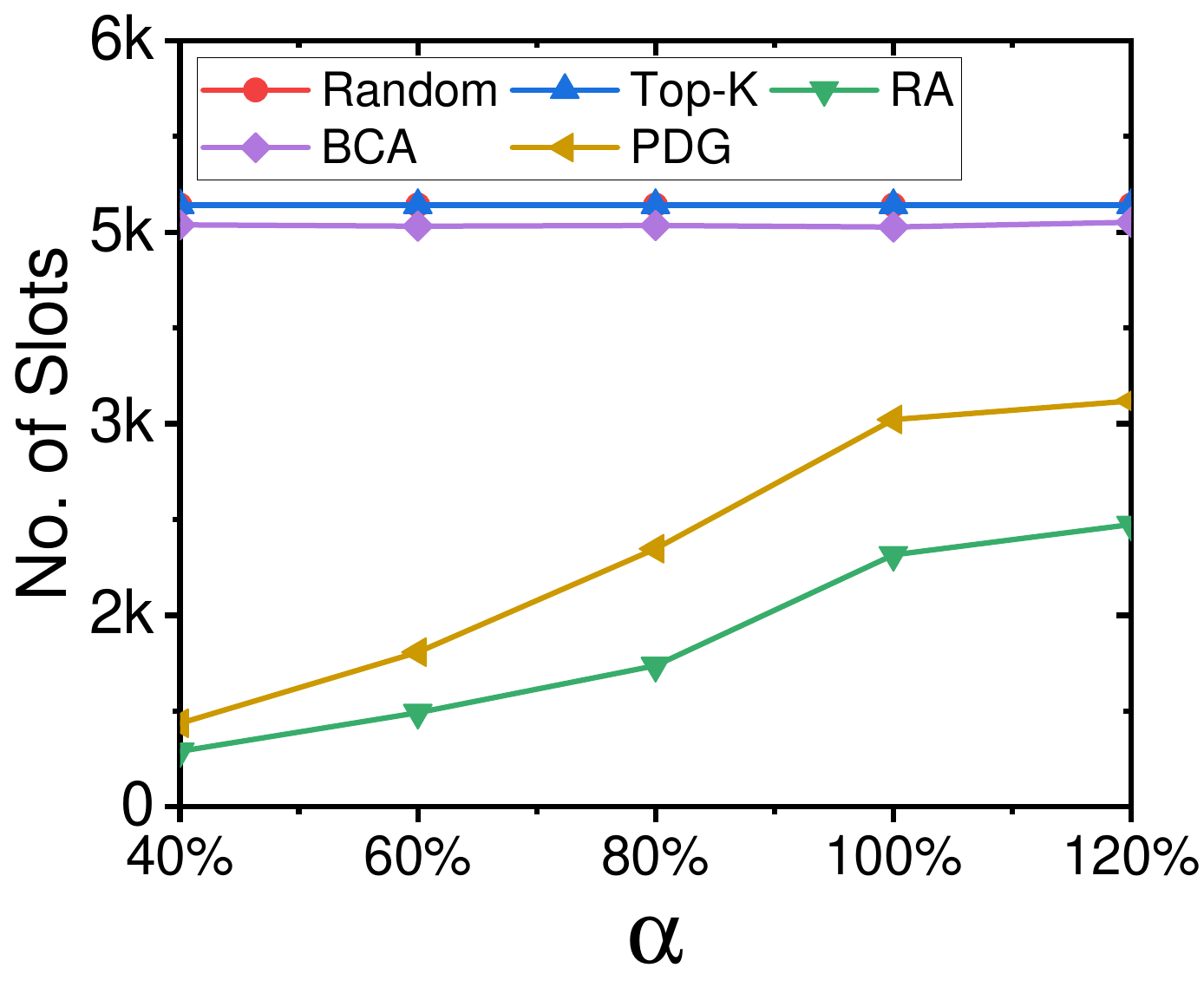} \\
        {\tiny (i) $\beta = 10\%,~ |\mathcal{P}| = 10$} &
        {\tiny (j) $\beta = 20\%,~ |\mathcal{P}| = 5$} &
        {\tiny (k) $\beta = 1\%,~ |\mathcal{P}| = 100$} &
        {\tiny ($\ell$) $\beta = 2\%,~ |\mathcal{P}| = 50$} \\
        \includegraphics[width=0.24\linewidth]{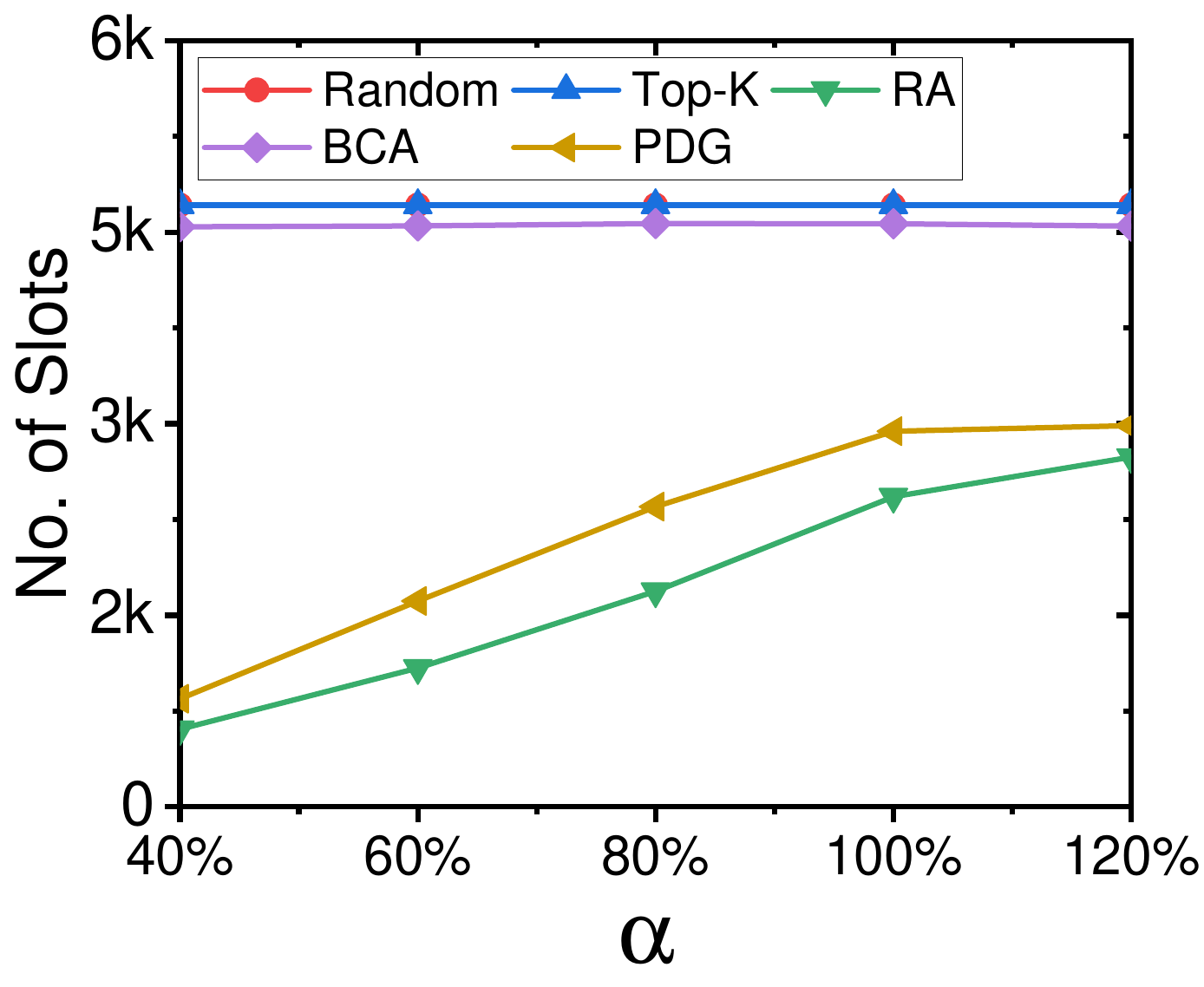} &
        \includegraphics[width=0.24\linewidth]{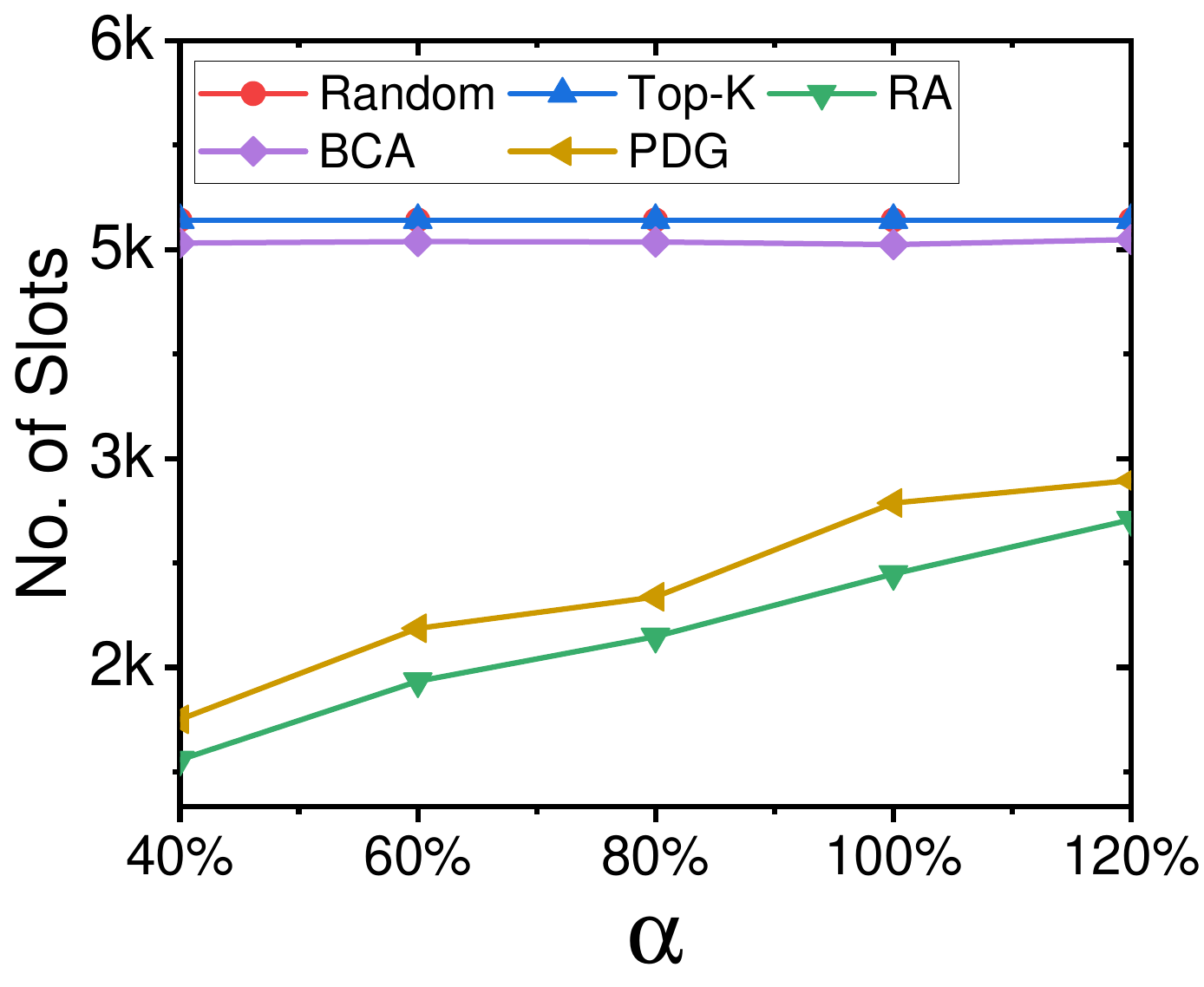} &
        \includegraphics[width=0.24\linewidth]{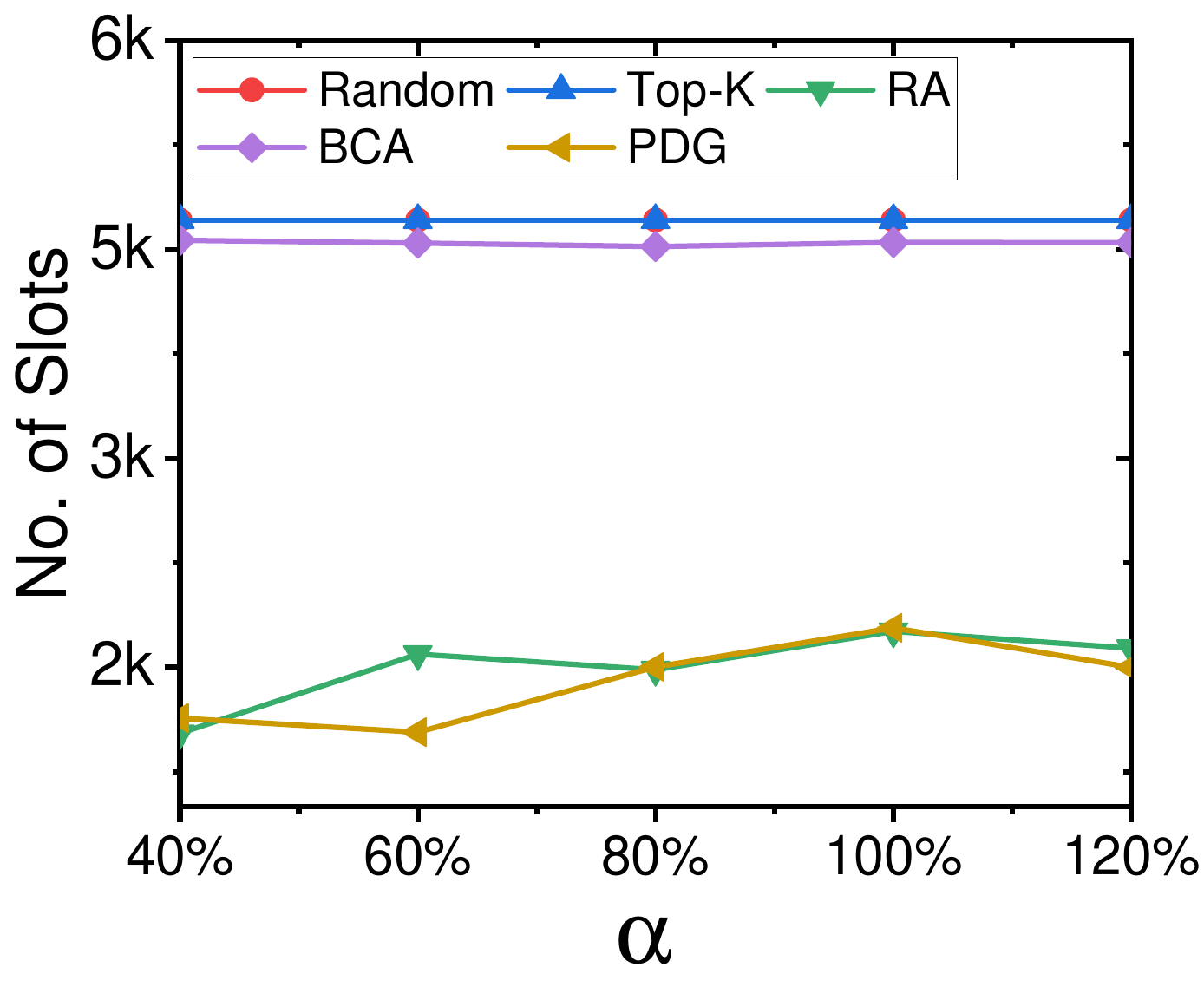} &
        \includegraphics[width=0.24\linewidth]{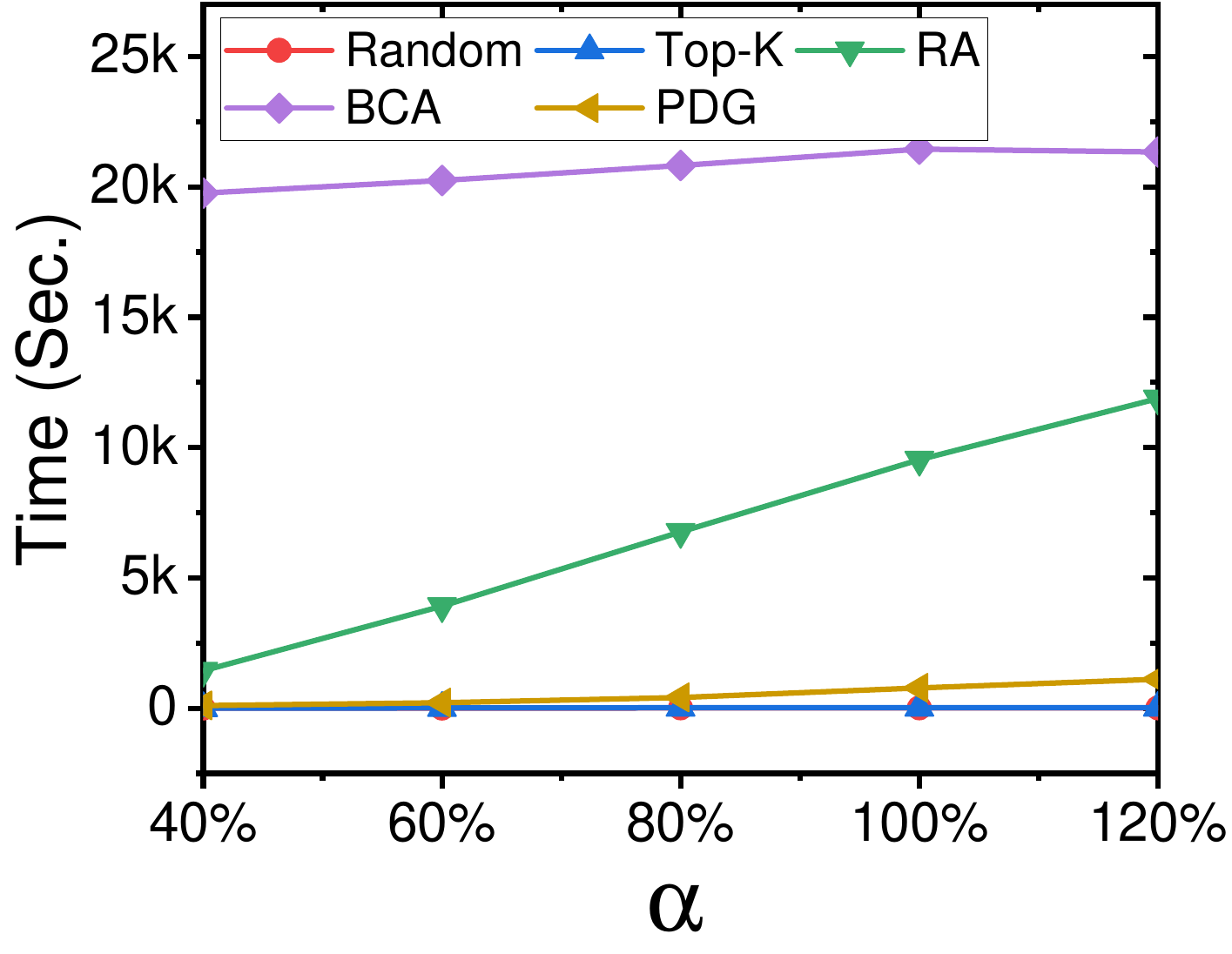} \\
        {\tiny (m) $\beta = 5\%,~ |\mathcal{P}| = 20$} &
        {\tiny (n) $\beta = 10\%,~ |\mathcal{P}| = 10$} &
        {\tiny (o) $\beta = 20\%,~ |\mathcal{P}| = 5$} &
        {\tiny (p) $\beta = 1\%,~ |\mathcal{P}| = 100$} \\
        \includegraphics[width=0.24\linewidth]{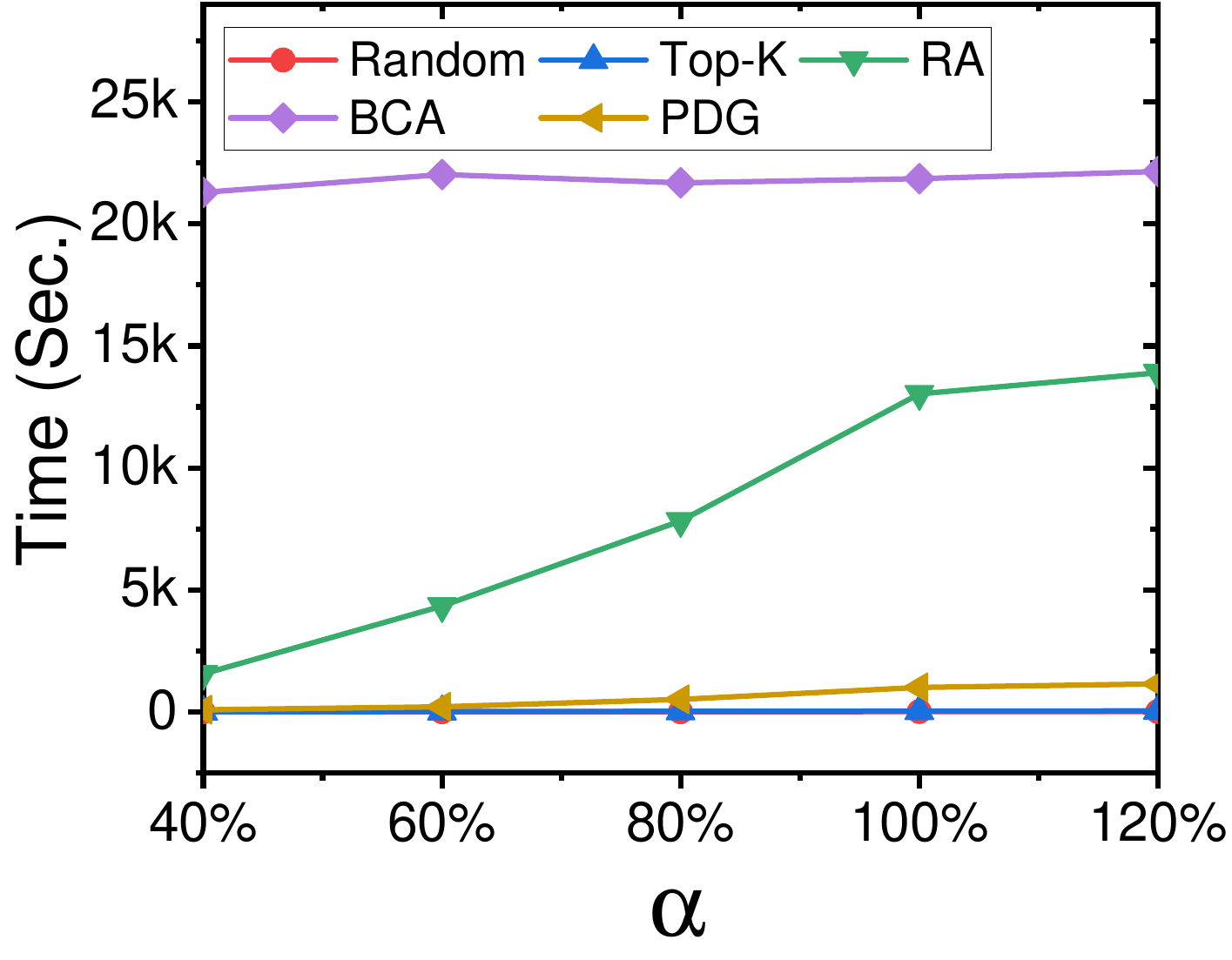} &
        \includegraphics[width=0.24\linewidth]{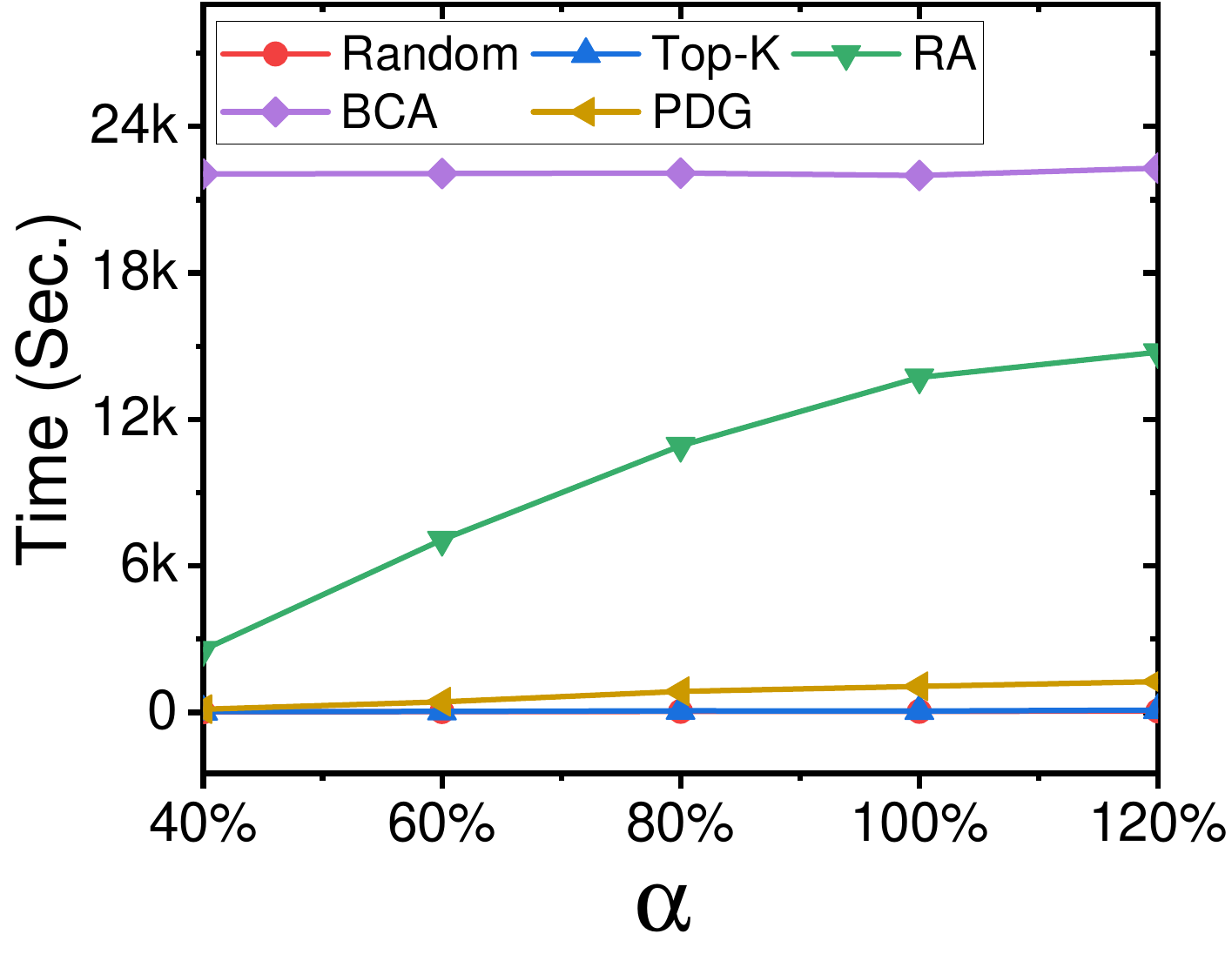} &
        \includegraphics[width=0.24\linewidth]{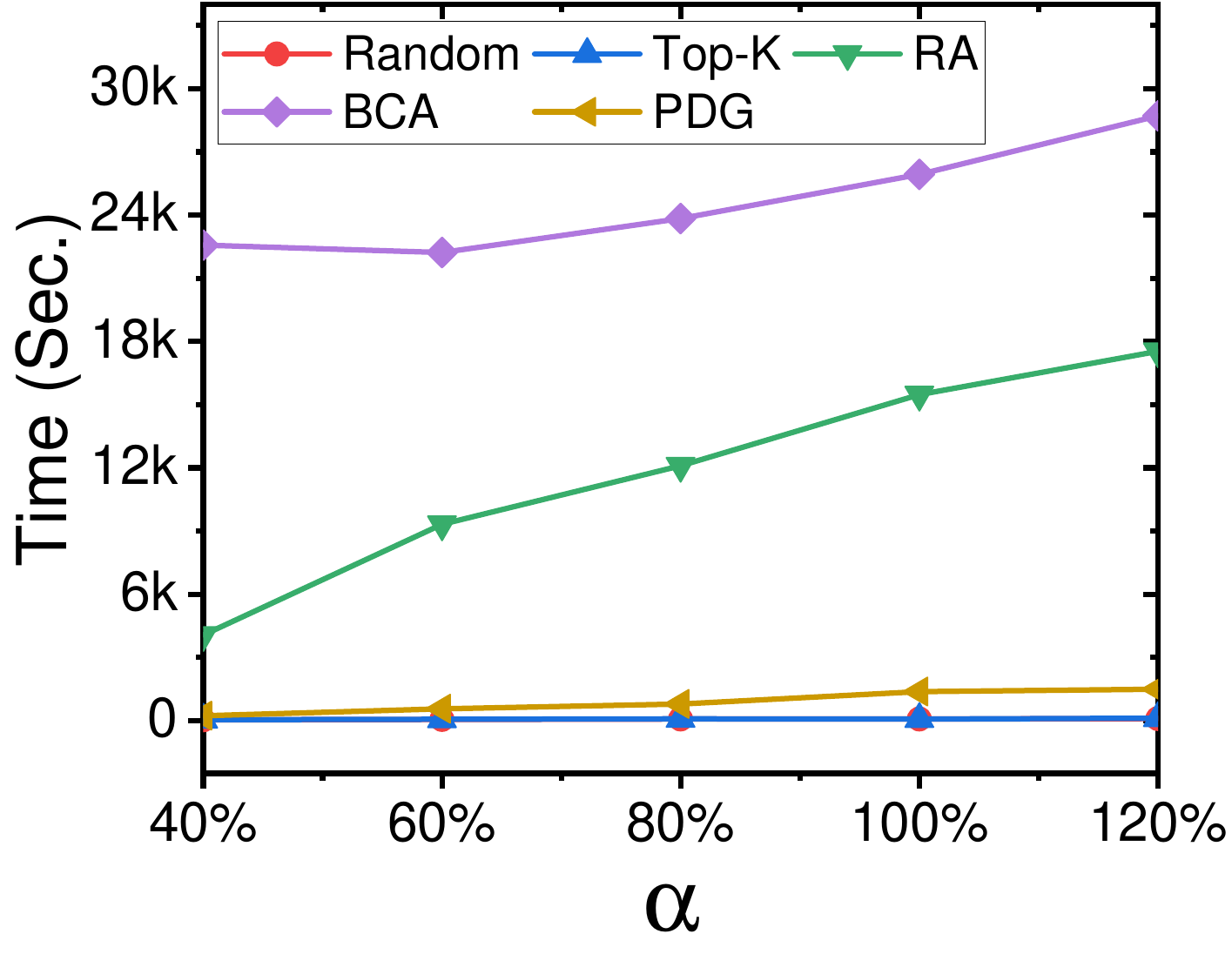} &
        \includegraphics[width=0.24\linewidth]{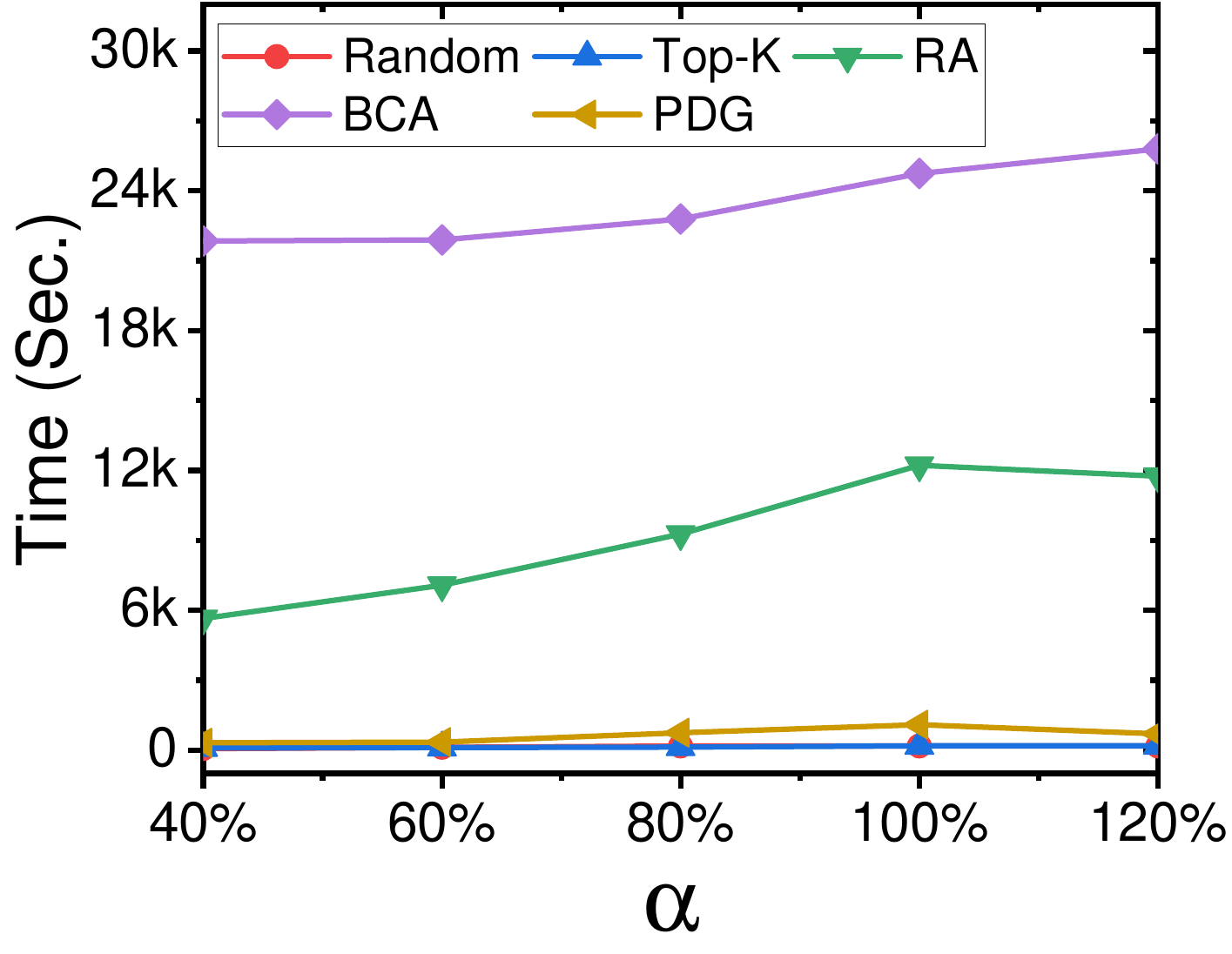} \\
        {\tiny (q) $\beta = 2\%,~ |\mathcal{P}| = 50$} &
        {\tiny (r) $\beta = 5\%,~ |\mathcal{P}| = 20$} &
        {\tiny (s) $\beta = 10\%,~ |\mathcal{P}| = 10$} &
        {\tiny (t) $\beta = 20\%,~ |\mathcal{P}| = 5$} \\
    \end{tabular}
    \caption{Varying $\beta$ and $|\mathcal{P}|$ value $\alpha$ vs. Influence $(a,b,c,d,e)$, $\alpha$ vs. Cost $(f,g,h,i,j)$. $\alpha$ vs. No. of Slots $(k,\ell,m,n,o)$, $\alpha$ vs. Time, and $(p,q,r,s,t)$ for LA Dataset}
    \label{Fig:LA_Combined}
\end{figure*}

\begin{figure*}[ht]
    \centering
    \setlength{\tabcolsep}{2pt} 
    \renewcommand{\arraystretch}{0.9} 
    \begin{tabular}{cccc}
        \includegraphics[width=0.24\linewidth]{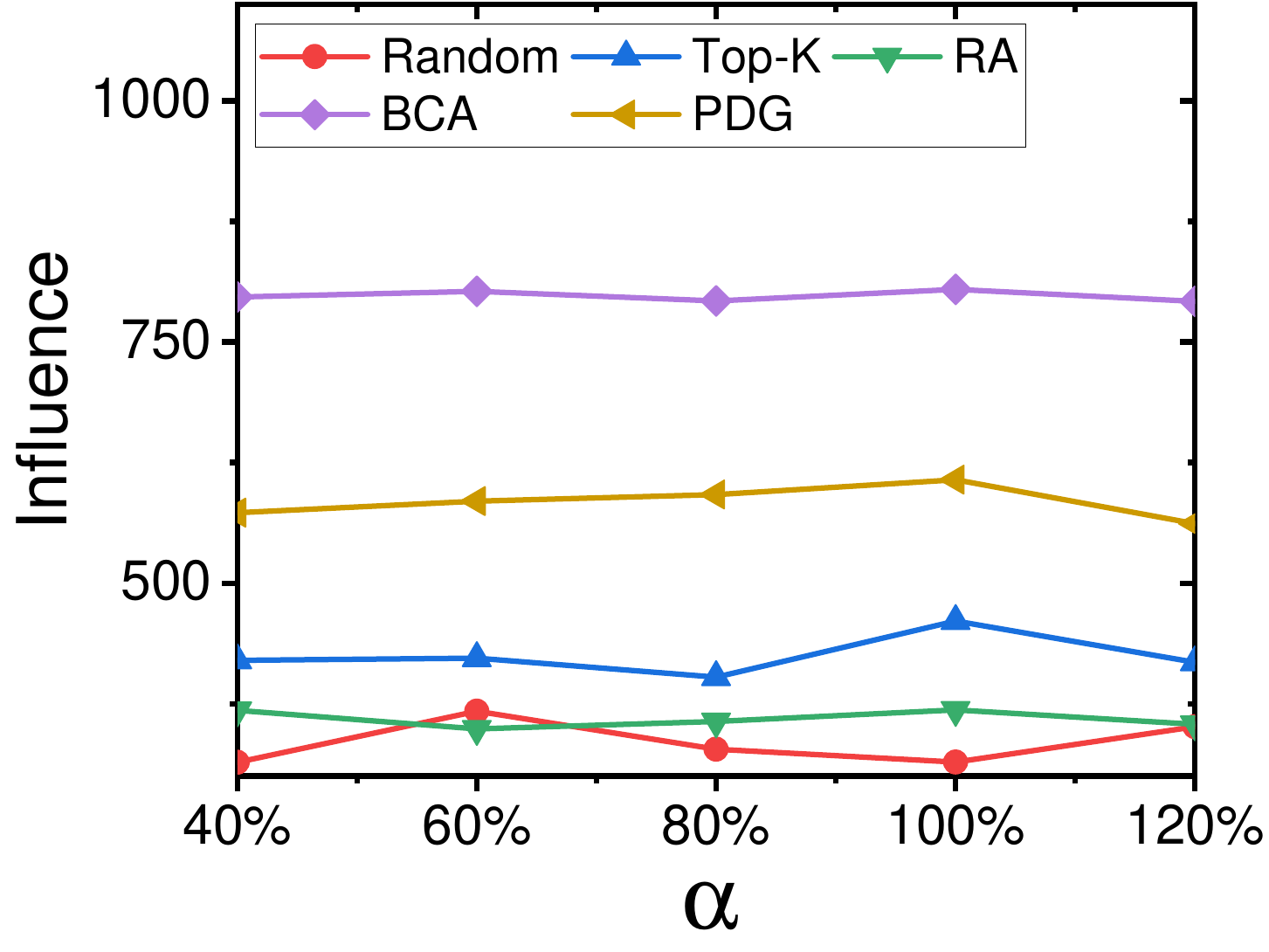} &
        \includegraphics[width=0.24\linewidth]{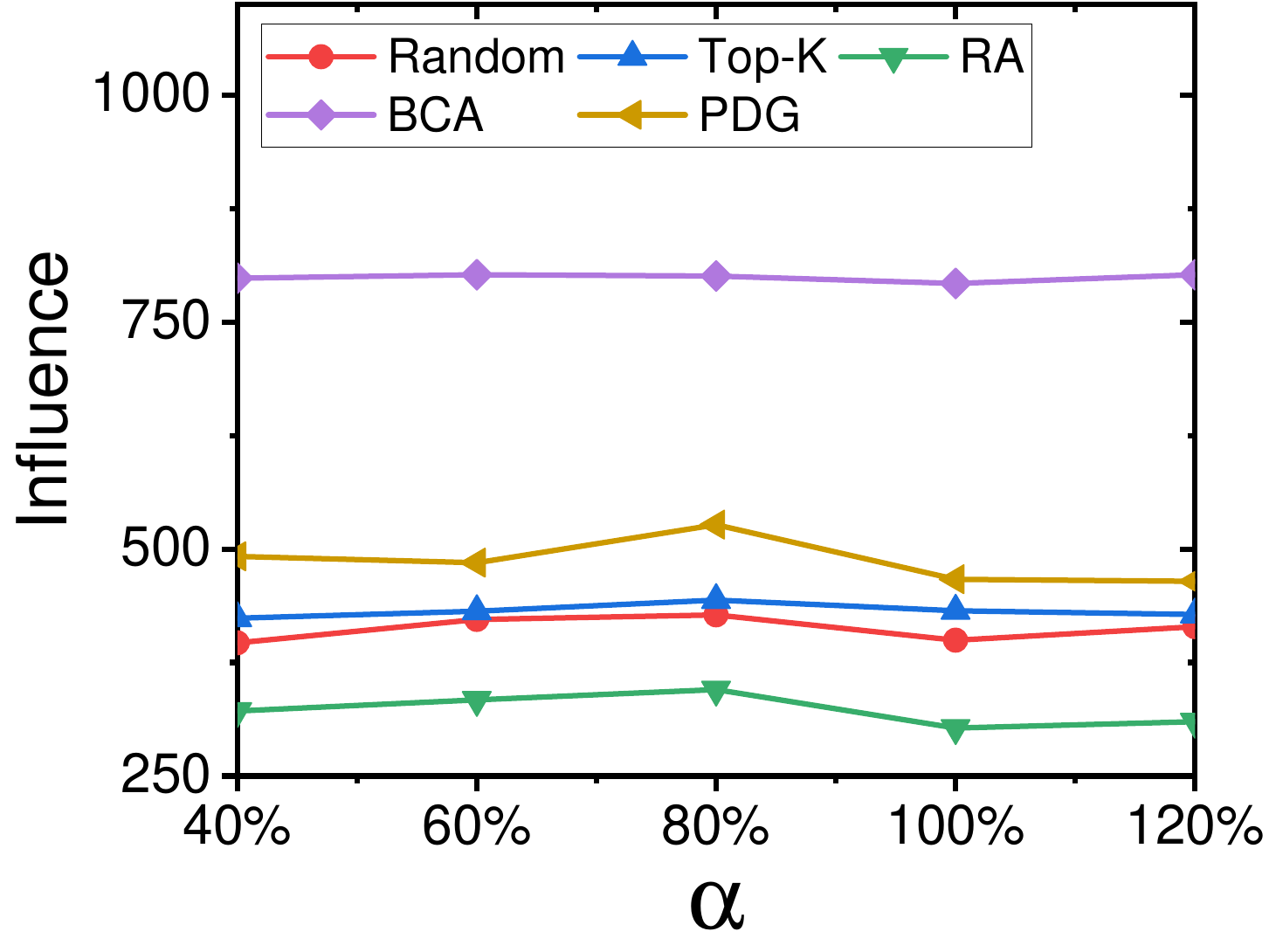} &
        \includegraphics[width=0.24\linewidth]{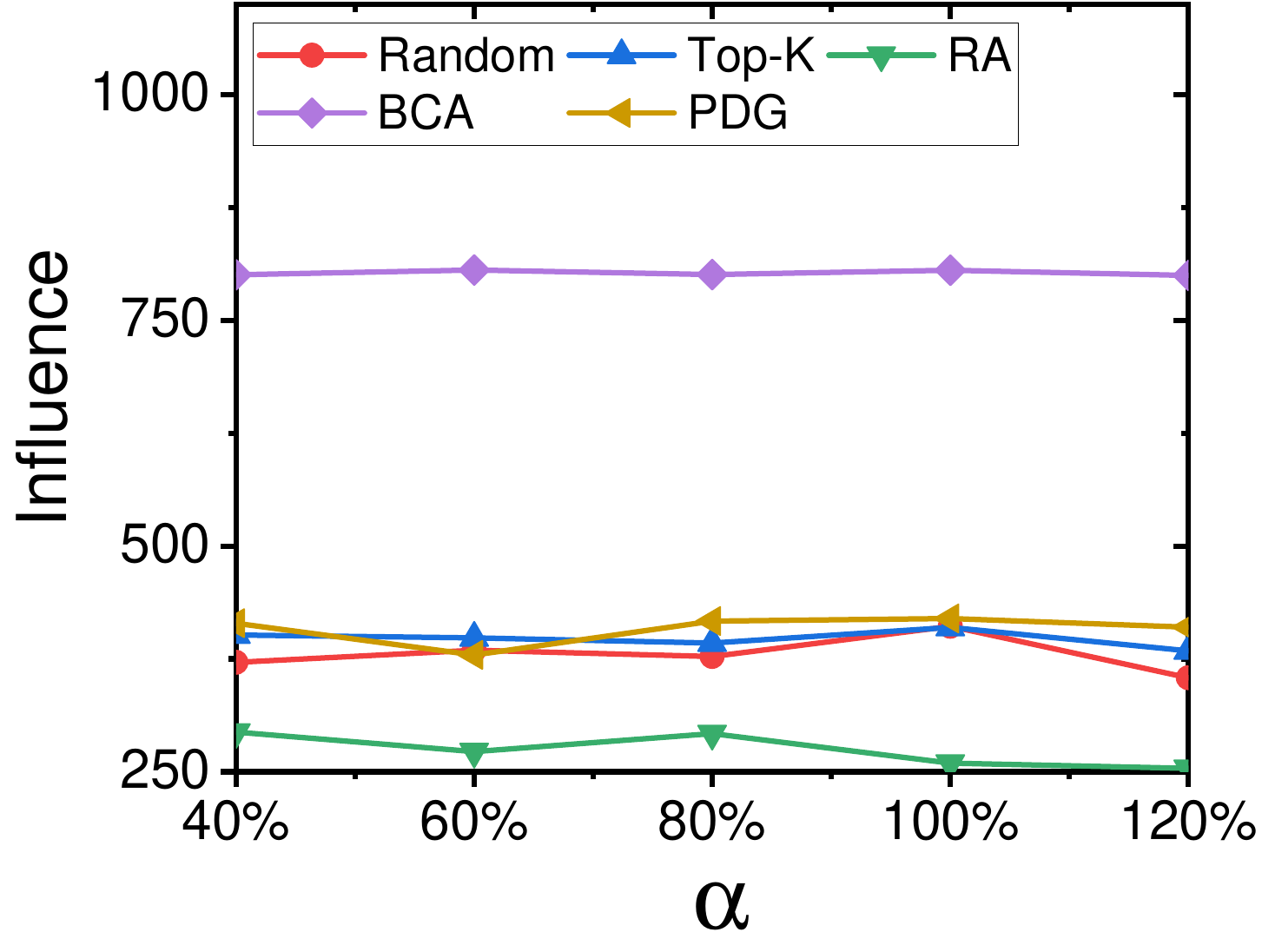} &
        \includegraphics[width=0.24\linewidth]{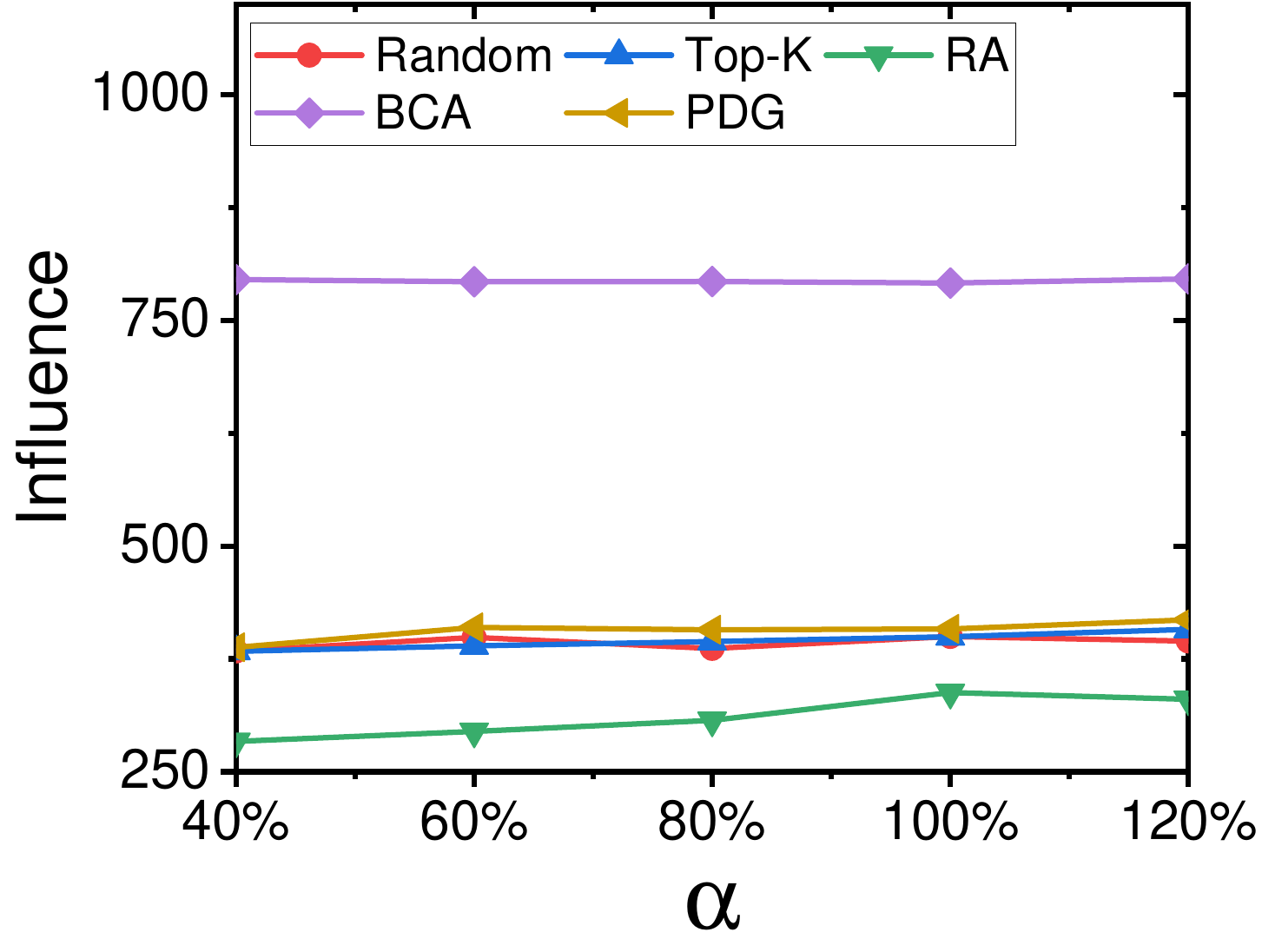} \\
        {\tiny (a) $\beta = 1\%,~ |\mathcal{P}| = 100$} &
        {\tiny (b) $\beta = 2\%,~ |\mathcal{P}| = 50$} &
        {\tiny (c) $\beta = 5\%,~ |\mathcal{P}| = 20$} &
        {\tiny (d) $\beta = 10\%,~ |\mathcal{P}| = 10$} \\
        \includegraphics[width=0.24\linewidth]{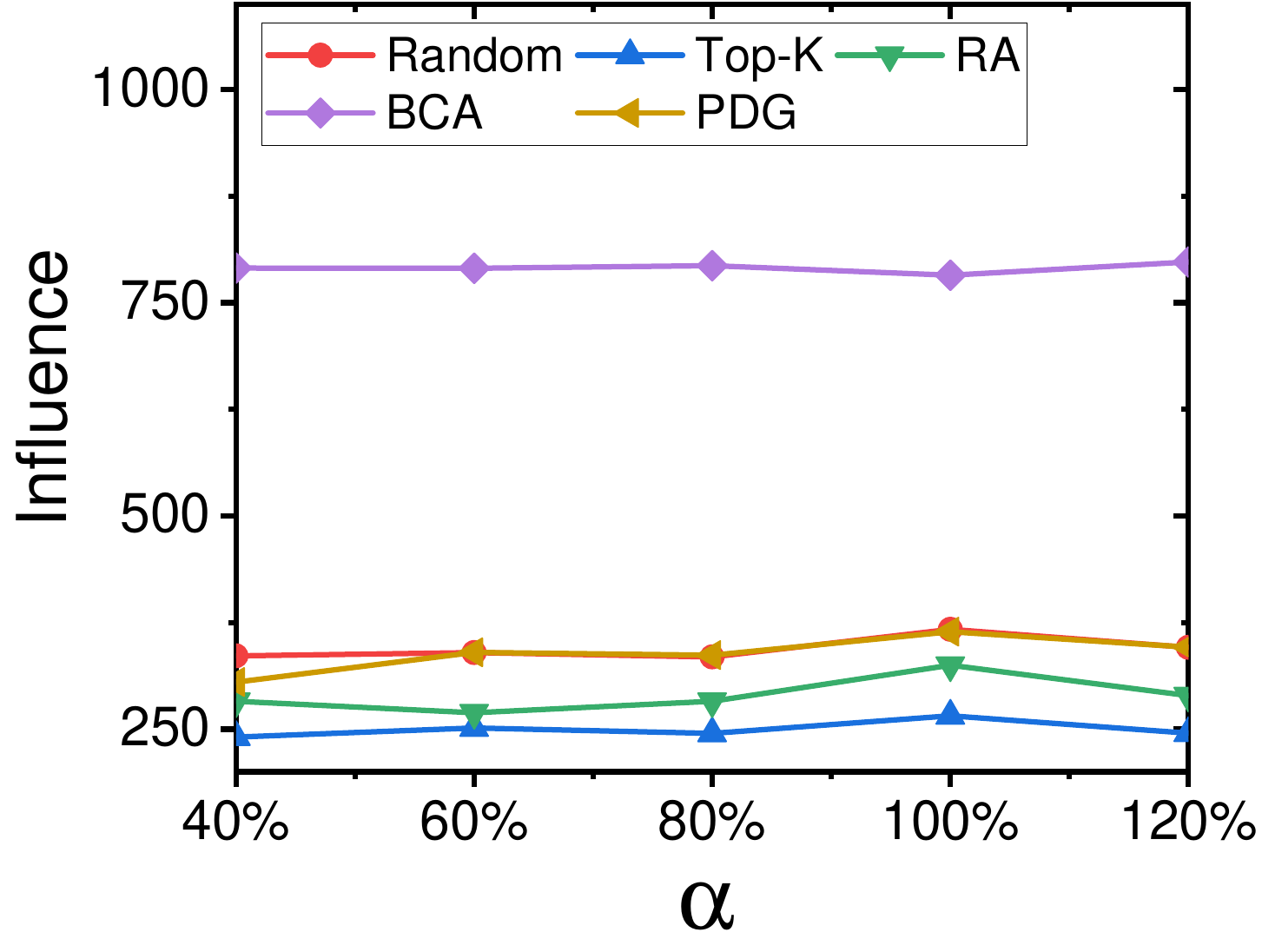} &
        \includegraphics[width=0.24\linewidth]{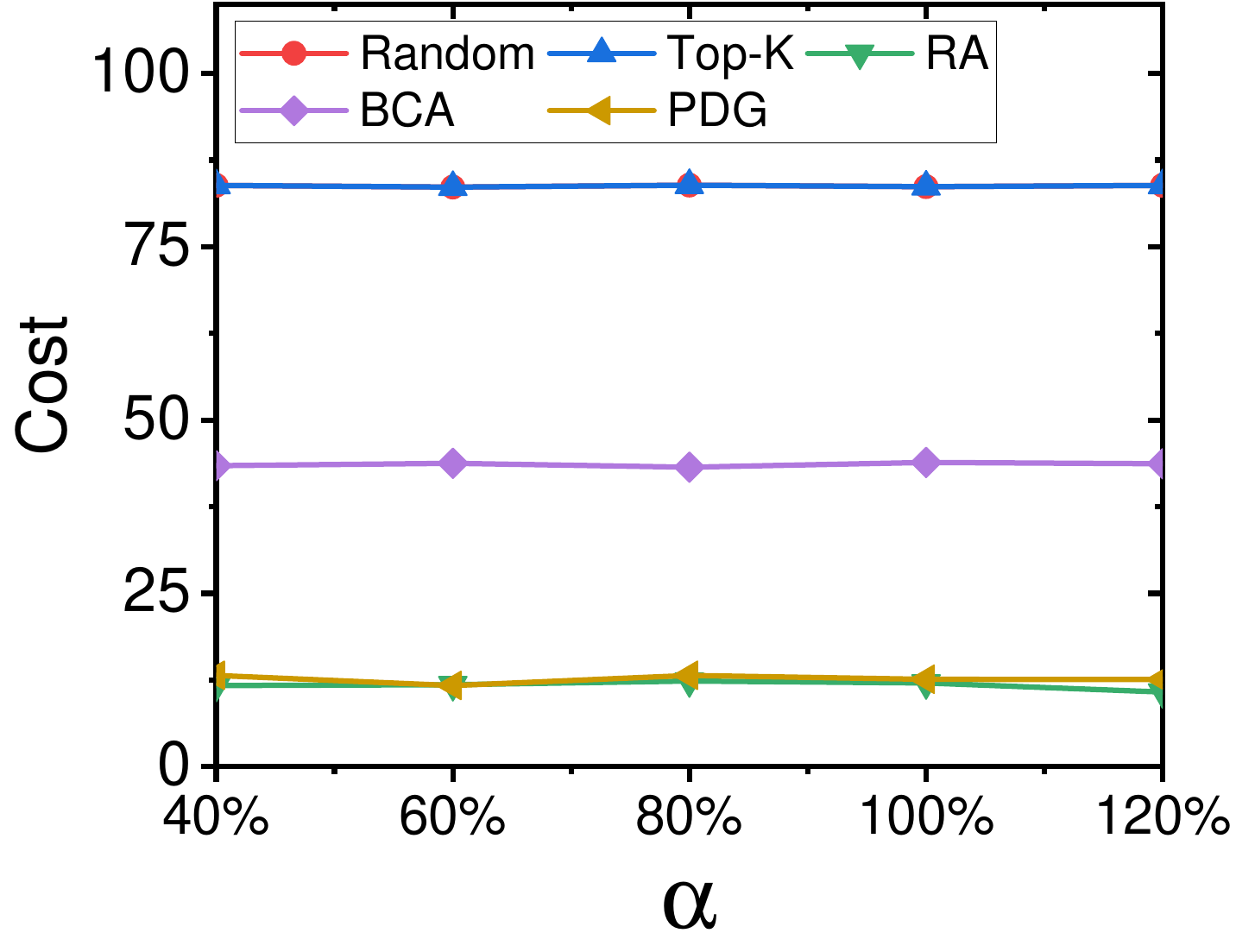} &
        \includegraphics[width=0.24\linewidth]{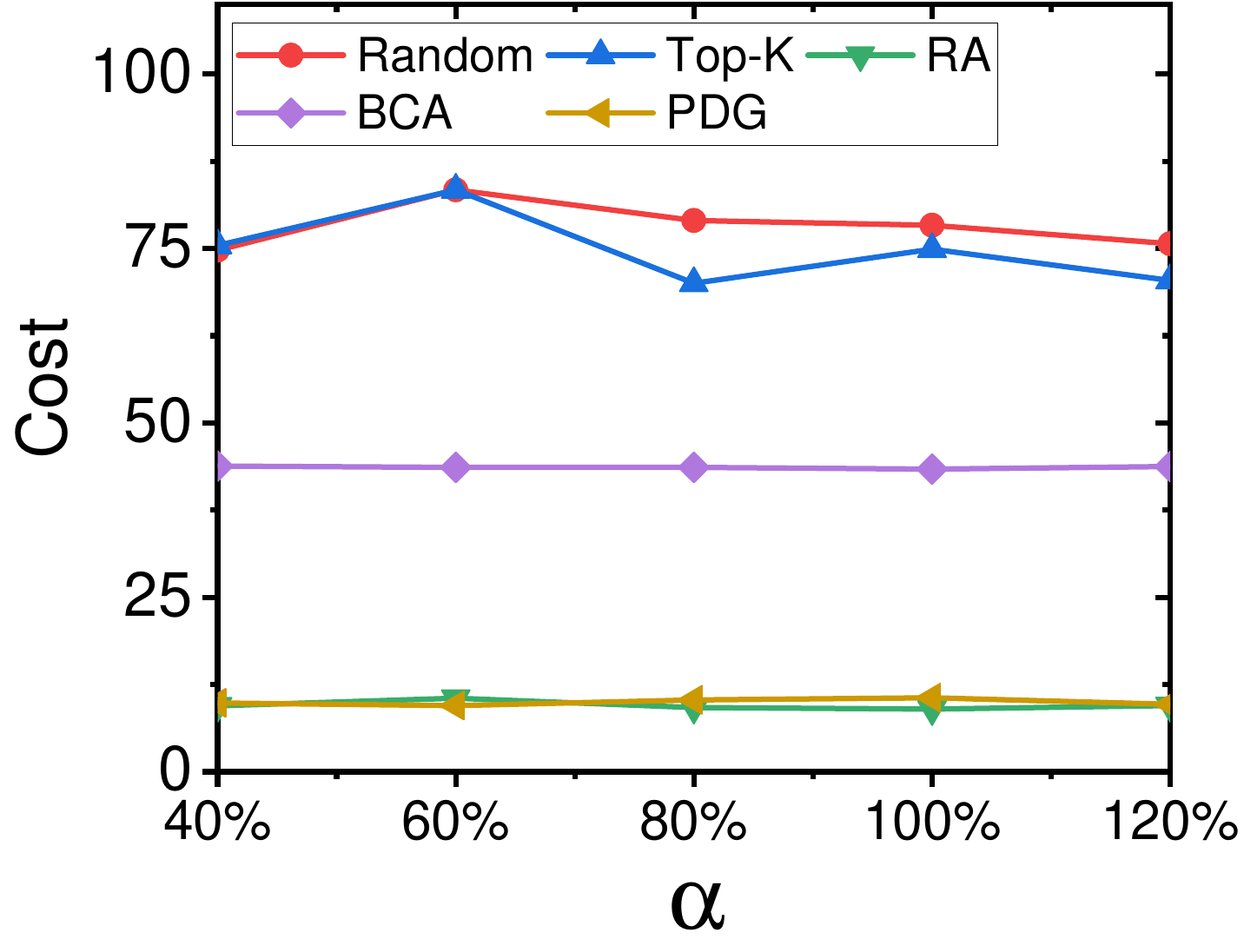} &
        \includegraphics[width=0.24\linewidth]{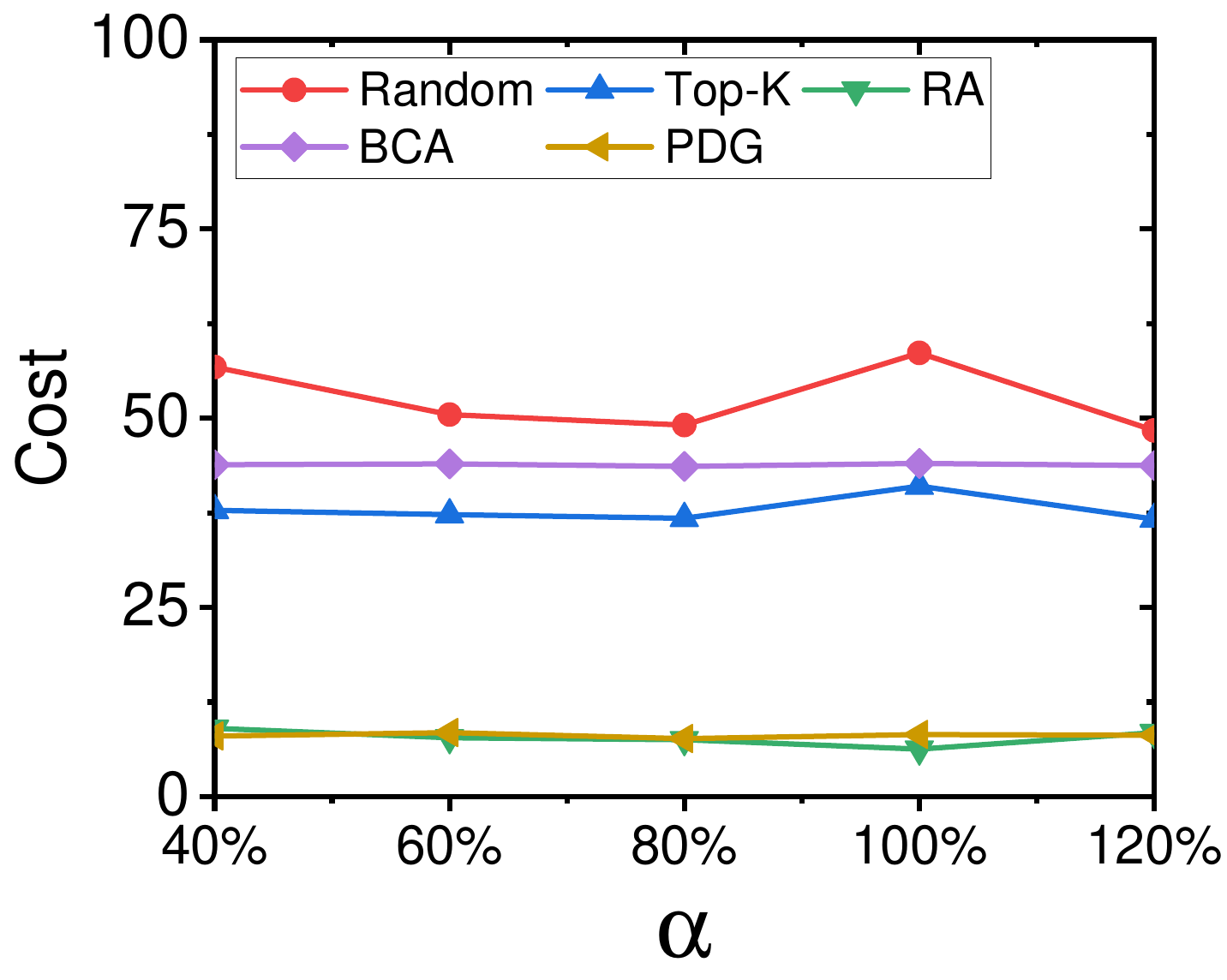} \\
        {\tiny (e) $\beta = 20\%,~ |\mathcal{P}| = 5$} &
        {\tiny (f) $\beta = 1\%,~ |\mathcal{P}| = 100$} &
        {\tiny (g) $\beta = 2\%,~ |\mathcal{P}| = 50$} &
        {\tiny (h) $\beta = 5\%,~ |\mathcal{P}| = 20$} \\
        \includegraphics[width=0.24\linewidth]{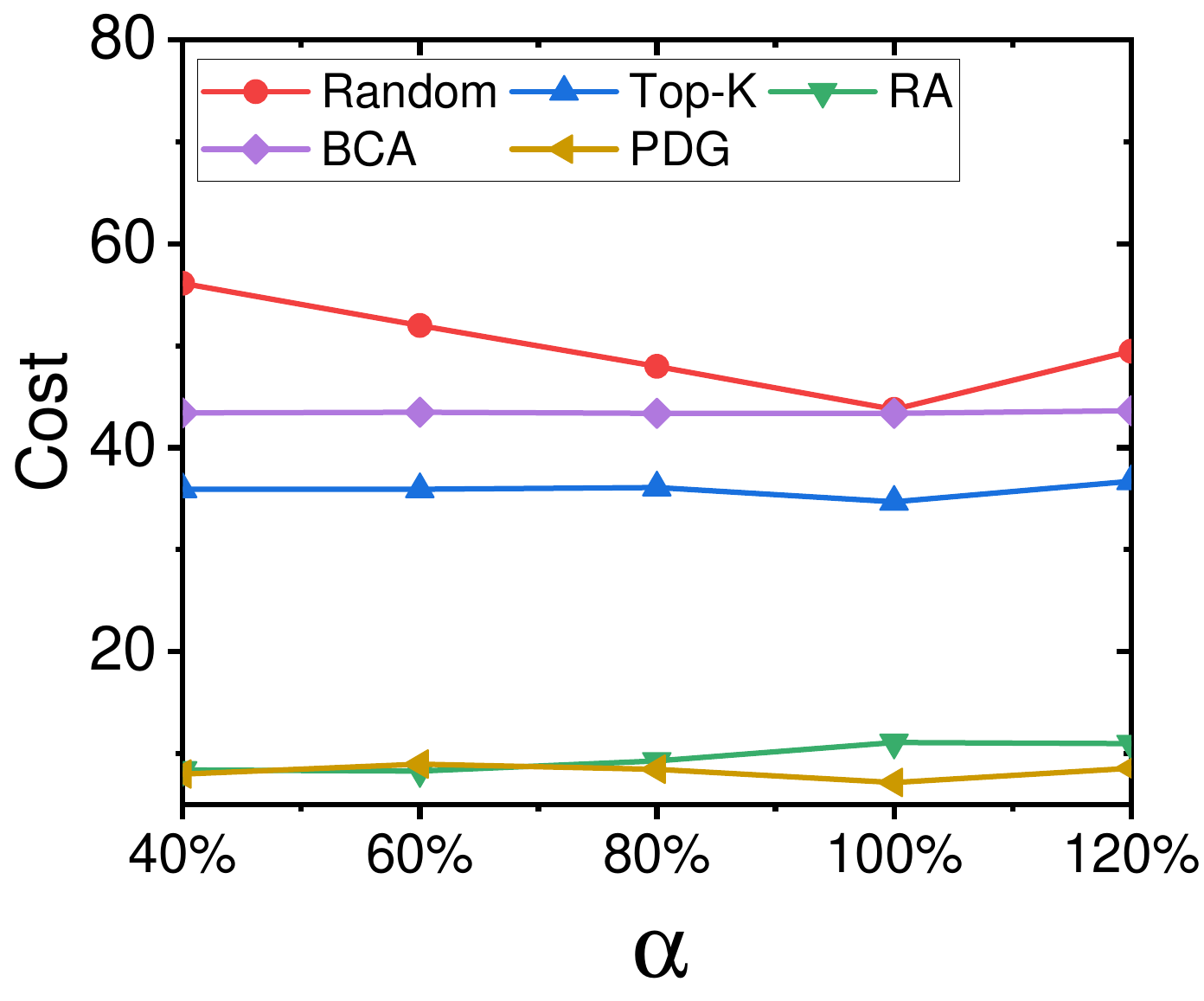} &
        \includegraphics[width=0.24\linewidth]{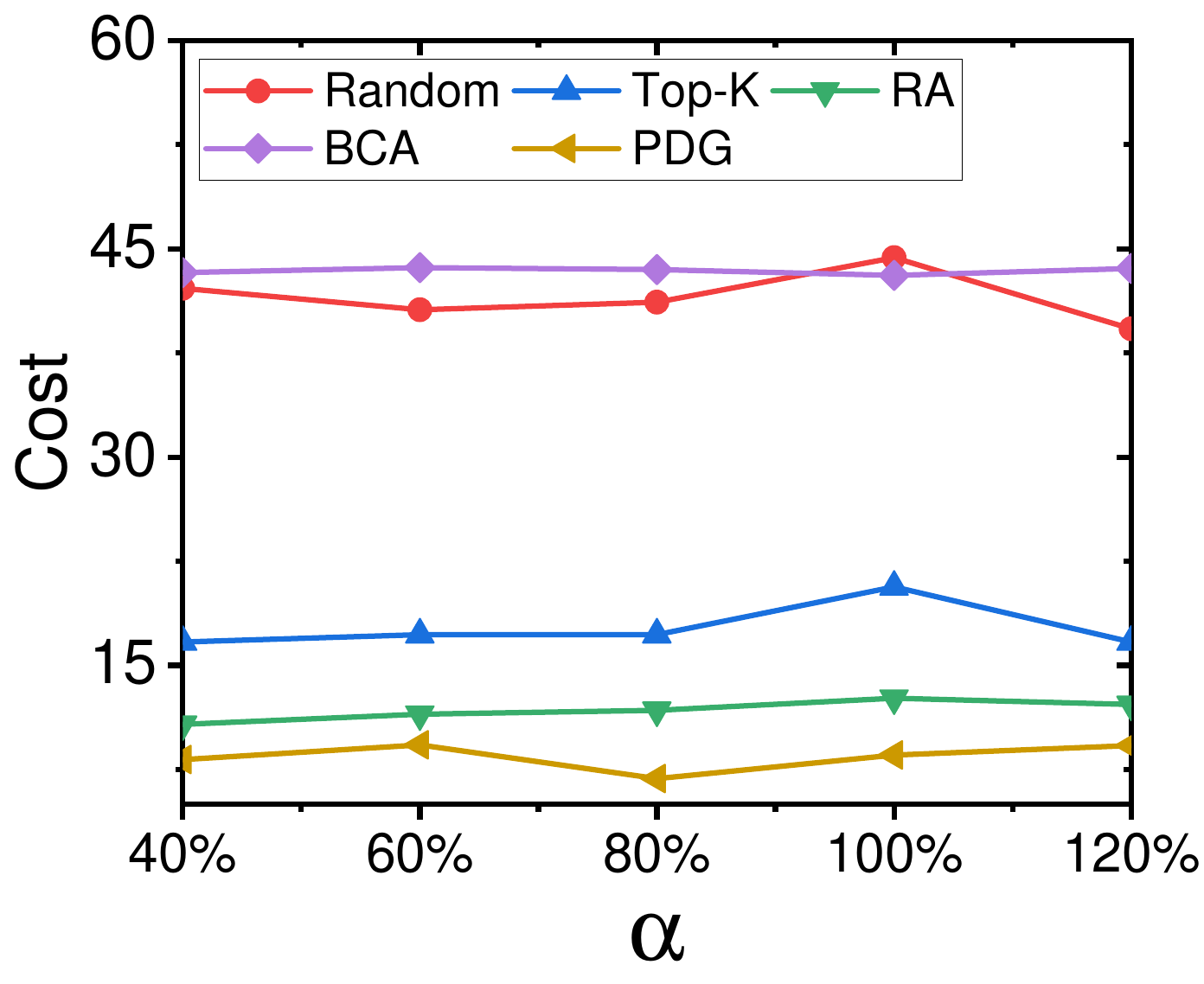} &
        \includegraphics[width=0.24\linewidth]{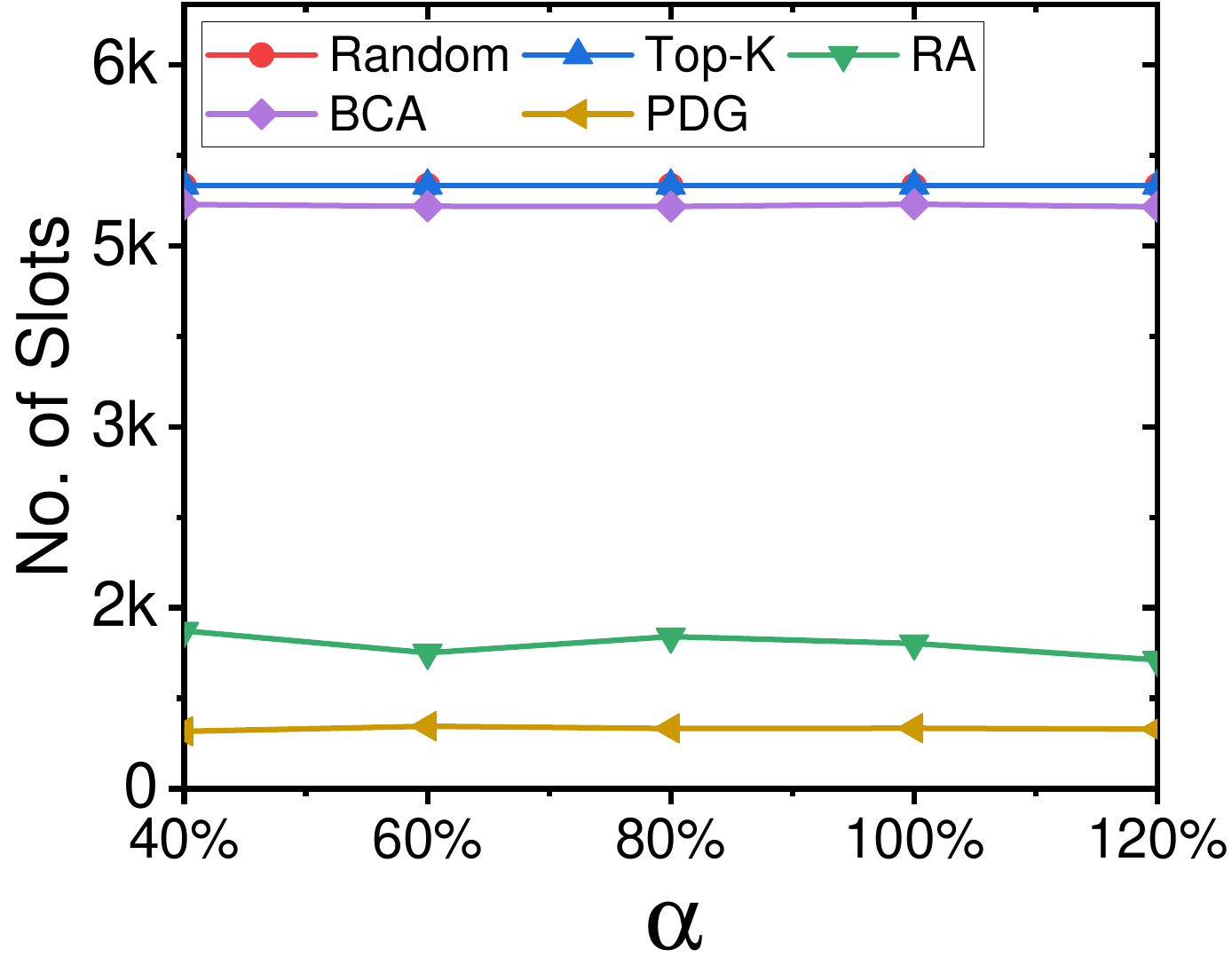} &
        \includegraphics[width=0.24\linewidth]{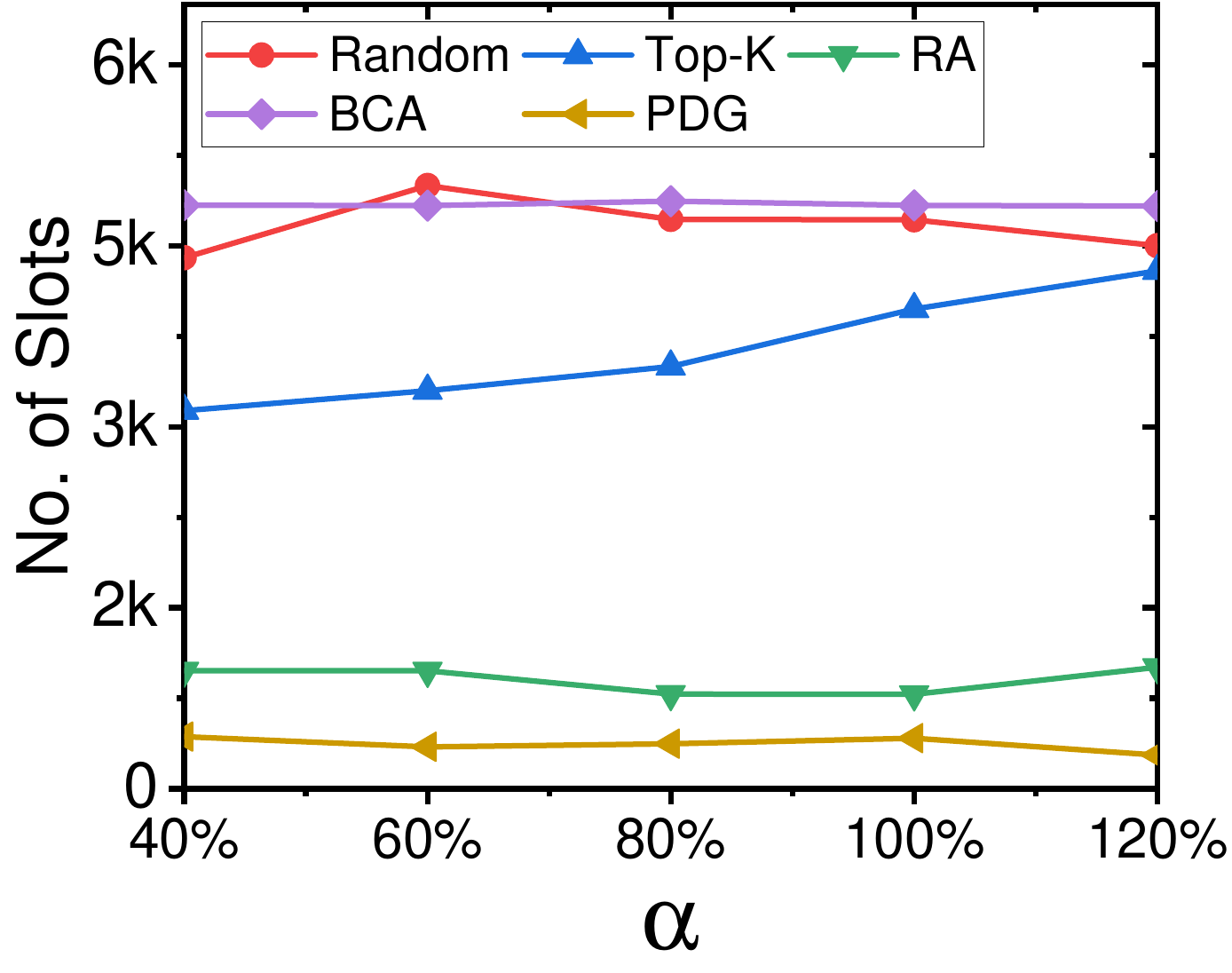} \\
        {\tiny (i) $\beta = 10\%,~ |\mathcal{P}| = 10$} &
        {\tiny (j) $\beta = 20\%,~ |\mathcal{P}| = 5$} &
        {\tiny (k) $\beta = 1\%,~ |\mathcal{P}| = 100$} &
        {\tiny ($\ell$) $\beta = 2\%,~ |\mathcal{P}| = 50$} \\
        \includegraphics[width=0.24\linewidth]{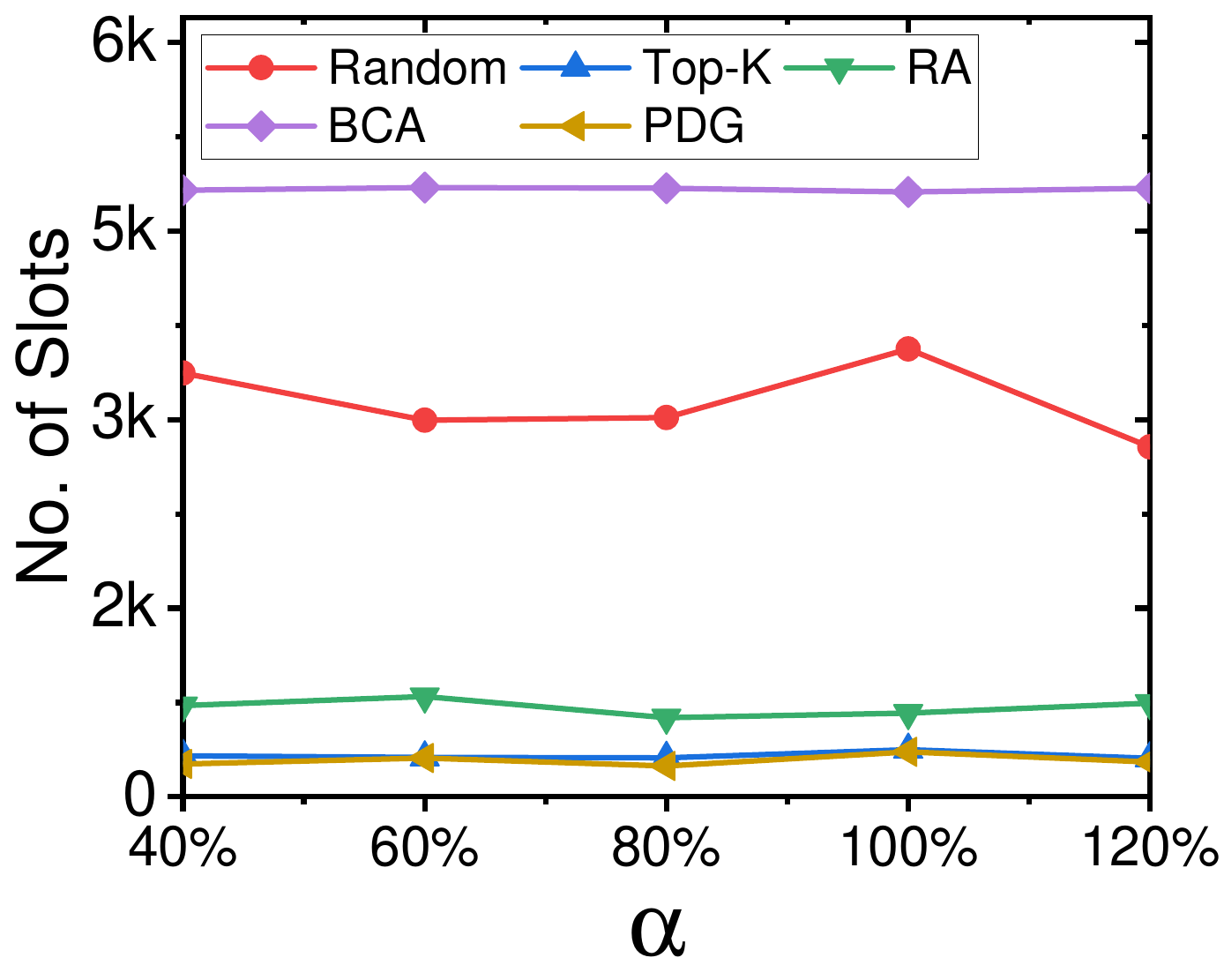} &
        \includegraphics[width=0.24\linewidth]{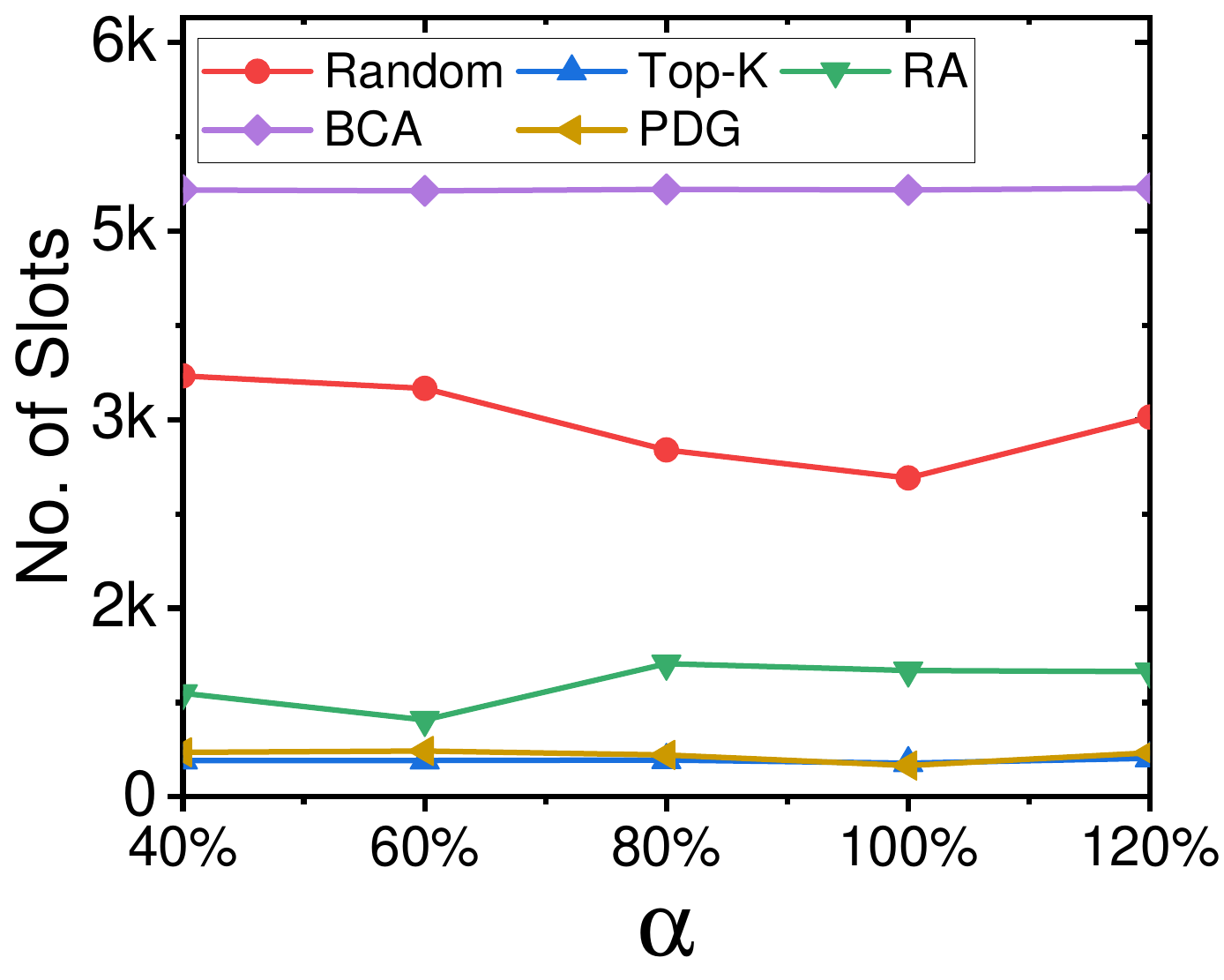} &
        \includegraphics[width=0.24\linewidth]{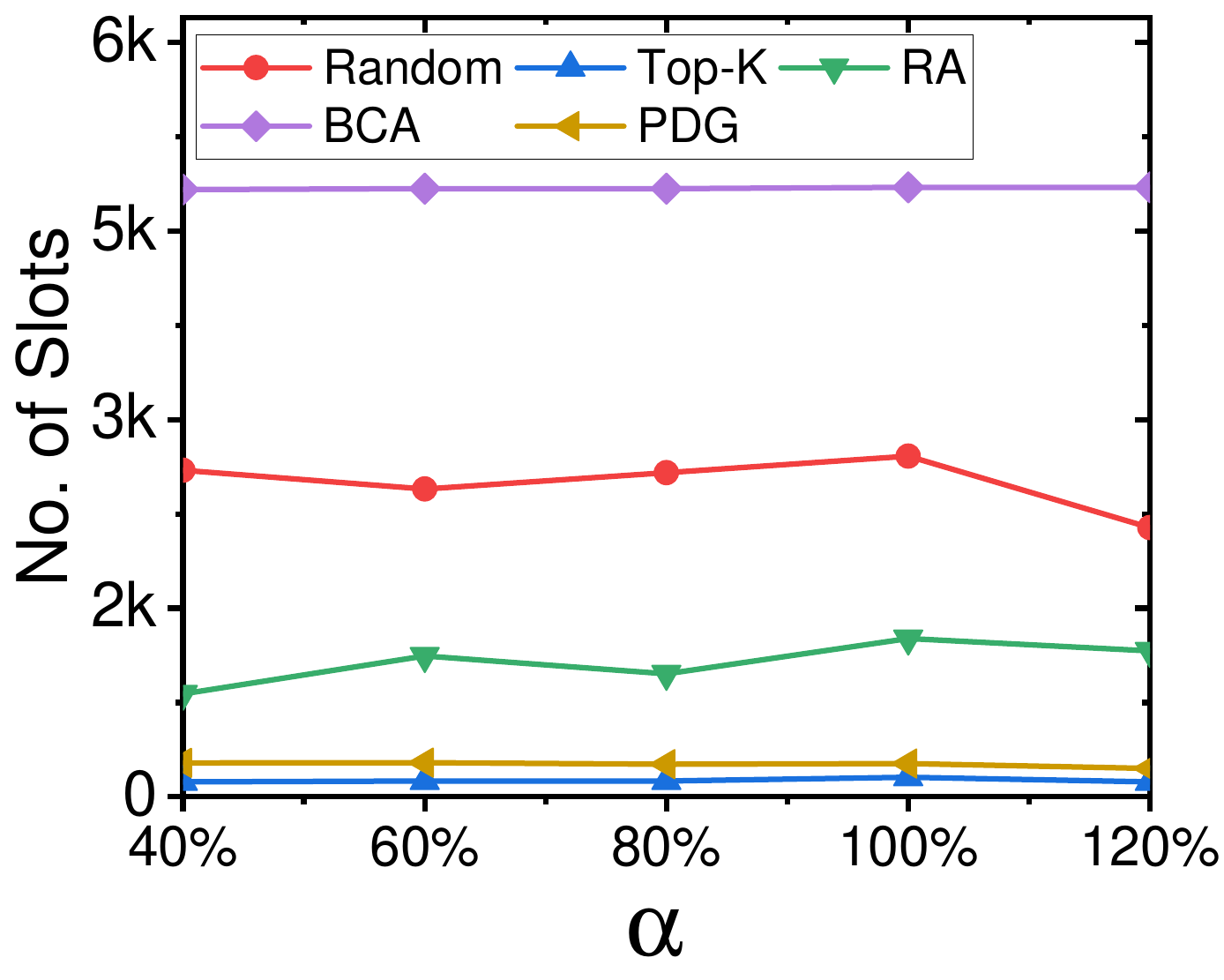} &
        \includegraphics[width=0.24\linewidth]{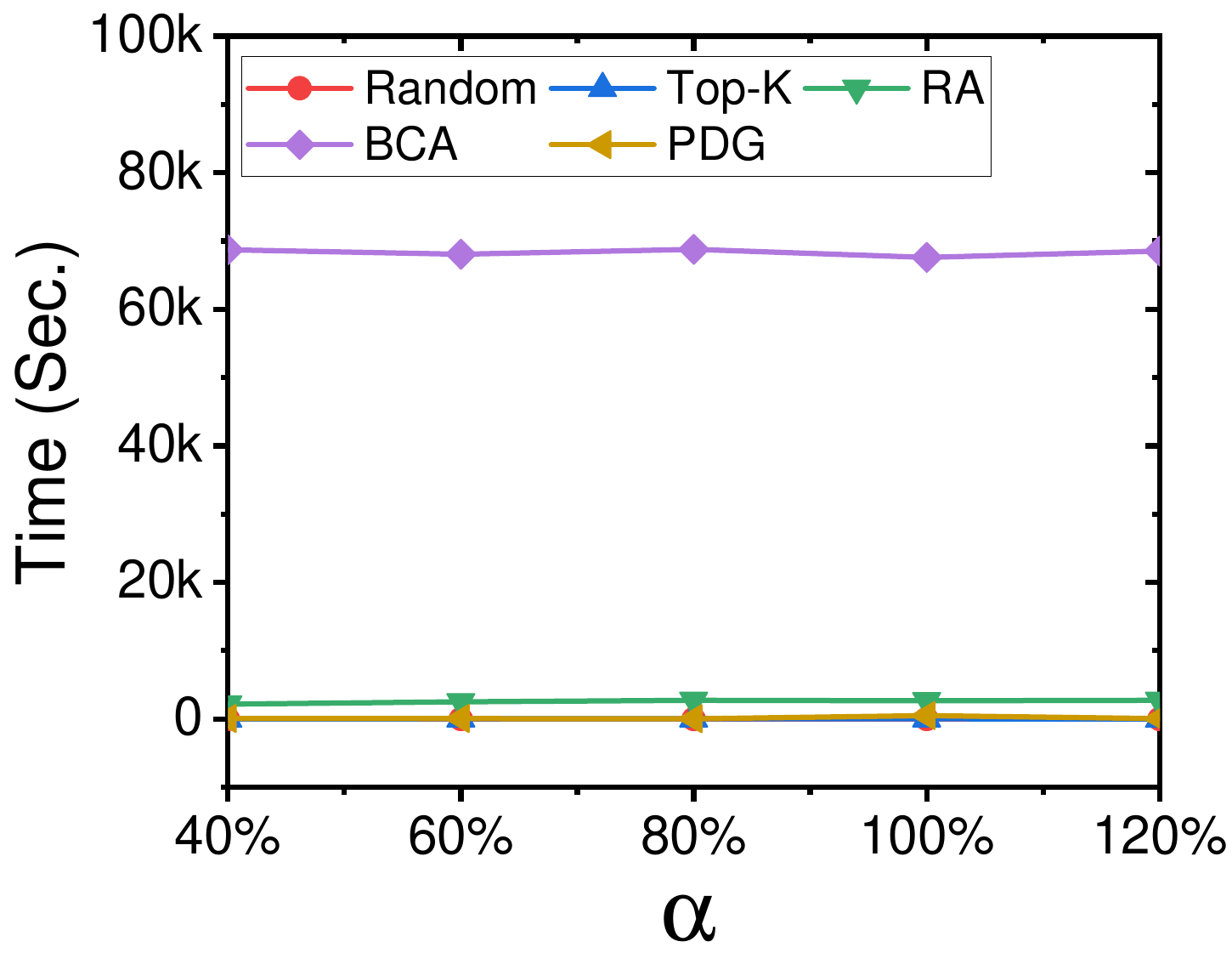} \\
        {\tiny (m) $\beta = 5\%,~ |\mathcal{P}| = 20$} &
        {\tiny (n) $\beta = 10\%,~ |\mathcal{P}| = 10$} &
        {\tiny (o) $\beta = 20\%,~ |\mathcal{P}| = 5$} &
        {\tiny (p) $\beta = 1\%,~ |\mathcal{P}| = 100$} \\
        \includegraphics[width=0.24\linewidth]{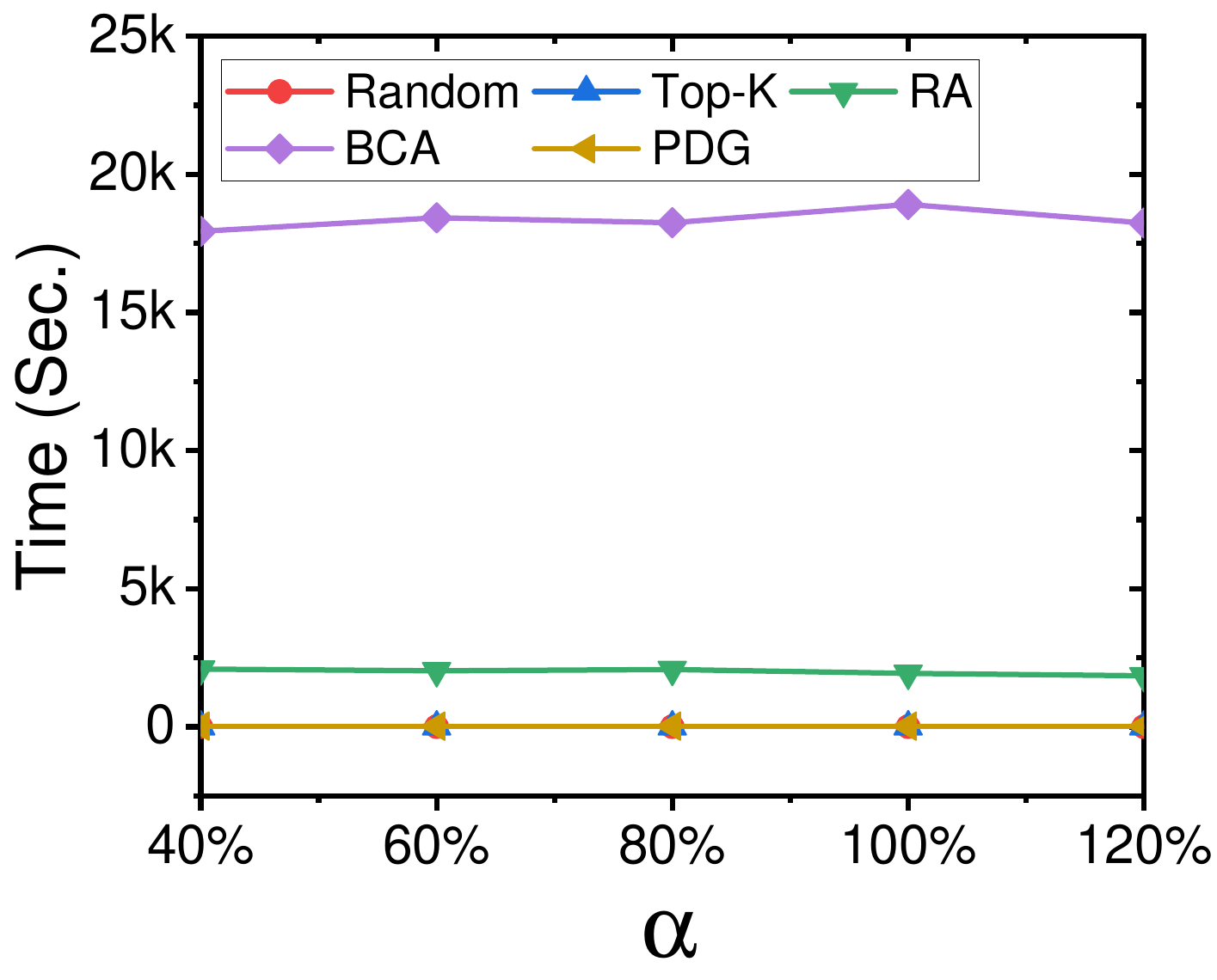} &
        \includegraphics[width=0.24\linewidth]{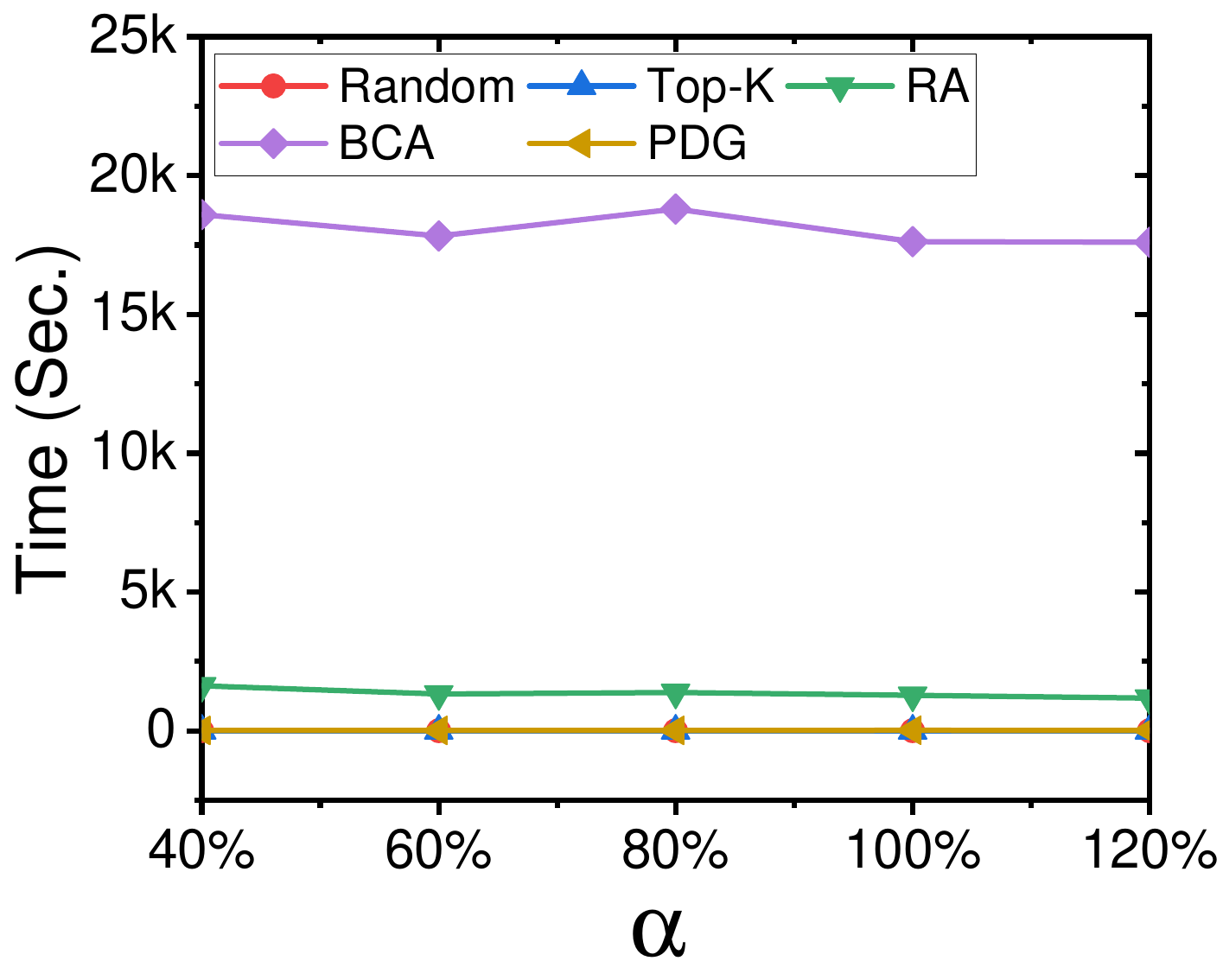} &
        \includegraphics[width=0.24\linewidth]{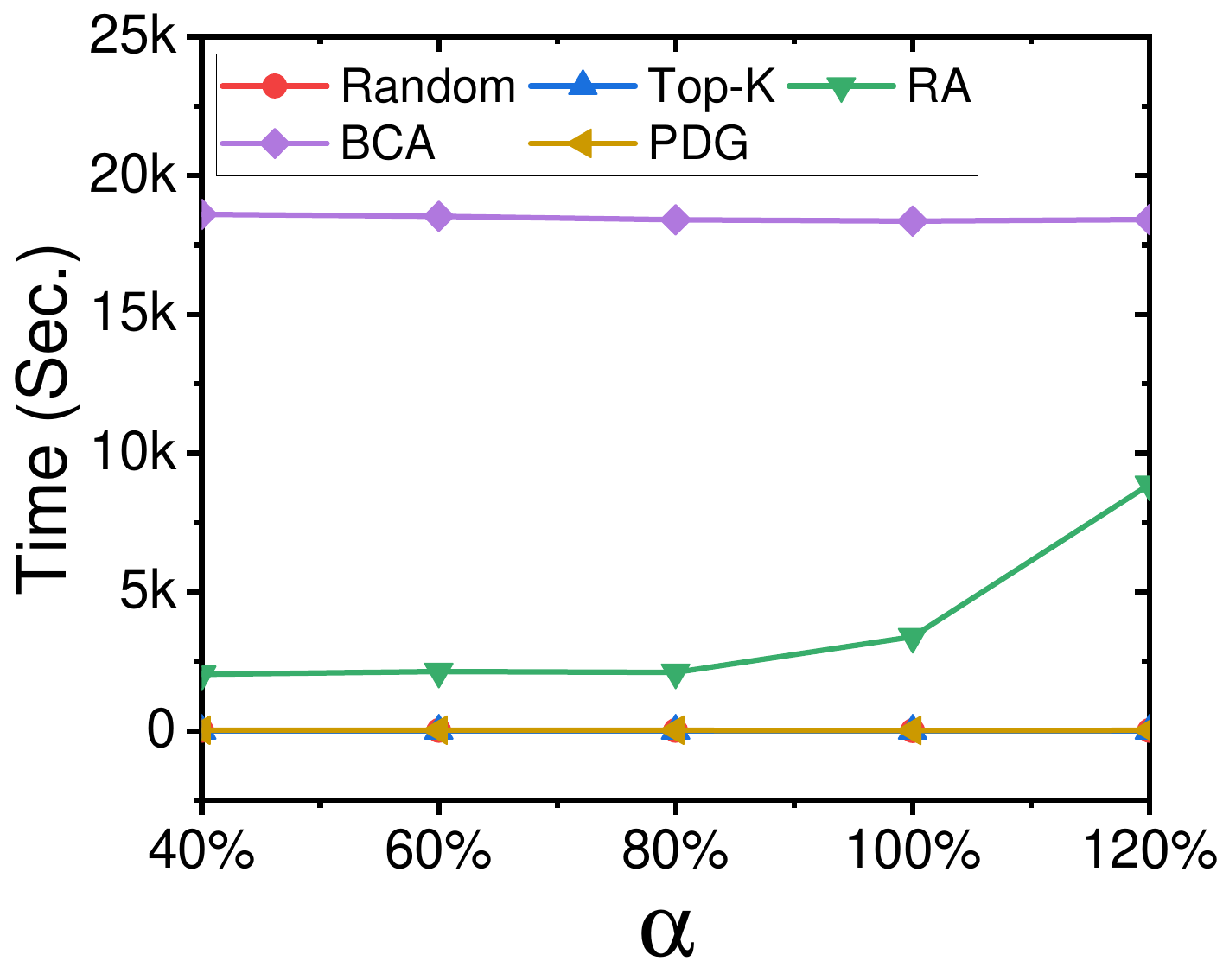} &
        \includegraphics[width=0.24\linewidth]{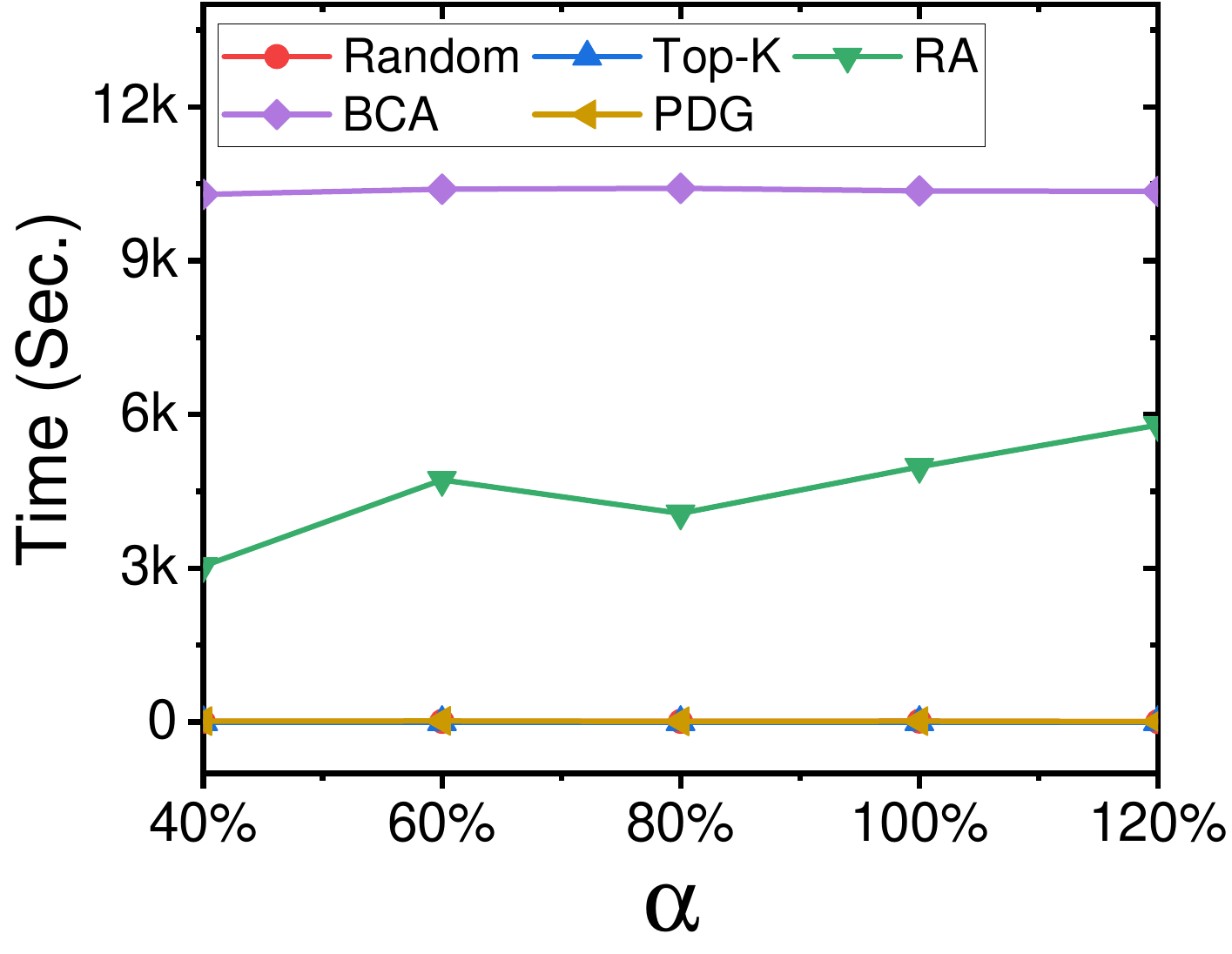} \\
        {\tiny (q) $\beta = 2\%,~ |\mathcal{P}| = 50$} &
        {\tiny (r) $\beta = 5\%,~ |\mathcal{P}| = 20$} &
        {\tiny (s) $\beta = 10\%,~ |\mathcal{P}| = 10$} &
        {\tiny (t) $\beta = 20\%,~ |\mathcal{P}| = 5$} \\
    \end{tabular}
    \caption{Varying $\beta$ and $|\mathcal{P}|$ value $\alpha$ vs. Influence $(a,b,c,d,e)$, $\alpha$ vs. Cost $(f,g,h,i,j)$. $\alpha$ vs. No. of Slots $(k,\ell,m,n,o)$, $\alpha$ vs. Time, and $(p,q,r,s,t)$ for NYC Dataset}
    \label{Fig:NYC_Combined}
\end{figure*}

\paragraph{\textbf{Varying $\alpha$ and $\beta$ Vs. Influence.}}
From Figure \ref{Fig:LA_Combined}(a,b,c,d,e) and Figure \ref{Fig:NYC_Combined}(a,b,c,d,e), it is clear that with the increase of demand supply ratio $(\alpha)$, within the budget of each product, the influence value increases. When $\beta = 1\%, 2\%, 3\%$ and $\alpha = 40\%, 60\%, 80\%$, the `PDG' approach satisfies all the demands of the products as shown in Figure \ref{Fig:LA_Combined}(a,b,c) and Figure \ref{Fig:NYC_Combined}(a,b,c). However, the `RA', `Top-$k$', and `Random' approaches satisfied fewer products. So the achieved influence of the `PDG' approach is higher than the `RA', `Top-$k$', and `Random'. When the $\alpha$ value is $100\%$ or $120\%$, the influence demand is higher for each product and requires a larger number of slots to satisfy, as shown in Figure \ref{Fig:LA_Combined}(d,e) and Figure \ref{Fig:NYC_Combined}(d,e). Hence, none of the algorithms satisfy all the products. However, in this case, the `PDG' approach performs better than all other approaches. In the case of the `BCA' approach, the influence gain is higher than that of baseline methods such as `Random' and `Top-K', as well as the proposed `RA' and `PDG' approaches. This happens because in the `BCA' approach, we find a common set of slots that can satisfy all the products.
 The influence difference of `PDG' and baselines is $30\%$ to $35\%$ higher in the NYC dataset and $50\%$ to $70\%$ higher in the LA dataset. The influence difference between the `PDG' and `RA' approaches is $20\%$ to $35\%$ higher in the NYC dataset and $40\%$ to $60\%$ higher in the LA dataset. In the experimental setup, when $\alpha \geq 100\%$ and $\beta \geq 10\%$, the influence demand is greater than the influence supply in those cases, and the influence demand of all the products is not satisfied. In these cases, `PDG' methods perform better than the baselines; however, the `RA' approach fails to provide a feasible solution. When $\alpha \leq 80\%$ and $\beta \leq 5\%$, the PDG algorithm consistently satisfies the influence demand of almost all products. This indicates that when global demand is well within available supply, `PDG' efficiently allocates disjoint slots while exploiting diminishing returns via submodularity.
 
\paragraph{\textbf{Varying $\alpha$ and $\beta$ Vs. No. of Slots.}}
With an increase in $\alpha$ and $\beta$, the number of slots allocated for each product increases due to an increase in the demand for influence, as shown in Figure \ref{Fig:LA_Combined}($k, \ell, m, n, o$) and Figure \ref{Fig:LA_Combined}($k, \ell, m, n, o$). The baseline uses more slots than the proposed `RA' and `PDG' approaches; however, the `Top-k' approach allocates fewer slots than the `Random' approach. As in the `Top-k', influential slots are stored in descending order and allocated to each product. The `BCA' approach allocates more slots than the `PDG', `RA', and baseline methods, as it generates a single set of slots to satisfy all the product's demand. However, `PDG' allocates a reasonable number of slots to meet the influence demand of all products. The `PDG' consistently selects fewer slots than BCA while achieving comparable influence for disjoint settings. This highlights PDG's ability to exploit high-impact slots and avoid redundant coverage. Next, the `Top-$k$' tends to over-select popular slots, causing early saturation and reduced marginal returns. The `RA' shows highly unstable slot usage due to randomness, further underscoring the importance of principled optimization. The `RA' approach uses almost $2\times$ fewer slots than `PDG' and almost $8\times$ fewer than baselines, and $7\times$ fewer than the `BCA'. So the `PDG' takes $4\times$ fewer slots than `BCA' and $5\times$ fewer than baselines.

\paragraph{\textbf{Efficiency Test.}}
From Figure \ref{Fig:LA_Combined}(p,q,r,s,t) and Figure \ref{Fig:NYC_Combined}(p,q,r,s,t), we have the five main observations. First, the running time of `RA' is longer than that of the `BCA', `PDG', and baseline methods because `RA' considers a larger number of slot-product permutations. However, with a fixed, smaller number of permutations (e.g., 500), the runtime is less than the `BCA' approach. Second, as $\alpha$ and $\beta$ increase, the computational time for all methods increases. Among the baseline approaches, `Top-k' takes longer than `Random'. Third, among the proposed approaches, the `PDG' approach takes less time. In the case of the `BCA' approach with an increase of $\alpha$ and $\beta$, the computation time change is less because the initially selected slots using continuous greedy are significant, such that they satisfy most of the product's influence demand. So, further allocation using greedy based on marginal gain requires less time to compute. Fourth, among all methods, `PDG' shows the best scalability, significantly outperforming `BCA', whose continuous greedy and rounding steps incur substantial overhead. The `PGD' is computationally cheaper, but delivers substantially inferior solution quality. Fifth, as $\alpha$ and $\beta$ increase from $40\%$ to $120\%$ and $1\%$ to $20\%$, respectively, the computational time for all algorithms increases. However, the runtime increase rate for the `RA' approach is high due to randomness and slot-product permutations. 

\paragraph{\textbf{Product Vs. Budget.}}
From Figure \ref{Fig:LA_Combined}(f,g,h,i,j) and Figure \ref{Fig:NYC_Combined}(f,g,h,i,j), we have five observations. First, with increasing demand supply ratio $(\alpha)$ and average individual demand $(\beta)$, the influence demand for each product increases, and a larger number of slots are allocated. Hence, the product allocation cost increases. Second, in terms of cost, `RA' outperforms `Random', `Top-k', `PDG', and `BCA' methods, and `RA' uses the minimum cost to satisfy the influence demand for all the products. However, `RA' does not outperform the `PDG' in terms of influence. Among the proposed approaches, the `BCA' incurs a higher cost than `RA' and `PDG'. Third, the baseline methods `Random' and `Top-$k$' incur a higher cost to satisfy all products because of the large number of slot selections. Third, the `PDG' approach incurs higher costs than `RA' and covers a larger number of product's influence demand. However, `PDG' takes a lower cost than `BCA' and the baseline methods to satisfy the highest number of products. Fourth, the `RA' and `Top-$k$' baselines exhibit inefficient budget utilization, often exhausting budgets early and failing to meet influence demands. This inefficiency arises from failing to account for marginal influence-to-cost ratios and the product's relevance. Fifth, in contrast, `PDG' shows higher cost efficiency, achieving higher influence at comparable or lower cost. This is because `PDG' explicitly optimizes the weighted marginal gain per unit cost and dynamically shifts focus toward unmet products using dual variables. `BCA', while achieving high influence, often incurs a higher total cost due to selecting a larger shared slot set, making it less suitable when strict budget control is required.

\paragraph{\textbf{Additional Discussion.}}
We have experimented with different values of $\alpha$, $\beta$, and the number of products to show that different cases occur due to the varying demand of the products. We have reported the experimental results of different settings: $\alpha = 40\%, 60\%, 80\%, 100\%, 120\%$ and $\beta = 1\%, 2\%, 5\%, 10\%, 20\%$, and the number of products $(|\ell|)$ are $100, 50,20,10,5$. The additional parameters used in our experiments are the distance parameter $\lambda$ and the approximation parameter $\epsilon$. The $\lambda$ determines the distance from which a billboard slot can influence a number of trajectories. The $\epsilon$ controls the accuracy of approximation. The smaller $\epsilon \in (0,1)$ provides a better approximation, but increases runtime. In our experiments, we set $\epsilon = 0.1$ and $\lambda = 100m$ as the default setting. Experiments varying $epsilon$ and $\lambda$ show predictable trends. Smaller $\epsilon$ values improve influence quality but increase runtime, consistent with theoretical expectations. Increasing $\lambda$ expands the spatial influence region, leading to higher influence values but also increased overlap and cost. Importantly, `PDG' remains stable across parameter variations, showing graceful degradation rather than abrupt performance loss. This robustness is critical in practice, where precise parameter tuning may not be feasible.

\section{Concluding Remarks}\label{Sec:CLD}
In this paper, we study the problem of multi-product influence maximization, motivated by practical advertising scenarios in which a commercial house plans to promote multiple products to its customers. We have studied the problem in two variants. In the first one, the same set of slots will be used for promoting multiple products. This problem asks to choose a given number of highly influential slots so that the influence demand of each product is satisfied. Whereas in the other case, for every product, there exists both a budget for the number of slots and influence demand. The task is to choose a disjoint subset of slots for each product such that their influence on demand is satisfied. Both variants were shown to be NP-hard via their connection to multi-submodular cover problems. For these problems, we have proposed several solution approaches. For the Common Multi-Product Slot Selection Problem, we have proposed a continuous greedy algorithm and a bi-criteria approximation algorithm. For the Disjoint Multi-Product Slot Selection Problem, we have proposed a sampling-based randomized algorithm with bounded estimation error and a primal-dual greedy algorithm that efficiently enforces budget and disjointness constraints and scales well to large datasets. We have derived the relationship between sample size and solution accuracy. We have analyzed all the algorithms for their time and space requirements and experimented with real-world billboard and trajectory datasets. The experimentation shows that the proposed methods significantly outperform baseline approaches. In particular, `PDG' consistently achieves higher influence satisfaction and better budget utilization under high-demand, limited-supply conditions. In the future, this work can be extended to dynamic and online settings to incorporate multi-advertiser competition and multi-channel influence models.





\bibliography{sn-bibliography}

\end{document}